\newtheorem{theorem}{Theorem}[section]
\newtheorem{lemma}{Lemma}
\newtheorem{cor}{Corollary}
\newtheorem{corollary}{Corollary}
\newtheorem{prop}{Proposition}
\newtheorem{remark}{Remark}
\begin{document}

\title{On Rotating Star Solutions to the Non-isentropic Euler-Poisson Equations}
\author{Yilun Wu}
\address{Department of Mathematics, University of Michigan, 
Ann Arbor, MI 48109}
\email{yilunwu@umich.edu}

\begin{abstract}
This paper investigates rotating star solutions to the Euler-Poisson equations with a non-isentropic equation of state. As a first step, the equation for gas density with a prescribed entropy and angular velocity distribution is studied. The resulting elliptic equation is solved either by the method of sub and supersolutions or by a variational method, depending on the value of the adiabatic index. The reverse problem of determining angular velocity from gas density is also considered.
\end{abstract}

\maketitle
\section{Introduction}

A Newtonian star is modeled by a body of fluids satisfying the Euler-Poisson equations in three spatial dimensions:

\begin{equation}\label{eq:Euler-Poisson system}
\begin{cases}
\rho_t + \nabla\cdot(\rho\mathbf{v}) = 0 \\
(\rho\mathbf{v})_t + \nabla\cdot(\rho\mathbf{v}\otimes\mathbf{v}) + \nabla p = -\rho\nabla\phi\\
\end{cases}
\end{equation}
Here $\rho$, $\mathbf{v}$, $p$ are the density, velocity, and pressure of the fluids. $\phi$ is the Newtonian potential of $\rho$ given by 
\begin{equation}\label{Newtonian potential}
\phi(\mathbf{x})=-G\int_{\mathbb{R}^3}\frac{\rho(\mathbf{y})}{|\mathbf{x}-\mathbf{y}|}d\mathbf{y}.
\end{equation}
Here $G$ is Newton's universal constant of gravitation.

There is a long history of investigation of equilibrium solutions to \eqref{eq:Euler-Poisson system}. 
Non-rotating stars can be interpreted as spherically symmetric stationary solutions to \eqref{eq:Euler-Poisson system}. 
The theory is described in detail in Chandrasekhar's classical work \cite{chandrasekhar1939introduction}. 
Rotating stars, on the other hand, are axisymmetric equilibrium solutions to \eqref{eq:Euler-Poisson system}, whose velocity field contains only the azimuthal component. In cylindrical coordinates $(r,\theta, z)$, the assumptions on rotating star solutions can be formulated as setting $\rho$, $p$, $\mathbf{v}$ to be functions of $r$ and $z$ only, and $\mathbf{v}$ to contain only the $\mathbf{e}_{\theta}$ component. Here $\mathbf{e}_r$, $\mathbf{e}_{\theta}$, $\mathbf{e}_z$ are the unit vectors in cylindrical coordinates. With these assumptions, the mass conservation equation in \eqref{eq:Euler-Poisson system} is identically satisfied. Writing $\mathbf{v}(r,z)=r\Omega(r,z)\mathbf{e}_{\theta}$, the momentum conservation equation in \eqref{eq:Euler-Poisson system} becomes
\begin{equation}\label{eq: rotating star eq}
\begin{cases}
p_r=\rho (-\phi)_r + \rho r\Omega^2 \\
p_z=\rho (-\phi)_z
\end{cases}
\end{equation}
or 
\begin{equation}\label{eq: rotating star, vector}
\frac{\nabla p}{\rho}= - \nabla \phi + r\Omega ^2 \mathbf{e}_r.
\end{equation}
Auchmuty and Beals \cite{auchmuty1971variational} initiated the search for solutions to \eqref{eq: rotating star eq}. They prescribed the angular velocity distribution and the total mass, and found solutions for the density function $\rho$. In order to close the underdetermined system \eqref{eq:Euler-Poisson system}, they assumed an isentropic equation of state:
\begin{equation}\label{isentropic equation of state}
p=g(\rho)
\end{equation}
for some given function $g$ (Note that condition \eqref{isentropic equation of state} is more often called barotropic equation of state in physics literature). As $\phi$ is given as \eqref{Newtonian potential}, and $p$ as \eqref{isentropic equation of state}, \eqref{eq: rotating star eq} consists of two equations for one unknown function $\rho$, hence appears to be overdetermined. To understand how this problem is well posed, let us take the curl of \eqref{eq: rotating star, vector}:

\begin{equation}
\nabla \times \bigg( \frac{\nabla p}{\rho} \bigg) = \nabla \times (r\Omega ^2 \mathbf{e}_r),
\end{equation}
which then simplifies to 
\begin{equation}\label{curl of Euler Poisson}
\frac{\nabla p \times \nabla \rho}{\rho ^2} = r\frac{\partial \Omega ^2}{\partial z}\mathbf{e}_{\theta}.
\end{equation}
\eqref{curl of Euler Poisson} implies the following important proposition:

\begin{prop}\label{prop: zero curl condition}
Both sides of \eqref{eq: rotating star, vector} are curl free if and only if $\nabla p \times \nabla \rho =0$ if and only if $\Omega^2$ depends only on $r$.
\end{prop}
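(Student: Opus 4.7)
The statement is essentially a direct reading of the identity \eqref{curl of Euler Poisson}, so the plan is to derive that identity cleanly and then observe that all three conditions are equivalent to its two sides vanishing. I would not need any new machinery beyond the vector calculus identities in cylindrical coordinates.

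First, I take the curl of both sides of \eqref{eq: rotating star, vector}. On the right, $-\nabla\phi$ is a gradient, hence curl free, so only $r\Omega^2\mathbf{e}_r$ contributes. Using the cylindrical-coordinate formula for the curl of an axisymmetric field with only a radial component, one gets
\[
\nabla\times(r\Omega^2\mathbf{e}_r) = -\frac{\partial(r\Omega^2)}{\partial z}\mathbf{e}_\theta\cdot(-1) = r\frac{\partial\Omega^2}{\partial z}\mathbf{e}_\theta,
\]
which is the right-hand side of \eqref{curl of Euler Poisson}. On the left, apply the product rule $\nabla\times(f\mathbf{F}) = \nabla f\times\mathbf{F} + f\,\nabla\times\mathbf{F}$ with $f=1/\rho$ and $\mathbf{F}=\nabla p$; since $\nabla\times\nabla p=0$, this yields $\nabla\times(\nabla p/\rho) = -\rho^{-2}\nabla\rho\times\nabla p = \rho^{-2}\nabla p\times\nabla\rho$, recovering the left-hand side of \eqref{curl of Euler Poisson}.

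Now the three-way equivalence is read off directly. ``Both sides of \eqref{eq: rotating star, vector} are curl free'' is equivalent to either side of \eqref{curl of Euler Poisson} being zero, because the two sides are equal. The left side of \eqref{curl of Euler Poisson} vanishes iff $\nabla p\times\nabla\rho=0$, since $\rho>0$ in the support of the fluid. The right side vanishes iff $r\,\partial_z\Omega^2=0$; away from the axis $r=0$ this forces $\partial_z\Omega^2\equiv 0$, i.e.\ $\Omega^2$ depends only on $r$, and conversely if $\Omega^2=\Omega^2(r)$ then $r\,\partial_z\Omega^2\mathbf{e}_\theta=0$ everywhere. Chaining these equivalences gives the proposition.

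The only mildly subtle step is the one on the axis: one should note that continuity (or the convention that $\Omega$ is defined as a function on the meridional half-plane) lets us pass from ``$r\,\partial_z\Omega^2=0$ for $r>0$'' to ``$\Omega^2$ is independent of $z$ everywhere.'' Beyond that, the proof is a one-line application of two standard curl identities.
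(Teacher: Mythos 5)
Your proof is correct and follows exactly the same route the paper (implicitly) takes: derive \eqref{curl of Euler Poisson} by taking the curl of \eqref{eq: rotating star, vector} via the standard identities $\nabla\times\nabla\phi=0$, $\nabla\times(f\mathbf{F})=\nabla f\times\mathbf{F}+f\,\nabla\times\mathbf{F}$, and the cylindrical curl formula, and then read the three-way equivalence off the two sides of the resulting identity. Your extra remark about passing from $r\,\partial_z\Omega^2=0$ on $\{r>0\}$ to $\Omega^2$ independent of $z$ everywhere is a reasonable small clarification that the paper leaves tacit.
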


A useful corollary of this proposition is as follows.

\begin{cor}\label{cor: 1}
If $p=g(\rho )$ is a given function of $\rho$, $\Omega ^2$ can only depend on $r$. 
\end{cor}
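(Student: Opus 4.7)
The plan is to reduce the corollary immediately to Proposition \ref{prop: zero curl condition} by checking the middle condition $\nabla p \times \nabla \rho = 0$. Assuming $p = g(\rho)$ with $g$ differentiable, the chain rule gives $\nabla p = g'(\rho)\, \nabla \rho$, so $\nabla p$ is everywhere parallel to $\nabla \rho$ and therefore $\nabla p \times \nabla \rho = 0$ pointwise in the region of interest. Proposition \ref{prop: zero curl condition} then yields the conclusion that $\Omega^2$ depends only on $r$.

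The only subtle point I would address is regularity: Proposition \ref{prop: zero curl condition} is derived from equation \eqref{curl of Euler Poisson}, which presumes that $p$, $\rho$, and $\Omega$ are smooth enough for the curl operation to make sense. So I would briefly note that under the standing assumption that $g$ is (at least) $C^1$ and that $\rho$ is differentiable where the equations hold, the identity $\nabla p = g'(\rho)\nabla \rho$ is legitimate and the cross product vanishes classically; on the vacuum set $\{\rho = 0\}$ the statement is vacuous. No further computation is needed.

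I do not anticipate a genuine obstacle here, since the corollary is essentially a one-line consequence of the proposition; the main thing to be careful about is not to over-complicate the write-up. In particular, I would resist the temptation to re-derive \eqref{curl of Euler Poisson}, and simply invoke the equivalence already stated.
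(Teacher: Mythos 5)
Your proof is correct and is precisely the one-line deduction the paper intends: $p=g(\rho)$ forces $\nabla p = g'(\rho)\nabla\rho$, so $\nabla p \times \nabla\rho = 0$, and Proposition \ref{prop: zero curl condition} gives the conclusion. The paper states this as an immediate corollary without writing out a proof, and your remark on regularity is a sensible (if optional) caveat.
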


When $\Omega^2$ is prescribed as a function of $r$, and $p =g(\rho)$ for some given $g$, \eqref{eq: rotating star, vector} is curl free. In fact it can be written as a gradient.

\begin{equation}\label{eq: gradient Bernoulli}
\nabla \big( A(\rho) \big)= - \nabla \phi + \nabla J,
\end{equation}
where 
\begin{equation}
A(s) = \int_0^s \frac{g'(t)}{t}dt 
\end{equation}
and
\begin{equation}
J(r) = \int_0^r s\Omega^2(s) ds.
\end{equation}
From \eqref{eq: gradient Bernoulli}, we get
\begin{equation}\label{Bernoulli relation}
A(\rho)= -\phi + J(r) + C 
\end{equation}
for some constant $C$. With $\Omega$, $g$ prescribed, and $\phi$ given as \eqref{Newtonian potential}, \eqref{Bernoulli relation} is a single equation for $\rho$. It is this equation that Auchmuty and Beals studied with a variational method.

Now let us try to generalize and allow $\Omega^2$ to depend on $r$ and $z$. Corollary \ref{cor: 1} will force us to use a non-isentropic equation of state:
\begin{equation}\label{eq: non-isentropic eq of state}
p=g(\rho, s)
\end{equation}
where $s$ is entropy. The full Euler-Poisson system has another equation for energy conservation, which we have been ignoring until now:
\begin{equation}\label{eq: energy conservation}
\bigg(\frac{1}{2}\rho |\mathbf{v}|^2 + \rho e\bigg)_t +\nabla\cdot \bigg(\bigg(\frac{1}{2}\rho |\mathbf{v}|^2 + \rho e\bigg)\mathbf{v}\bigg)=-\rho\nabla\phi\cdot\mathbf{v}-\nabla\cdot(p\mathbf{v}).
\end{equation}
Here $e$ is specific internal energy. By the second law of thermodynamics,
\begin{equation}\label{eq: second law of thermodynamics}
de=T(\rho ,s)ds + \frac{p(\rho,s)}{\rho^2}d\rho,
\end{equation}
therefore
\begin{equation}
e(\rho,s)=\int_0^{\rho} \frac{g(\xi, s)}{\xi ^2}d\xi.
\end{equation}
Here we have assumed the condition
$$e(0,s)=0.$$
After a simple calculation, we get
\begin{align}\label{eq: entropy transport}
s_t+\mathbf{v}\cdot\nabla s&=0.
\end{align}
We join \eqref{eq: entropy transport} with \eqref{eq:Euler-Poisson system} to get the non-isentropic Euler-Poisson system
\begin{equation}\label{eq: non-isentropic Euler-Poisson}
\begin{cases}
\rho_t + \nabla\cdot(\rho\mathbf{v}) = 0 \\
(\rho\mathbf{v})_t + \nabla\cdot(\rho\mathbf{v}\otimes\mathbf{v}) + \nabla p = -\rho\nabla\phi\\
s_t+\mathbf{v}\cdot\nabla s=0
\end{cases}
\end{equation}
with Newtonian potential given as \eqref{Newtonian potential} and equation of state \eqref{eq: non-isentropic eq of state}.

The study of rotating star solutions to \eqref{eq: non-isentropic Euler-Poisson} presents a new problem and demands further investigation. As before, we assume $\rho$, $p$, $\mathbf{v}$, $s$ to depend only on $r$ and $z$, and $\mathbf{v}$ to contain only the $\mathbf{e}_{\theta}$ component. Under these assumptions, the mass conservation equation and entropy transport equation in \eqref{eq: non-isentropic Euler-Poisson} are identically satisfied, therefore we again arrive at equations \eqref{eq: rotating star eq} or \eqref{eq: rotating star, vector}. The difference with the classical Auchmuty and Beals case is that, by proposition \ref{prop: zero curl condition}, neither sides of \eqref{eq: rotating star, vector} has vanishing curl in general, and hence \eqref{eq: rotating star, vector} is no longer a gradient. It is not obvious how one can recast \eqref{eq: rotating star, vector} as the Euler-Lagrange equation of some energy functional as in the classical Auchmuty and Beals case.

With the angular velocity distribution prescribed and $\phi$ given as \eqref{Newtonian potential}, system \eqref{eq: rotating star eq} is a set of two equations for two unknowns $\rho$ and $p$. Different from the isentropic case, the introduction of the equation of state \eqref{eq: non-isentropic eq of state} does not decrease the number of unknowns, but merely transforms the unknowns to $\rho$ and $s$. For definiteness, from now on we will use the equation of state:
\begin{equation}\label{eq: eq of state of ideal gas}
p = g(\rho ,s) = e^s \rho ^{\gamma}.
\end{equation}
Here $\gamma$ is a constant, and $e=\exp (1)$ is base of natural logarithm. Notice that we have employed the same letter for specific internal energy, but the confusion should be minimal by watching the contexts they appear in. This equation of state is very general, it for instance incorporates arbitrary thermal processes of an ideal gas (see Courant and Friedrichs \cite{courant1976supersonic}). Inserting \eqref{eq: eq of state of ideal gas} into \eqref{eq: rotating star eq} or \eqref{eq: rotating star, vector}, we get
\begin{equation}\label{eq: rotating star eq - entropy}
\begin{cases}
(e^s \rho ^{\gamma})_r=\rho (-\phi)_r + \rho r\Omega^2 \\
(e^s \rho ^{\gamma})_z=\rho (-\phi)_z
\end{cases}
\end{equation}
or equivalently,
\begin{equation}\label{eq: rotating star eq - vector - entropy}
\frac{\nabla (e^s \rho ^{\gamma})}{\rho}= - \nabla \phi + r\Omega ^2 \mathbf{e}_r.
\end{equation}
This is a system of two equations for two unknowns $(\rho, s)$. The search for solutions to \eqref{eq: rotating star eq - entropy} with prescribed angular velocity distribution is still an open problem. 

As a first step to study these questions, we will treat two problems in this paper. One is to take the divergence of \eqref{eq: rotating star eq - vector - entropy} and study solutions to
\begin{equation}\label{eq: divergence of rotating star eq}
\nabla \cdot \bigg(\frac{1}{\rho}\nabla (e^s \rho^{\gamma})\bigg) = - 4\pi G \rho + \nabla \cdot (r\Omega ^2 \mathbf{e}_r) 
\end{equation}
with prescribed entropy. The other is to take the converse path, and consider what conditions on $\rho$ can one impose to solve \eqref{eq: rotating star eq - entropy} for some entropy $s$, and nonnegative angular velocity field $\Omega^2$.

\section{Statement of Results}

Let us consider the following axisymmetric equilibrium non-isentropic Euler-Poisson equation in $\mathbb{R}^3$:
\begin{equation}\label{chap4: eq: non-isentropic vector rotating star}
\frac{\nabla p}{\rho} = B\rho + r\Omega^2 \mathbf{e}_r
\end{equation}
with equation of state
\begin{equation}
p = e^s \rho ^{\gamma}.
\end{equation}
Here $\gamma$ is a constant called the adiabatic index. The divergence of \eqref{chap4: eq: non-isentropic vector rotating star} is 
\begin{equation}\label{chap4: intro: eq: divergence of rotating star eq}
\nabla \cdot \bigg(\frac{\nabla (e^s \rho^{\gamma})}{\rho}\bigg) = - 4\pi \rho + \nabla \cdot (r\Omega ^2 \mathbf{e}_r).
\end{equation}
After the change of variable
\begin{equation}\label{chap4: intro: eq: change of variables}
w=\frac{\gamma}{\gamma-1}e^{\frac{\gamma-1}{\gamma }s}\rho^{\gamma -1},
\end{equation}
\eqref{chap4: intro: eq: divergence of rotating star eq} becomes
\begin{equation}\label{chap4: intro: eq: w equation}
\nabla \cdot (e^{\alpha s}\nabla w) + Ke^{-\alpha s}|w|^q-f=0,
\end{equation}
where
\begin{equation}
q  =\frac{1}{\gamma -1},\quad \alpha =\frac{1}{\gamma}, \quad K=4\pi \bigg(\frac{\gamma-1}{\gamma}\bigg)^{\frac{1}{\gamma-1}}, 
\end{equation}
and 
\begin{equation}\label{chap4: eq: defining f}
f= 2\Omega^2 + r\frac{\partial \Omega^2}{\partial r} = 2\Omega\frac{\partial}{\partial r}(r\Omega).
\end{equation}

With $s$ and $\Omega$ prescribed, Luo and Smoller \cite{luo2004rotating} considered \eqref{chap4: intro: eq: w equation} and obtained some existence results when the entropy is assumed to be either constant or radially dependent, and a non-existence result when the entropy is non-constant. In the present paper we will find some existence results for \eqref{chap4: intro: eq: w equation} with axisymmetric entropy. Standard elliptic theory (Gilbarg and Trudinger \cite{gilbarg2001elliptic}) can solve the Dirichlet problem to \eqref{chap4: intro: eq: w equation} on bounded domains given suitable range of $q$, but in order to conclude positivity of $w$ inside the domain, it is desirable that $f$ be negative. Unfortunately for most physically interesting $\Omega$, $f$ is positive. For example, constant $\Omega$ will produce a positive $f$. Therefore the gist of the argument is to show existence of positive solutions.

We have the following
\begin{theorem}\label{chap4: solution 0<q<1}
Let $f$ and $s$ be given axisymmetric smooth functions. If $0<q<1$ $(\gamma >2)$, and $\mathbf{x} \cdot \nabla s \leq 0$, then there is a finite ball centered at the origin on which there exists an axisymmetric positive smooth solution to \eqref{chap4: intro: eq: w equation} with zero boundary value. 
\end{theorem}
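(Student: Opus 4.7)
Write $\mathcal{L}w := \nabla\cdot(e^{\alpha s}\nabla w)$, which is a uniformly elliptic operator with smooth coefficients on any bounded domain. The PDE reads $-\mathcal{L}w = Ke^{-\alpha s} w^q - f$, a sublinear semilinear elliptic equation since $0<q<1$. My plan is to solve it on a ball $B_R$ (with $R$ at our disposal) by the classical method of sub- and super-solutions followed by monotone iteration. I prefer this to a direct variational minimization because $f$ has indefinite sign and a minimizer of the natural energy need not be nonnegative, so positivity is not automatic.

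\textbf{Super-solution.} The sublinearity makes this step straightforward. Let $u_0$ solve the linear problem $-\mathcal{L} u_0 = M$ on $B_R$ with zero Dirichlet data; by the maximum principle $u_0 > 0$ in $B_R$ and $\|u_0\|_{\infty}\leq C(R,s)\,M$. The super-solution inequality $M \geq K u_0^q e^{-\alpha s} - f$ reduces, after taking sup/inf bounds, to $M + \inf_{B_R} f \geq K C^{q} M^{q} e^{-\alpha \inf s}$, which holds for $M$ large because $q < 1$ makes $M$ dominate $M^q$. Thus $\bar w := u_0$ is a positive super-solution vanishing on $\partial B_R$.

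\textbf{Sub-solution, and the role of $\mathbf{x}\cdot\nabla s\leq 0$.} This is the crux of the proof. The boundary layer obstructs the naive choice $\underline w = \varepsilon\phi_1$ (first Dirichlet eigenfunction), because $\underline w^q$ decays faster than $\underline w$ near $\partial B_R$ while $f$ need not vanish there. I would instead construct $\underline w$ as an axisymmetric, radially non-increasing function equal to a large positive constant $A$ on an inner core $B_{R'}\subset B_R$ and smoothly decaying to zero on $\partial B_R$. On $B_{R'}$, $\nabla\underline w = 0$ reduces the sub-solution inequality to $0 \leq KA^{q} e^{-\alpha s} - f$, which is attainable for $A$ large enough. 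In the annular transition region, expanding $-\mathcal{L}\underline w$ for a radial profile $\underline w(|x|)$ gives
\begin{equation*}
-\mathcal{L}\underline w \;=\; -e^{\alpha s}\Bigl[\underline w''(|x|) + \frac{\underline w'(|x|)}{|x|}\bigl(2 + \alpha(\mathbf{x}\cdot\nabla s)\bigr)\Bigr].
\end{equation*}
Since $\underline w'(|x|) \leq 0$ and $\mathbf{x}\cdot\nabla s \leq 0$ by hypothesis, the cross term $\underline w'\,(\mathbf{x}\cdot\nabla s) \geq 0$, so the drift arising from the variation of $s$ contributes with a nonpositive sign to $-\mathcal{L}\underline w$, shrinking the left-hand side of the sub-solution inequality exactly where we need help. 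After a final mollification to restore $C^\infty$ regularity, the verification of the sub-solution inequality closes.

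\textbf{Iteration, regularity, and axisymmetry.} With $0 < \underline w \leq \bar w$ on $B_R$ and matching boundary inequalities, standard monotone iteration (Sattinger) produces a solution $w$ with $\underline w \leq w \leq \bar w$, hence positive in $B_R$. Schauder estimates plus a bootstrap argument upgrade $w$ to $C^\infty$ since $s$ and $f$ are smooth, and axisymmetry is inherited at every stage because the data and both sub/super-solutions are axisymmetric. The main anticipated obstacle is the sub-solution construction in the transition annulus described above; this is precisely where the monotonicity hypothesis $\mathbf{x}\cdot\nabla s\leq 0$ is indispensable.
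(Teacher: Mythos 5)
Your overall strategy --- sub- and super-solutions on a ball, with a radially decreasing sub-solution so that the drift term $\nabla s\cdot\nabla\underline{w}\ge 0$ when $\mathbf{x}\cdot\nabla s\le 0$ --- is exactly the paper's, and you have correctly identified where the entropy hypothesis enters. However, the key step is asserted rather than carried out, and there is a second, smaller omission.

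The genuine gap is in the sub-solution. Your piecewise profile (constant $A$ on a core, decaying to $0$ on an annulus) must have an inflection in the annulus: somewhere $\underline{w}''<0$ and $\underline{w}'<0$, so $\Delta\underline{w}$ is negative of size roughly $A/(R-R')^2$ there, while the only compensating term is $K e^{-\alpha s}\underline{w}^q\lesssim K A^q$. Since $q<1$, $A^q$ grows slower than $A$, so whether this can be beaten depends on a nontrivial interplay among $A$, $R'$, $R$ which you do not work out, and mollifying a merely continuous corner can inject additional negative second-derivative mass exactly where the inequality is tightest. The favorable sign of the drift term does not close this by itself. The paper's treatment avoids the issue entirely: after discarding the nonnegative term $\nabla s\cdot\nabla\underline{u}$ and bounding $Ke^{-2s}\ge A_1$, $e^{-s}f\le A_2$ by constants, it reduces the requirement to $\Delta\underline{u}+A_1\underline{u}^q-A_2\ge 0$ and cites a radial ODE existence lemma (Smoller, Lemma~3.1) giving, on some ball, a positive spherically symmetric exact solution of $\Delta u+A_1 u^q-A_2=0$ with zero boundary value and $\mathbf{x}\cdot\nabla u<0$, available because the primitive $G(t)=\tfrac{A_1}{q+1}t^{q+1}-A_2t$ is positive for large $t$. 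You should either reproduce such a shooting argument or actually verify the annulus inequality and the mollification estimate; as written this step does not stand.

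The secondary issue concerns ordering. Your super-solution $u_0$ (solving $-\mathcal{L}u_0=M$ with zero Dirichlet data) is in fact simpler than the paper's Leray--Schauder construction and does satisfy the super-solution inequality for $M$ large because $q<1$, but the monotone iteration requires $\underline{w}\le\bar{w}$ on $B_R$ as a hypothesis, and you never arrange it: both of your barriers vanish on $\partial B_R$, so there is no automatic pointwise ordering. The paper's fix is to take $\bar u=u+C$ with $C=\|\underline{u}\|_{L^\infty}$, so that $\bar u\ge C\ge\underline{u}$ trivially and the boundary inequality $\underline{u}=0\le C=\bar u|_{\partial B_R}$ is what the iteration with zero Dirichlet target requires. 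You can adopt the same device ($\bar w=u_0+C$) without otherwise changing your super-solution.
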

The condition on entropy has the physical interpretation that it is decreasing in the radial direction, so that the star is thermally more active the further one goes down surface.

The $q>1$ case is more difficult. Let us take a look at a simple ODE model $u''+\lambda u^q=0$. Suppose $q>1$. In order for $u$ to stay positive, symmetric about the origin, and be zero on the boundary of a given symmetric domain, $u(0)$ will be unbounded as $\lambda$ gets close to $0$. Therefore there is no a priori bound for Leray-Schauder type arguments. However, if one is allowed to rescale the velocity field, the equation can still be solved. The results are as follows:

\begin{theorem}\label{chap4: solution 1<q<3 bounded}
Let $f$ and $s$ be given axisymmetric smooth bounded functions. Suppose there exists a $c>0$ such that $f \geq c$, and suppose $1<q<3$ $\bigg( \frac{4}{3}<\gamma <2 \bigg)$. Then for any $R>0$, and sufficiently large $P>0$, there exists a non-negative axisymmetric function $w$ in $H_0^1(S_R)$, and a $\lambda >0$, such that 
$w$ is smooth on its own positive set and satisfies
\begin{equation}\label{chap4: eq: rescaled eq}
\nabla \cdot (e^{\alpha s}\nabla w) + Ke^{-\alpha s}w^q-\lambda f=0,
\end{equation}
and
\begin{equation}
\int_{S_R} f w ~d\mathbf{x} = P.
\end{equation}
\end{theorem}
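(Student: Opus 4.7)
The plan is to treat $\lambda$ as a Lagrange multiplier and construct, for each $\lambda>0$, a nonnegative critical point $w_\lambda$ of an auxiliary functional by a mountain--pass argument; afterwards choose $\lambda$ so that the normalization $\int_{S_R} fw\,d\mathbf{x}=P$ holds. For each $\lambda>0$, introduce the functional
\begin{equation*}
I_\lambda(w) = \frac{1}{2}\int_{S_R} e^{\alpha s}|\nabla w|^2\,d\mathbf{x} - \frac{K}{q+1}\int_{S_R} e^{-\alpha s} w_+^{q+1}\,d\mathbf{x} + \lambda \int_{S_R} f w_+\,d\mathbf{x}
\end{equation*}
on the axisymmetric subspace $H$ of $H_0^1(S_R)$. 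Testing the resulting Euler--Lagrange equation against $-w_-$ forces any critical point to be nonnegative, so the nonnegative critical points of $I_\lambda$ are exactly the nonnegative solutions of \eqref{chap4: eq: rescaled eq}.

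I would first verify mountain--pass geometry relative to the nonnegative cone $\mathcal{C}=\{w\in H:w\ge 0\}$: at the origin $I_\lambda=0$; on a small sphere in $\mathcal{C}$ the Dirichlet term dominates both the nonnegative linear term $\lambda\int f w_+$ and the higher--order $w_+^{q+1}$ term (since $q+1>2$), giving $I_\lambda\ge c_\rho>0$ for $\|w\|_{H_0^1}=\rho$ small; and $I_\lambda(tw_0)\to -\infty$ as $t\to\infty$ along any ray $t\mapsto tw_0$ with $w_0\in\mathcal{C}\setminus\{0\}$, driven by the super--quadratic $-w_+^{q+1}$ term. The subcriticality $q+1<4<2^{\ast}=6$ in $\mathbb{R}^3$ renders the embedding $H_0^1(S_R)\hookrightarrow L^{q+1}$ compact, and the standard computation combining $I_\lambda(w_n)\to c$ with $\tfrac{1}{q+1}\langle I_\lambda'(w_n),w_n\rangle\to 0$ yields uniform $H_0^1$--bounds, i.e.\ the Palais--Smale condition. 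The mountain--pass theorem then produces a nontrivial critical point $w_\lambda\in H$, and classical elliptic bootstrap gives smoothness on $\{w_\lambda>0\}$.

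To arrange $P(\lambda):=\int_{S_R} fw_\lambda\,d\mathbf{x}=P$, I would study the dependence of $P(\lambda)$ on $\lambda$. Testing \eqref{chap4: eq: rescaled eq} against $w_\lambda$ gives the Nehari--type identity
\begin{equation*}
\int_{S_R} e^{\alpha s}|\nabla w_\lambda|^2 + \lambda P(\lambda) = K\int_{S_R} e^{-\alpha s}w_\lambda^{q+1},
\end{equation*}
which combines with the Euler--Lagrange equation to yield $c(\lambda):=I_\lambda(w_\lambda)=\frac{q-1}{2(q+1)}\int e^{\alpha s}|\nabla w_\lambda|^2+\frac{q\lambda}{q+1}P(\lambda)$. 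Plugging the ray $t\mapsto tw_0$ for a fixed $w_0\in\mathcal{C}\setminus\{0\}$ into $I_\lambda$ and optimizing shows that the maximum is attained at $t^{\ast}\sim\lambda^{1/q}$, bounding $c(\lambda)\lesssim\lambda^{(q+1)/q}$. Separately, testing \eqref{chap4: eq: rescaled eq} against a fixed $\phi\in C_c^{\infty}(S_R)$ with $\int f\phi>0$ shows $w_\lambda$ cannot remain bounded in $H_0^1(S_R)$ as $\lambda\to\infty$ (otherwise $L^{q+1}$--compactness yields a bounded right--hand side, contradicting $\lambda\int f\phi\to\infty$); this gives $c(\lambda)\to\infty$. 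Combining these quantitative facts with a compactness--and--connectedness argument on $w_{\lambda_n}$ as $\lambda_n\to\lambda^{\ast}$ (using Palais--Smale--uniform bounds) realizes the set $\{P(\lambda):\lambda>0\}$ as a connected subset of $(0,\infty)$ containing a tail $[P_{\ast},\infty)$, so any $P\ge P_{\ast}$ is attained.

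The principal obstacle is this last selection step: mountain--pass critical points need not be unique, so $\lambda\mapsto P(\lambda)$ is a priori multi--valued and no naive intermediate--value argument applies. One must argue by a Palais--Smale compactness analysis---using the subcriticality $q+1<4$ to rule out concentration of the minimizing sequence---that the graph of $P$ is connected in $(0,\infty)^2$. A subsidiary difficulty is extracting the lower asymptotics of $P(\lambda)$ itself: the Nehari identity alone controls $P(\lambda)$ only from above (by $\lambda^{1/q}$), so the qualitative growth of $\|\nabla w_\lambda\|_{L^2}$ from the equation--testing argument is genuinely needed to force the range of $P(\lambda)$ to be unbounded.
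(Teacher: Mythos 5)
The paper does not follow your route. It fixes $P$ first and minimizes the unconstrained energy
\begin{equation*}
E(w)=\int_{S_R}\Big(\tfrac{e^{\alpha s}}{2}|\nabla w|^2-\tfrac{K}{q+1}e^{-\alpha s}w^{q+1}\Big)\,d\mathbf{x}
\end{equation*}
over the nonnegative axisymmetric cone in $H_0^1(S_R)$ subject to the constraint $N(w)=\int_{S_R}fw\,d\mathbf{x}=P$. The multiplier $\lambda$ is then read off from the variational inequality, and $\lambda>0$ is proved directly by plugging the test function $u=2\tilde w$ into that inequality together with the scaling fact $I_P<0$. There is no tuning of $\lambda$ at all, so the inverse--function problem you identify never arises. (The price the paper pays is that the cone constraint produces only a variational inequality; it then needs the Lewy--Stampacchia Green's-function argument to show the minimizer is lower semicontinuous, that its positive set is open, and that the Euler--Lagrange equation holds there.)

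The selection step in your proposal is a genuine gap, and it is more serious than you concede. Your Nehari/energy relations give
\begin{equation*}
c(\lambda)=\frac{q-1}{2(q+1)}\int e^{\alpha s}|\nabla w_\lambda|^2+\frac{q}{q+1}\,\lambda P(\lambda),
\qquad c(\lambda)\lesssim \lambda^{(q+1)/q},
\end{equation*}
which yields only the upper bound $P(\lambda)\lesssim \lambda^{1/q}$. Your observation that $\|\nabla w_\lambda\|_{L^2}\to\infty$ (from testing the equation against a fixed $\phi$ with $\int f\phi>0$) shows $c(\lambda)\to\infty$, but it does \emph{not} force $\lambda P(\lambda)\to\infty$: the blow-up of $c(\lambda)$ could be absorbed entirely by the Dirichlet term. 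So you have no lower bound on $P(\lambda)$ and cannot conclude its range is unbounded. On top of that, the connectedness-of-the-graph step (needed because mountain-pass critical points are not unique and $\lambda\mapsto P(\lambda)$ is a priori multi-valued) is asserted rather than proved, and there is no standard mechanism that makes the mountain-pass map single-valued or continuous in $\lambda$. A further small technical issue: $w\mapsto\int f w_+$ is not Gateaux differentiable on $\{w=0\}$, so the Euler--Lagrange equation for your $I_\lambda$ requires a smoothing or a truncation argument that you have not supplied; the simplest fix (replace $w_+$ by $w$ in the linear term) destroys your $w_-$-test for nonnegativity unless $\lambda$ is small. All of this is sidestepped by the paper's constrained-minimization formulation, which treats $P$ as the primary datum.
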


Here $S_R$ is a ball of radius $R$ centered at the origin. The positive set of $w$ will turn out to be open, so there is no ambiguity in defining \eqref{chap4: eq: rescaled eq}. Since $\lambda>0$, \eqref{chap4: intro: eq: w equation} is solved with a rescaled the velocity field (compare \eqref{chap4: eq: defining f}). Also
\begin{align*}
P&=\int_{S_R}f w ~d\mathbf{x}\\
&\leq \int_{S_R} Cw ~d\mathbf{x}\\
&\leq \tilde{C}\int_{S_R} \rho ^{\gamma -1} ~d\mathbf{x}\\
&\leq \tilde{\tilde{C}}\bigg(\int_{S_R} \rho ~d\mathbf{x}\bigg)^{\gamma -1}. \\
\end{align*}
Therefore the largeness of $P$ implies the largeness of the total mass in this case. 

The method for deriving this result is variational. It is possible to extend the variational method to allow functions defined on the entire $\mathbb{R}^3$ once we find a way to address the lost of compactness issue.

\begin{theorem}\label{chap4: solution 1<q<3 unbounded}
Let $f$ and $s$ be given axisymmetric smooth functions. Suppose $s$ is bounded, $f \geq c>0$, and $1<q<3$ $\bigg( \frac{4}{3}<\gamma <2\bigg)$. Then for sufficiently large $P>0$, there exists a non-negative axisymmetric function $w$ in $H^1(\mathbb{R}^3)$, and a $\lambda >0$, such that $w$ is smooth on its own positive set, and satisfies \eqref{chap4: eq: rescaled eq} and
\begin{equation}
\int_{\mathbb{R}^3} f w ~d\mathbf{x} = P.
\end{equation}
\end{theorem}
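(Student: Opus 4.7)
The plan is to adapt the constrained variational method of Theorem~\ref{chap4: solution 1<q<3 bounded} to the full space, with a concentration--compactness argument replacing the automatic compactness that the bounded-domain embedding provided. Concretely, I would minimize
\[
I(w) = \frac{1}{2}\int_{\mathbb{R}^3} e^{\alpha s}|\nabla w|^2 \, d\mathbf{x} - \frac{K}{q+1}\int_{\mathbb{R}^3} e^{-\alpha s}(w_+)^{q+1} \, d\mathbf{x}
\]
over the set $\mathcal{M}_P$ of non-negative axisymmetric $w \in H^1(\mathbb{R}^3)$ subject to $\int_{\mathbb{R}^3} fw\, d\mathbf{x} = P$. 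A minimizer $w^{*}$ with Lagrange multiplier $\mu$ satisfies $-\nabla\cdot(e^{\alpha s}\nabla w^{*}) - K e^{-\alpha s}(w^{*})^{q} = \mu f$, so $\lambda := -\mu$ recovers \eqref{chap4: eq: rescaled eq}; the positivity $\lambda > 0$ is then extracted from strict monotone decrease of $I_P := \inf_{\mathcal{M}_P} I$ in $P$, established by the rescaling $w \mapsto (P'/P)w$ which leverages $q > 1$ once $P$ is large.

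I would first verify that, in the subcritical range $q + 1 < 4 < 2^{*}(\mathbb{R}^3) = 6$, the functional $I$ is bounded below on $\mathcal{M}_P$ and minimizing sequences $\{w_n\}$ are bounded in $H^1(\mathbb{R}^3)$. This uses the boundedness of $s$ to neutralize the exponential weights, the Sobolev--Gagliardo--Nirenberg inequality on the $L^{q+1}$ term, and the fact that $f \geq c > 0$ with the constraint yields the uniform $L^1$-bound $\int w_n \leq P/c$, which interpolates against the $H^1$ bound to control the nonlinear part. Extract a weak $H^1$ limit $w$ along a subsequence; weak lower semicontinuity of the gradient term reduces matters to passing the $L^{q+1}$ integral and the linear constraint $\int fw_n \to \int fw$ to the limit.

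The main obstacle is precisely this passage, because axisymmetric $H^1$-functions on $\mathbb{R}^3$ do not admit a Strauss-type compact embedding into $L^{p}$ for $p \in (2, 6)$: mass can escape to infinity by translating along the $z$-axis. I would invoke the P.-L.~Lions concentration--compactness principle on the sequence of measures $(e^{-\alpha s}(w_n)^{q+1} + e^{\alpha s}|\nabla w_n|^2)\, d\mathbf{x}$. Vanishing is incompatible with $\int fw_n = P$ since $f \geq c > 0$. Dichotomy is excluded by the strict subadditivity inequality $I_P < I_{\theta} + I_{P-\theta}$ for every $\theta \in (0, P)$, which follows from the same rescaling argument that gave strict decrease of $I_P$: the superlinear $L^{q+1}$ integral beats the quadratic gradient term once $P$ is large. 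This subadditivity step is the \emph{hardest part}, because the lack of translation invariance of $I$ (the weights $s$ and $f$ depend non-trivially on $(r,z)$) must be tracked carefully when comparing the energies of the putative dichotomous pieces against the profile at level $P$.

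Once compactness is established, strong $L^{q+1}$ convergence yields $\int fw = P$ and $I(w) = I_P$, so $w$ is a non-trivial minimizer, and the Lagrange multiplier identity furnishes \eqref{chap4: eq: rescaled eq}. The strong maximum principle and standard interior elliptic regularity (bootstrapping from subcritical $q < 3$) then give smoothness of $w$ on its positive set, and the bound $s \in L^\infty$ ensures the leading operator stays uniformly elliptic there.
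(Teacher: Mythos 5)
Your proposal follows the same general architecture as the paper's: constrained minimization of the energy functional over nonnegative axisymmetric $H^1$ functions with $\int fw = P$, boundedness from below via Gagliardo--Nirenberg, strict subadditivity of $I_P$ from the rescaling $w\mapsto(P'/P)w$ and $q>1$, and Lions concentration--compactness (restricted to $z$-axis translations by axisymmetry). However, there are three genuine gaps. The first is your reason for excluding vanishing: you claim "vanishing is incompatible with $\int fw_n = P$ since $f\geq c>0$," but this is false. Vanishing of either $(e^{-\alpha s}w_n^{q+1}+e^{\alpha s}|\nabla w_n|^2)\,d\mathbf{x}$ or $fw_n\,d\mathbf{x}$ in the Lions sense is perfectly compatible with $\int fw_n=P$ for all $n$ --- a bump of fixed $L^1$ mass can spread and flatten while every $\sup_{\mathbf{y}}\int_{\mathbf{y}+S_R}$ tends to zero. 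What vanishing actually kills, via a covering argument and Gagliardo--Nirenberg interpolation between the $L^1$ bound $\int w_n\leq P/c$ and the $H^1$ bound, is $\int_{\mathbb{R}^3}w_n^{q+1}\to 0$; this forces $\liminf E(w_n)\geq 0$, which contradicts $I_P<0$ (valid for $P$ large by the rescaling). That is the argument of Lemmas~\ref{chap4: scaling inequality} and~\ref{chap4: vanishing mass lemma}, and your stated reason must be replaced by it. (Incidentally, $fw_n\,d\mathbf{x}$ is the cleaner choice of measure for the trichotomy, since it has constant total mass $P$; your energy density measure does not.)

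The second and third gaps concern the passage from minimizer to solution of \eqref{chap4: eq: rescaled eq}. Because the admissible class carries the one-sided constraint $w\geq 0$, a minimizer $\tilde{w}$ satisfies only a variational inequality, not the Euler--Lagrange equation: for every admissible $u$,
\begin{equation*}
\int\Big(e^{\alpha s}\nabla\tilde{w}\cdot\nabla(u-\tilde{w})-Ke^{-\alpha s}\tilde{w}^q(u-\tilde{w})\Big)\,d\mathbf{x}\geq -\lambda\int f(u-\tilde{w})\,d\mathbf{x}.
\end{equation*}
The equation you wrote holds only on the positive set $\{\tilde{w}>0\}$, where two-sided variations $\tilde{w}\pm t\varphi$ are admissible for small $t$. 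But you cannot then invoke "interior elliptic regularity on the positive set" until you have shown that set is \emph{open}, which for an $H^1$ function is not automatic and which a maximum principle cannot supply before the PDE is in hand on some open set. The paper closes this circle with a Lewy--Stampacchia argument: the variational inequality shows $-\nabla\cdot(e^{\alpha s}\nabla\tilde{w})-Ke^{-\alpha s}\tilde{w}^q+\lambda f$ is a nonnegative distribution, hence a positive Borel measure $d\mu$; writing $\tilde{w}=w_1+w_2$ where $w_1$ solves an elliptic problem with $L^2$ data (hence continuous) and $w_2$ is the Green potential of $d\mu$ (lower semicontinuous by monotone convergence of truncated Green kernels), one gets a lower semicontinuous representative of $\tilde{w}$, so $\{\tilde{w}>0\}$ is open, the weak equation holds there, and bootstrap gives smoothness. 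Your proposal omits this step, and without it the final assertion does not follow.
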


Another way of investigating solutions to \eqref{chap4: eq: non-isentropic vector rotating star} is by prescribing $\rho$ and solving for $p$ and $\Omega ^2$. Apart from being suitably smooth, an obvious requirement for $p$ and $\Omega ^2$ is that they should be positive where $\rho$ is positive. Furthermore, $p$ should be zero on the boundary of the positive set of $\rho$. It is possible to develop conditions on $\rho$ that will guarantee the existence of such $p$ and $\Omega ^2$. To find out what conditions on $\rho$ are natural, we observe some features of the classical Auchmuty and Beals solutions with isentropic equation of state. In \cite{caffarelli1980shape}, Caffarelli and Friedman studied the shapes of the Auchmuty and Beals solutions. Some of their results can be summarized as follows:

\begin{prop}\label{chap4: prop: shape of A-B solutions}
Assume $\Omega^2$ is analytic, and the equation of state is given by
\begin{equation}
p=c \rho ^{\gamma}
\end{equation}
for some $\frac{4}{3} < \gamma < 2$ ($ 1<q<3$). Then the Auchmuty and Beals solution $\rho$ to \eqref{chap4: eq: non-isentropic vector rotating star} has the following properties:

\begin{enumerate}
\item{
Let $D=\big\{\mathbf{x}\in \mathbb{R}^3 ~\big|~ \rho(\mathbf{x})>0 \big\}$, then $\bar{D}$ is compact, $\partial D$ is smooth and $D$ is a finite union of sets of the form $\big\{(r,z) ~\big|~ 0\leq a<r<b, |z|< \psi(r)\}$, where $\psi$ is a function vanishing at the end points except if $a=0$. $\rho \in C^{0,\beta}(\mathbb{R}^3)\cap C^{\infty}(D)$ for some $\beta>0$.
}
\item{
$\rho(r,z)=\rho(r,-z)$.
}

\item{
$\rho_z(r,z)>0$ for $(r,z) \in D$, $r>0$, and $z<0$.
} 

\item{
$\rho_{zz}(r,0)<0$ for $(r,0) \in D$.
}
\end{enumerate}
\end{prop}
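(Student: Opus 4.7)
My plan is to extract all four properties from a single pointwise identity. Bernoulli's relation \eqref{Bernoulli relation} applied on the positive set $D$ with $p = c\rho^{\gamma}$ reads $\frac{c\gamma}{\gamma-1}\rho^{\gamma-1} = -\phi + J(r) + C$; cutting off at zero, this extends to
\begin{equation*}
\rho(\mathbf{x}) = F(\Psi(\mathbf{x})), \qquad \Psi := -\phi + J(r) + C, \qquad F(t) := \bigl[\tfrac{\gamma-1}{c\gamma}t\bigr]_+^{1/(\gamma-1)},
\end{equation*}
on all of $\mathbb{R}^3$, coupled with $-\Delta \phi = 4\pi G \rho$. Every assertion in the proposition would be read off from properties of this semilinear system for the scalar $\Psi$.

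For (1), the Auchmuty-Beals construction produces a bounded, compactly-supported $\rho$, so $\phi \in C^{1,\beta}_{\mathrm{loc}}$ and $\phi \to 0$ at infinity, forcing $\Psi < 0$ outside a large ball and hence $\overline{D}$ compact. H\"older regularity of $\rho$ is immediate from the explicit formula, and a standard elliptic bootstrap inside $D$ upgrades $\rho$ to $C^{\infty}$. For (2) I would use Steiner symmetrization about $\{z=0\}$: the gravitational self-energy strictly decreases under Riesz rearrangement, while the internal and centrifugal pieces are invariant (the latter because $\Omega^{2}$ depends only on $r$), so any minimizer of the Auchmuty-Beals functional can be taken symmetric and non-increasing in $|z|$. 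The cross-section description in (1) of the form $\{|z|<\psi(r)\}$ then combines this symmetry with the strict monotonicity established in (3).

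For the strict inequalities in (3) and (4) I would differentiate $-\Delta\phi = 4\pi G F(\Psi)$ in $z$ on $D$, obtaining
\begin{equation*}
-\Delta \phi_z - 4\pi G F'(\Psi)\phi_z = 0 \quad \text{in } D,
\end{equation*}
a linear elliptic equation for $\phi_z$ that is antisymmetric in $z$ and vanishes on $\{z=0\}$ and on $\partial D$. Steiner symmetrization from (2) already gives $\Psi_z = -\phi_z \geq 0$ on $\{z<0\}\cap D$; the strong maximum principle then promotes this to $\Psi_z>0$, and the Hopf lemma applied at the symmetry plane yields $\Psi_{zz}(r,0)<0$. Since $F'(\Psi)>0$ on $D$, the chain rule applied to $\rho = F(\Psi)$ transfers both statements to $\rho$.

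The main obstacle is the smoothness of $\partial D$ asserted in (1). Because $F$ is only H\"older continuous at $0$, the equation for $\Psi$ cannot be linearized across $\partial D$, and the nonlinearity is not even Lipschitz in the regime $\gamma<2$. Following the Caffarelli-Friedman strategy, I would first establish a nondegeneracy bound $|\nabla \Psi|\geq c_0>0$ along $\partial D$ by comparing $\rho$ near $\partial D$ with explicit radial sub- and supersolutions modelled on the Lane-Emden profile. This makes $\{\Psi=0\}$ locally a $C^{1,\alpha}$ graph by the implicit function theorem, after which Schauder estimates applied to the semilinear equation for $\Psi$ on $D$ bootstrap to smoothness of $\partial D$ and of the profile function $\psi$. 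Once this regularity is in hand, the axisymmetric structure immediately produces the stated disjoint union of tubular components.
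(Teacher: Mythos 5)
The paper does not prove this proposition: it is stated as a summary of results from Caffarelli and Friedman \cite{caffarelli1980shape}, which in turn build on the construction of Auchmuty and Beals \cite{auchmuty1971variational}. There is therefore no proof in the present paper to compare against; what follows is an assessment of your sketch on its own terms and relative to the cited literature.

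Your overall architecture matches the Caffarelli--Friedman approach: pass to the scalar Bernoulli unknown $\Psi = -\phi + J(r) + C$, write $\rho = F(\Psi)$ with $F(t) = [\,\cdot\,t]_+^{1/(\gamma-1)}$, get symmetry and monotonicity in $|z|$ from Steiner symmetrization (with equality cases in the Riesz inequality giving symmetry of \emph{every} minimizer up to a vertical shift), differentiate the semilinear equation in $z$ to get a linear equation $\Delta\phi_z + 4\pi G F'(\Psi)\phi_z = 0$ in $D$, and run the strong maximum principle and Hopf lemma on $v=-\phi_z\geq 0$ (which is superharmonic in $D_-$ since $\Delta v = -4\pi G F'(\Psi)v \leq 0$). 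Items (2), (3), (4) follow cleanly this way, and the observation that $\Psi_z(r,0)=0$ kills the $F''(\Psi)\Psi_z^2$ term so that $\rho_{zz}(r,0)$ has the same sign as $\Psi_{zz}(r,0)$ is exactly right.

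Two smaller points need correction and one larger one needs flagging. First, you assert $\phi_z$ ``vanishes on $\partial D$.'' It does not --- the Newtonian potential is $C^{1,\alpha}$ across $\partial D$ and its normal derivative there is generically nonzero --- but you do not need that vanishing: the maximum-principle argument only uses $\phi_z\leq 0$ in $D_-$ together with the equation inside $D_-$, and nontriviality is excluded because $\phi_z\equiv 0$ in $D_-$ would force $\rho$ to be independent of $z$ there, contradicting $\rho = 0$ on the lower boundary and $\rho>0$ inside. Second, for $1<q<3$ the composition $F(t)=[\,\cdot\,t]_+^{q}$ is $C^1$ everywhere (and $C^2$ for $q\geq 2$), not merely H\"older at $0$; the obstruction to linearizing across the free boundary is the degeneracy $F'(0)=0$, not lack of Lipschitz continuity. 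This distinction matters because it is precisely what makes the free-boundary analysis delicate in the regime $\gamma>3/2$. Third --- and this is the genuine gap --- the smooth-boundary claim in item (1), together with the structural description of $D$ as a finite union of sets $\{a<r<b,\ |z|<\psi(r)\}$, is the technical core of Caffarelli--Friedman and is only gestured at in your last paragraph. A nondegeneracy bound $|\nabla\Psi|\geq c_0$ on $\partial D$ is indeed the right starting point, but deriving it (via comparison with explicit one-dimensional barriers near the free boundary, and separately near the poles and near the equator where the geometry degenerates), then upgrading to $C^{1,\alpha}$ regularity of $\{\Psi=0\}$, then analyticity via the Cauchy--Kovalevskaya machinery exploiting the analyticity of $\Omega^2$, and finally controlling the number of connected components of $\partial D$, is a substantial argument that your sketch leaves entirely to be filled in. For this proposition the paper rightly defers to \cite{caffarelli1980shape} rather than reproducing that argument.
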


Motivated by this result, we will prove the following
\begin{theorem}\label{chap4: thm: solve velocity field: smooth}
Let $\rho$ be an axisymmetric nonnegative function such that
\begin{enumerate}
\item{
$\rho\in C^{k}(\bar{D})$ ($k\geq 2$), where $D$ is a finite union of sets of the form $\big\{(r,z) ~\big|~ 0\leq a<r<b, |z|< \psi(r)\}$, where $\psi$ is a function vanishing at the end points except if $a=0$. Also assume $\partial D$ is smooth, $\rho>0$ on $D$, $\rho=0$ on $\partial{D}$.
}
\item{
$\rho(r,z)=\rho(r,-z)$.
}
\item{
$\rho_r(B\rho)_z-\rho_z(B\rho)_r\geq 0$ for $z<0$.
}
\item{
$\rho_z>0$ for $z<0$.
}

\end{enumerate}
Also assume the following is satisfied:

\indent (a) There is a $c>0$, such that $\rho_{zz}<-c$ on $\{z=0\}\cap \partial D$.

Then \eqref{chap4: eq: non-isentropic vector rotating star} is solvable for a nonnegative angular velocity function $\Omega^2 \in C^{k-2}(D)\cap C^0(\bar{D})$ and a positive pressure $p\in C^{k}(\bar{D})$, such that $p=0$ on $\partial D$.
\end{theorem}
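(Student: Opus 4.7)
The plan is to construct $p$ first by integrating the $z$-component of \eqref{chap4: eq: non-isentropic vector rotating star} and then read off $\Omega^2$ algebraically from the $r$-component. By the even symmetry of $\rho$ in $z$, the gravitational field $B\rho = -\nabla\phi$ inherits $(B\rho)_r$ even in $z$ and $(B\rho)_z$ odd in $z$. Motivated by $p_z = \rho(B\rho)_z$, I define
\[
p(r,z) = -\int_{|z|}^{\psi(r)} \rho(r,z')\,(B\rho)_z(r,z')\,dz'.
\]
This is even in $z$, vanishes on the top and bottom of $\partial D$ by construction, and vanishes on the lateral portions of $\partial D$ because $\psi \to 0$ there. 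To obtain strict positivity of $p$ on $D$, I note that condition~4 together with evenness makes $\rho$ a strictly symmetric-decreasing function of $z$ at each fixed $r$; a slice-wise one-dimensional convolution argument (or Riesz rearrangement) then shows that $-\phi = G\rho * |\mathbf{x}|^{-1}$ is strictly symmetric-decreasing in $z$ as well, so $(B\rho)_z = -\phi_z < 0$ on $\{z>0\}$. The integrand above is therefore negative on $\{z>0\}$, giving $p(r,z) > 0$ for $z \in [0,\psi(r))$; evenness extends positivity to all of $D$.

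Next I define
\[
r\Omega^2(r,z) = \frac{p_r(r,z)}{\rho(r,z)} - (B\rho)_r(r,z),
\]
so that \eqref{chap4: eq: non-isentropic vector rotating star} holds on $D$. To verify $\Omega^2 \geq 0$, differentiate the defining integral for $p$, use $\rho(r,\psi(r)) = 0$ to annihilate the boundary term produced by differentiating the upper limit, and integrate by parts once in $z'$ with the help of the curl-free identity $\partial_r(B\rho)_z = \partial_z(B\rho)_r$ (immediate since $B\rho = -\nabla\phi$). For $z \ge 0$ this produces
\[
p_r - \rho(B\rho)_r = -\int_{z}^{\psi(r)} \bigl[\rho_r(B\rho)_z - \rho_z(B\rho)_r\bigr]\,dz'.
\]
The bracketed integrand is odd in $z$, because each summand is a product of one $z$-even factor and one $z$-odd factor. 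Condition~3 makes it $\ge 0$ for $z < 0$, hence $\le 0$ on $[z,\psi(r)]$ whenever $z \ge 0$. The right side is therefore $\ge 0$, so $\Omega^2 \ge 0$ for $z \ge 0$; evenness covers $z \le 0$, and axisymmetric smoothness handles $r=0$ (where $p_r$ and $(B\rho)_r$ both vanish linearly in $r$).

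For the regularity claims, standard Newtonian potential estimates applied to $\rho \in C^{k}(\bar D)$ with smooth $\partial D$ give $\phi \in C^{k+2}(D) \cap C^{k+1,\alpha}(\mathbb{R}^3)$, so the integrand defining $p$ is in $C^{k-1}(\bar D)$ and $p \in C^{k}(\bar D)$. The subtle point is continuity of $\Omega^2$ up to $\partial D$: the quotient $(p_r - \rho(B\rho)_r)/\rho$ must remain bounded as $\rho \to 0$. On the smooth top and bottom portions of $\partial D$, where $\rho_z$ does not vanish, both numerator and denominator vanish linearly in the boundary distance, so the ratio is bounded. The main obstacle is the equator ring $\{z = 0\} \cap \partial D$, where $\rho_z$ is forced to vanish by symmetry and $\rho$ decays only quadratically in $z$; assumption (a), $\rho_{zz} < -c$, supplies the matching quadratic lower bound $\rho \ge c' z^2$ at the corner, which balances the quadratic vanishing of $p_r - \rho(B\rho)_r$ inferred from the integral formula above and delivers $\Omega^2 \in C^0(\bar D)$. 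The interior regularity $\Omega^2 \in C^{k-2}(D)$ is then immediate from $p \in C^k$, $\rho \in C^k$, and $\rho > 0$ inside. This equator analysis, for which assumption (a) is precisely calibrated, is the technical heart of the argument.
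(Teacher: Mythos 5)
Your construction is the same as the paper's: the pressure you define,
\[
p(r,z) = -\int_{|z|}^{\psi(r)}\rho(r,z')(B\rho)_z(r,z')\,dz',
\]
agrees (by the $z$-parity of $\rho$ and $(B\rho)_z$) with the paper's $p(r,z)=\int_{-\psi(r)}^{z}\rho(B\rho)_\xi\,d\xi$, and the integration-by-parts identity
\[
p_r-\rho(B\rho)_r=\int_{-\psi(r)}^{z}\bigl(\rho_r(B\rho)_\xi-\rho_\xi(B\rho)_r\bigr)\,d\xi
\]
is exactly the paper's \eqref{chap4: sol: ang vel}, from which nonnegativity of $\Omega^2$ via hypothesis~3 follows identically. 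Your sign argument for $(B\rho)_z$ via rearrangement is a slightly longer route than the paper's direct observation that $(B\rho)_z(\mathbf{x})=\int_D\rho_z(\mathbf{y})|\mathbf{x}-\mathbf{y}|^{-1}\,d\mathbf{y}$ is odd and signed by hypothesis~4, but it reaches the same conclusion.

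The genuine gap is in the claim that $\Omega^2$ extends continuously to $\bar D$, which you correctly identify as ``the technical heart'' but do not actually prove. Your heuristic near the equator ring is off in two respects. First, the lower bound ``$\rho\geq c'z^2$'' is not available: as $(r,z)\to(r_0,0)$ with $\psi(r_0)=0$, $\rho(r,0)\to 0$ as well, so $\rho$ can be far smaller than $c'z^2$ along suitable approach paths. Second, you assert that $p_r-\rho(B\rho)_r$ ``vanishes quadratically'' but give no argument; the integral is over a shrinking interval $(-\psi(r),z)$ and its behaviour is not a simple $O(z^2)$. The mechanism that actually works is a double Cauchy mean value theorem, which the paper carries out: writing $\rho=\int_{-\psi(r)}^z\rho_\xi\,d\xi$ and applying Cauchy's MVT once gives $\Omega^2(r,z)=\frac{1}{r\rho_z}(\rho_r B\rho_z-\rho_z B\rho_r)(r,z')$; since both numerator and denominator vanish at $\xi=0$ (by parity), a second Cauchy MVT yields $\Omega^2(r,z)=\frac{1}{r\rho_{zz}}(\rho_r B\rho_z-\rho_z B\rho_r)_z(r,z'')$, and hypothesis (a) keeps $\rho_{zz}$ bounded away from zero. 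This argument is path-independent, which is the point. Similarly, at the other two types of boundary points you claim boundedness ``by linear vanishing,'' but continuity requires computing the actual limits (again via MVT). Finally, at $r=0$ you wave at ``axisymmetric smoothness''; the paper needs a dedicated lemma (Lemma~\ref{chap4: regularity at zero}) showing that an odd-in-$r$ $C^k$ quantity divided by $r$ is $C^{k-1}$, which is the mechanism giving $\Omega^2\in C^{k-2}(D)$.
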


\begin{remark}
If $\nabla \rho$ and $\nabla (B\rho)$ point approximately to the center of the star, condition 3 in theorem \ref{chap4: thm: solve velocity field: smooth} means that the gradient of $\rho$ is more inclined with respect to the plane $\{z=0\}$ than the gravity force. Simple calculations with ellipsoids suggest that shapes that are wider at the equator tend to satisfy condition 3.
\end{remark}

\begin{remark}
In general the solution obtained from Theorem \ref{chap4: thm: solve velocity field: smooth} has an angular velocity distribution that depends both on $r$ and $z$. By Proposition \ref{prop: zero curl condition}, such solutions will require a non-isentropic equation of state.
\end{remark}

It is desirable to relax the regularity conditions of $\rho$ at the boundary, since for some $\gamma$, the Auchmuty and Beals solutions are only H\"{o}lder continuous at the boundary. A similar result with weaker boundary regularity needs more control on the derivatives when close to the boundary. Here is one way of formulating the conditions:

\begin{theorem}\label{chap4: thm: solve velocity field holder rho}
Let $\rho$ be an axisymmetric nonnegative function such that
\begin{enumerate}
\item{
$\rho\in C^{2}(D)\cap C^{0,\beta}(\bar{D})$, for some $0<\beta<1$, where $D$ is a finite union of sets of the form $\big\{(r,z) ~\big|~ 0\leq a<r<b, |z|< \psi(r)\}$, where $\psi$ is a function vanishing at the end points except if $a=0$. Also assume that $\partial D$ is smooth, convex at $(0,\pm \psi(0))\in \partial D$ (if there are such points), i.e., the interior of the segment $(0, \psi(0)) - (r,\psi(r))$ lies in $D$ for $r$ sufficiently small. $\rho>0$ on $D$, $\rho=0$ on $\partial{D}$.
}
\item{
$\rho(r,z)=\rho(r,-z)$.
}
\item{
$\rho_r(B\rho)_z-\rho_z(B\rho)_r\geq 0$ for $z<0$.
}
\item{
$\rho_z>0$ for $z<0$.
}

\item{
$\forall \epsilon>0$, $\exists C>0$ such that on $D\cap \{|z|\geq \epsilon\}$: $|\rho_r|\leq C|\rho_z|$, $|\rho_{rr}|\leq C|\rho_z|$, $|\rho_{rz}|\leq C |\rho_z|$.
} 
\end{enumerate}
Also assume that one of the following is satisfied:

\indent (a) $\displaystyle \frac{\rho_{rz}}{\rho_{zz}}$ and $\displaystyle \frac{\rho_r}{\rho_{zz}}$ are bounded in a neighbourhood of $\{z=0\}\cap \partial D$. 

\indent (a') $\rho_r\leq 0$ in a neighbourhood of $\{z=0\}\cap \partial D$.

\indent (a'') $\displaystyle \frac{z \rho_r}{\rho_z}$ is bounded on $U\setminus \{z=0\}$, where $U$ is some neighbourhood of $\{z=0\}\cap \partial D$. 

Then \eqref{chap4: eq: non-isentropic vector rotating star} is solvable for a nonnegative angular velocity function $\Omega^2 \in C^0(D)\cap L^{\infty}(D)$ and a positive pressure $p\in C^1(D)\cap C^0({\bar{D}})$, such that $p=0$ on $\partial D$.
\end{theorem}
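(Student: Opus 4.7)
The plan is to build $p$ explicitly from the vertical component of \eqref{chap4: eq: non-isentropic vector rotating star} and then read off $\Omega^{2}$ from the radial component, deriving the sign of $\Omega^{2}$ from a pointwise inequality between mixed derivatives that mirrors condition 3. Since $\rho$ is merely Hölder on $\bar D$, additional care near $\partial D$ is needed, and this is where condition 5 and the three alternatives (a), (a$'$), (a$''$) will enter.

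\emph{Construction and positivity of $p$.} For $z\geq 0$ set
$$p(r,z) = -\int_{z}^{\psi(r)} \rho(r,z')(B\rho)_{z}(r,z')\,dz',$$
and extend by $p(r,-z)=p(r,z)$. Then $p_{z}=\rho(B\rho)_{z}$ inside $D$ and $p\equiv 0$ on $\partial D$; the stated regularity of $\rho$ gives $p\in C^{k}(\bar D)$ (or $C^{2}(D)\cap C^{0}(\bar D)$ in the weaker setting). For positivity of $p$ inside $D$, observe that $u:=\phi_{z}$ satisfies $\Delta u=4\pi G \rho_{z}$, which is strictly negative on $\{z>0\}$ by condition 4, while $u$ vanishes on $\{z=0\}$ (by evenness of $\rho$) and at infinity. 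The minimum principle for superharmonic functions gives $\phi_{z}>0$ on $\{z>0\}$, hence $(B\rho)_{z}<0$ there and $p>0$ on $D$.

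\emph{Nonnegativity of $\Omega^{2}$.} Defining $Q:=p_{r}-\rho(B\rho)_{r}$ so that $r\rho\,\Omega^{2}=Q$, differentiation of $p_{z}=\rho(B\rho)_{z}$ in $r$ gives, after cancellation,
$$Q_{z}=\rho_{r}(B\rho)_{z}-\rho_{z}(B\rho)_{r}.$$
The even/odd symmetries of $\rho,B\rho$ in $z$ make the right-hand side odd in $z$, so condition 3 upgrades to $Q_{z}\leq 0$ on $\{z\geq 0\}$. Differentiating $p(r,\psi(r))=0$ tangentially and using $p_{z}=\rho(B\rho)_{z}=0$ on $\partial D$ yields $p_{r}=0$ on the upper boundary, so $Q=0$ there; integrating $Q_{z}$ downward gives $Q\geq 0$ on $\{z\geq 0\}\cap D$, and evenness extends it to all of $D$.

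\emph{Boundedness and regularity.} On $D\cap\{|z|\geq\epsilon\}$, condition 5 gives $|Q_{z}|\leq C|\rho_{z}|$, and integrating from the top (where $\rho_{z}\leq 0$ for $z>0$) produces $|Q(r,z)|\leq C\rho(r,z)$, so $\Omega^{2}=Q/(\rho r)$ is bounded away from the axis and the equatorial boundary. Near the axis $r=0$, axisymmetry forces $p_{r}$ and $(B\rho)_{r}$ to vanish linearly in $r$, so $Q/r$ has a continuous extension to $r=0$. The delicate region is a neighbourhood of $\{z=0\}\cap\partial D$, where $\rho_{z}(r,0)=0$ by symmetry and $\rho$ degenerates. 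Under (a), a second-order Taylor expansion in $z$ bounds $Q$ by $\rho_{zz}\cdot z^{2}\sim\rho$, with the quotients $\rho_{r}/\rho_{zz}$ and $\rho_{rz}/\rho_{zz}$ supplying the missing radial comparisons. Under (a$'$), the sign $\rho_{r}\leq 0$ together with the sign of $(B\rho)_{z}$ makes the radial term in $Q_{z}$ have the right sign, so $|Q|\lesssim \rho$ follows directly. Under (a$''$), integrating by parts $\int_{z}^{\psi}\rho_{r}(B\rho)_{z}\,dz'$ and using $|z\rho_{r}/\rho_{z}|\leq C$ lets one compare the two terms in $Q$ against $\rho$, again yielding $|Q|/\rho\in L^{\infty}$.

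\emph{Main obstacle.} The hard part is the boundary analysis near $\{z=0\}\cap\partial D$: this is simultaneously the region where $\rho$ has its lowest Hölder decay, where symmetry kills $\rho_{z}$, and where $\rho_{r}$ is not automatically controlled. Conditions 5 and (a), (a$'$), (a$''$) are designed as three alternative routes past this obstruction, and each requires its own matching argument to conclude that $\Omega^{2}=Q/(\rho r)$ is bounded and continuous on $D$. Once that is done, the regularity claims for $p\in C^{1}(D)\cap C^{0}(\bar D)$ and $\Omega^{2}\in C^{0}(D)\cap L^{\infty}(D)$ follow from the explicit integral formula for $p$, standard Newton-potential estimates on $B\rho$, and the strict positivity of $\rho$ on $D$.
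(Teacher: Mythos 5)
Your construction is the same as the paper's: set $p(r,z)=\int_{-\psi(r)}^{z}\rho\,(B\rho)_{\xi}\,d\xi$, define $\Omega^2=(p_r-\rho(B\rho)_r)/(r\rho)$, observe that $Q:=p_r-\rho(B\rho)_r$ has $Q_z=\rho_r(B\rho)_z-\rho_z(B\rho)_r$, and read off nonnegativity and boundedness from hypotheses 3, 4, 5 and the alternatives (a), (a$'$), (a$''$). Your re-derivation of $Q\geq 0$ by integrating $Q_z$ from a zero boundary value is equivalent to the paper's direct integral formula $Q=\int_{-\psi(r)}^{z}\bigl(\rho_r B\rho_\xi-\rho_\xi B\rho_r\bigr)\,d\xi$, and your case analysis for (a), (a$'$), (a$''$) near the equatorial boundary points is consistent with the paper's use of the mean value theorem there. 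The small differences (superharmonicity of $\phi_z$ instead of a symmetrization of the Newtonian kernel to get the sign of $(B\rho)_z$; phrasing case (a) as a Taylor expansion) are cosmetic.

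However, there are two genuine gaps. First, the assertion that ``the stated regularity gives $p\in C^2(D)\cap C^0(\bar D)$ in the weaker setting'' sidesteps exactly the technical heart of the theorem: since $\rho$ is only H\"older up to $\partial D$, the lower limit $-\psi(r)$ of the integral defining $p$ lies on $\partial D$, and differentiating under the integral sign to produce $p_r$ inside $D$ requires a dominated-convergence argument whose dominating function is built precisely from hypothesis 5 (the bounds $|\rho_r|,|\rho_{rr}|,|\rho_{rz}|\leq C|\rho_z|$ make the integrand dominated by multiples of $\rho_z$, whose $z$-integral is $\rho$ and hence finite). Without this, the existence of $p_r$ in $D$ and the identity $Q=\int_{-\psi(r)}^{z}(\rho_r B\rho_\xi-\rho_\xi B\rho_r)\,d\xi$ are unjustified. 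Second, your treatment of the poles $(0,\pm\psi(0))$ — ``axisymmetry forces $p_r$ and $(B\rho)_r$ to vanish linearly in $r$'' — never invokes the convexity hypothesis in item 1, yet that is what the paper uses to guarantee that the segments in the repeated mean-value-theorem argument (in both $r$ and $z$) lie inside $D$, which is what actually produces the bound $\Omega^2\leq \tilde C/(1+r'C)$ there; mere oddness of $p_r$ in $r$ does not give a linear-in-$r$ vanishing when $p$ is only $C^0$ up to that boundary point.
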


\begin{remark}
If $D$ has only one connected component containing the origin, $\{z=0\}\cap \partial D$ is the equator, and since $\rho$ is zero on the equator and positive in the interior of $D$, the condition (a') is most likely satisfied in this case. Condition (a'') is equivalently to $\frac{z}{r} \bigg/ \frac{\rho_z}{\rho_r}$ being bounded on $U\setminus \{z=0\}$ and has the geometrical interpretation that the when $\mathbf{x}$ gets close to $\{z=0\}\cap \partial D$, the inclination of $\mathbf{x}$ to the horizontal plane is bounded by the inclination of $\nabla \rho(\mathbf{x})$.
\end{remark}

\section{Existence of Solution for High Adiabatic Index}
\label{chap4: high index existence}

Without loss of generality, we may absorb $\alpha$ into $s$ in \eqref{chap4: intro: eq: w equation} and work with
\begin{equation}\label{chap4: eq: main eq 1}
\nabla \cdot (e^{s}\nabla w) + Ke^{-s}w^q-f=0.
\end{equation}
We first find a subsolution to this equation.

\begin{lemma}\label{chap4: subsolution}
If $\mathbf{x} \cdot \nabla s \leq 0$, there is a ball of radius $R$, denoted by $S_R$, centered at the origin, on which there is a smooth spherically symmetric positive function $\underline{u}$ with zero boundary value satisfying
\begin{equation}\label{chap4: subsolution inequality}
\nabla \cdot (e^{s}\nabla \underline{u}) + Ke^{-s}\underline{u}^q-f \geq 0.
\end{equation}
\end{lemma}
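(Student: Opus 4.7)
The plan is to construct $\underline{u}$ by the explicit radial ansatz
\[
\underline{u}(\mathbf{x}) = B\,(R^2 - |\mathbf{x}|^2)^2
\]
with constants $B,R>0$ to be chosen. This function is smooth on $\mathbb{R}^3$, positive on $S_R$, vanishes on $\partial S_R$, and is radially decreasing (so $\underline{u}'(r)/r \le 0$).

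First I would peel off the drift term. Expanding $\nabla\cdot(e^s\nabla\underline{u}) = e^s\Delta\underline{u}+e^s\nabla s\cdot\nabla\underline{u}$ and writing $\nabla\underline{u}=(\underline{u}'(r)/r)\mathbf{x}$, the hypothesis $\mathbf{x}\cdot\nabla s\leq 0$ together with $\underline{u}'(r)/r\leq 0$ forces $\nabla s\cdot\nabla\underline{u}\geq 0$. Hence \eqref{chap4: subsolution inequality} is implied by the simpler pointwise inequality
\[
e^s\Delta\underline{u}+Ke^{-s}\underline{u}^q\geq f \qquad\text{on } S_R.
\]
A direct computation yields $\Delta\underline{u}=4B(5r^2-3R^2)$, so $\Delta\underline{u}\le 0$ on the inner region $\{r\le\sqrt{3/5}\,R\}$, $\Delta\underline{u}=0$ at the critical radius $r_\ast=\sqrt{3/5}\,R$, and $\Delta\underline{u}(R)=8BR^2>0$. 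At the boundary the required inequality $8BR^2e^s\geq f$ forces the lower bound $B\geq \|f\|_{L^\infty(\bar S_R)}e^{-s_{\min}}/(8R^2)$.

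The serious work is in the inner region, where the nonlinear term has to carry the inequality alone. There $|\Delta\underline{u}|\le 12BR^2$ and $\underline{u}^q\geq B^q(2R^2/5)^{2q}$. Setting $B=B_0/R^2$ keeps $BR^2=B_0$ bounded (so $|e^s\Delta\underline{u}|$ is uniformly bounded in $R$), while
\[
B^q(R^2-r^2)^{2q}\geq (2/5)^{2q}B_0^q R^{2q}
\]
blows up as $R\to\infty$ because $q>0$. The critical radius $r_\ast$ is the main obstacle: there $\Delta\underline{u}$ vanishes, so the inequality reduces to the scalar condition $K(2/5)^{2q}B_0^q R^{2q}e^{-s_{\max}}\geq \|f\|_{L^\infty(\bar S_R)}$, which the scaling $B=B_0/R^2$ is precisely designed to make satisfiable for large $R$. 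Once this holds, the origin estimate (differing from the $r_\ast$ estimate only by the bounded additive term $12B_0 e^{s_{\max}}$) and the outer-region estimate (where $\Delta\underline{u}\geq 0$ contributes helpfully, reaching $8BR^2 e^s$ at the boundary) follow by continuity in $r$, completing the construction.
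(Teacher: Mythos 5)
Your approach via the explicit quartic radial ansatz $\underline{u}=B(R^2-|\mathbf{x}|^2)^2$ is a genuinely different route from the paper's. The paper removes the drift term (exactly as you do, using $\mathbf{x}\cdot\nabla s\le 0$ and the radial monotonicity $\underline{u}'\le 0$ to get $\nabla s\cdot\nabla\underline{u}\ge 0$), reduces to $\Delta\underline{u}+A_1\underline{u}^q-A_2\ge 0$ with $A_1\le Ke^{-2s}$, $A_2\ge e^{-s}f$, and then quotes Lemma~3.1 of Smoller--Wasserman (an ODE shooting/time-map argument) for the existence of a spherically symmetric positive solution on some ball. Your construction is more elementary and self-contained; it avoids the external citation at the cost of having to verify the pointwise inequality directly.

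There is, however, a real gap in your final step. The claim that the outer-region estimate ``follows by continuity in $r$'' is not a valid argument: on $(r_\ast,R)$ the quantity $e^s\Delta\underline{u}+Ke^{-s}\underline{u}^q$ is a sum of two nonnegative terms, one increasing and one decreasing in $r$, and having checked it at $r_\ast$ and at $R$ does not, by continuity alone, preclude a dip below $\|f\|_\infty$ at an intermediate radius. The repair is easy and makes the split at $r_\ast$ unnecessary: with $B=B_0/R^2$, split instead at $r_1=2R/\sqrt5$, where $\Delta\underline{u}(r_1)=4B_0$. On $[r_1,R]$ one has $\Delta\underline{u}\ge 4B_0$, so $e^s\Delta\underline{u}\ge 4B_0e^{s_{\min}}\ge\|f\|_\infty$ once $B_0\ge \|f\|_\infty e^{-s_{\min}}/4$, with no help needed from the nonlinearity. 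On $[0,r_1]$ one has $\underline{u}\ge \underline{u}(r_1)=B_0R^2/25$ and $\Delta\underline{u}\ge -12B_0$, so $e^s\Delta\underline{u}+Ke^{-s}\underline{u}^q\ge -12B_0e^{s_{\max}}+Ke^{-s_{\max}}(B_0/25)^qR^{2q}$, which exceeds $\|f\|_\infty$ for $R$ large since $q>0$. Finally note that, exactly as in the paper, this argument tacitly requires $s$ and $e^{-s}f$ to be globally bounded so that $s_{\min}$, $s_{\max}$, $\|f\|_{L^\infty}$ may be taken $R$-independent as $R\to\infty$.
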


\begin{proof}
Let $A_1$, $A_2$ be two positive constants such that $Ke^{-2s}\geq A_1$, $e^{-s}f \leq A_2$. We look for a positive function $\underline{u}$ on a ball which satisfies
\begin{equation}\label{chap4: inequality 1}
\Delta \underline{u} + A_1\underline{u}^q - A_2 \geq 0.
\end{equation}
By lemma 3.1 in \cite{smoller1989existence}, we only need to check that the primitive of $g(t)=A_1t^q- A_2$, which is $G(t)=\frac{A_1}{q+1}t^{q+1}-A_2t$, satisfies $G(t)>0$ for some $t>0$. But this is certainly true for large enough $t$. It follows that there is a ball of radius $R$, and a spherically symmetric positive solution $\underline{u}$ of \eqref{chap4: inequality 1} on this ball with zero boundary value, which satisfies $\mathbf{x} \cdot \nabla \underline{u}<0$. By the definition of $A_1$ and $A_2$, we have
\begin{equation}
\Delta \underline{u} + Ke^{-2s}\underline{u}^q-e^{-s}f \geq \Delta \underline{u} + A_1 \underline{u}^q-A_2 \geq 0.
\end{equation}
Furthermore, by 
\begin{equation}
\nabla \underline{u} = -\frac{|\nabla \underline{u}|}{|\mathbf{x}|}\mathbf{x},
\end{equation}
we have
\begin{equation}
\nabla s \cdot \nabla \underline{u} = - (\mathbf{x} \cdot \nabla s) \frac{|\nabla \underline{u}|}{|\mathbf{x}|} \geq 0.
\end{equation}
Therefore,
\begin{equation}
\Delta \underline{u} +\nabla s \cdot \nabla \underline{u} +Ke^{-2s}\underline{u}^q -e^{-s}f \geq 0,
\end{equation}
which differs from \eqref{chap4: subsolution inequality} only by a factor of $e^{-s}$. Hence the assertion is proved.
\end{proof}

Having found a subsolution to \eqref{chap4: eq: main eq 1}, we now only need a supersolution to produce a genuine solution. That is given by 

\begin{lemma}\label{chap4: supersolution}
Suppose $0<q<1$. There is a smooth positive function $\bar{u}$ on $\overline{S_R}$, such that $\bar{u}\geq \underline{u}$ on $\overline{S_R}$, and satisfies 
\begin{equation}
\nabla \cdot (e^{s}\nabla \bar{u}) + Ke^{-s}\bar{u}^q-f \leq 0.
\end{equation} 
\end{lemma}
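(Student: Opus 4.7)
The plan is to construct $\bar u$ as the solution of a linear elliptic Dirichlet problem with a large constant sink added to $f$ and large constant boundary data, exploiting the sublinear growth of $u^q$ for $0<q<1$. For parameters $M,N>0$ to be fixed later, let $\bar u$ be the unique $C^\infty(\overline{S_R})$ solution of
\begin{equation*}
\nabla\cdot(e^s\nabla\bar u)=f-M\quad\text{in }S_R,\qquad \bar u=N\quad\text{on }\partial S_R.
\end{equation*}
Existence, uniqueness, and smoothness up to the boundary follow from standard Schauder theory, since $e^s$, $f$ and $\partial S_R$ are all smooth.

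Once $M>\|f\|_{L^\infty(S_R)}$ is chosen, we have $f-M<0$ on $\overline{S_R}$. After dividing by $e^s>0$, the operator has the standard form $\Delta+\nabla s\cdot\nabla$, so the weak maximum principle yields $\bar u\ge N$ on $\overline{S_R}$. Writing $\bar u=N+v$ with $v=0$ on $\partial S_R$, the usual $L^\infty$ estimate for the linear Dirichlet problem gives
\begin{equation*}
0\le v\le C_0\bigl(M+\|f\|_{L^\infty}\bigr),
\end{equation*}
with $C_0$ depending only on $R$ and $\|s\|_{C^1(\overline{S_R})}$. Picking $N\ge\max_{\overline{S_R}}\underline u$, which is possible because $\underline u$ is bounded on $\overline{S_R}$, secures $\bar u\ge N\ge\underline u$.

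It remains to check the supersolution inequality. Substituting the PDE for $\bar u$ into the left-hand side gives
\begin{equation*}
\nabla\cdot(e^s\nabla\bar u)+Ke^{-s}\bar u^q-f=-M+Ke^{-s}\bar u^q\le -M+Ke^{-\min s}\bigl(N+C_0M+C_0\|f\|_{L^\infty}\bigr)^q.
\end{equation*}
Because $0<q<1$, the right-hand side is $-M+O(M^q)$ as $M\to\infty$ with $N$ fixed, hence is nonpositive for all $M$ sufficiently large.

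There is essentially no serious obstacle here: the sublinearity $q<1$ is precisely what allows the nonlinear term to be dominated by the linear sink $-M$. The only care required is the order in which constants are chosen---first fix $N$ large enough that $\bar u\ge\underline u$ is automatic from the maximum principle, then enlarge $M$ further so that the sublinear inequality in the final display is enforced.
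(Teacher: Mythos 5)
Your proof is correct but takes a genuinely different route from the paper's. The paper constructs $\bar u = u + C$ where $u$ solves the \emph{nonlinear} auxiliary equation $\nabla\cdot(e^{s}\nabla u) + Ke^{-s}g(u+C) + M = 0$ (with $g$ a tamed version of $t^q$, $C=\|\underline u\|_\infty$, and $M=\|f\|_\infty$), obtained via a Leray--Schauder fixed-point argument plus a priori $H^1_0$ bounds; the supersolution inequality then comes from $-M\le -f$. You instead solve a \emph{linear} Dirichlet problem $\nabla\cdot(e^s\nabla\bar u)=f-M$ with constant boundary data $N$, use the maximum principle and the standard $L^\infty$ estimate to control $\bar u$ from below by $N$ and from above by $N+O(M)$, and then invoke the sublinearity $q<1$ so that the term $Ke^{-s}\bar u^q = O(M^q)$ is eventually dominated by $-M$. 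Both approaches hinge on $q<1$ (the paper in the a priori bound for Leray--Schauder, you in the final comparison $-M+O(M^q)\le 0$), but yours dispenses with the fixed-point machinery and needs only linear elliptic theory and the maximum principle, which is a real simplification; the paper's approach, in exchange, produces a $\bar u$ satisfying an exact equation rather than an inequality, which is neither required by the statement nor used afterwards.
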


\begin{proof}
Let $C=\| \underline{u} \|_{L^{\infty}(S_R)}$, and $M=\| f \|_{L^{\infty}(S_R)}$. Let $g(t)\geq 0$ be a smooth function on $\mathbb{R}$ such that
\begin{equation}
g(t)=
\begin{cases}
t^q & \mbox{if } t \geq C\\
0 & \mbox{if } t \leq 0
\end{cases}
\end{equation}
and $0\leq g'(t) \leq 2C^{q-1}$ when $0<t<C$. We look for a solution to the equation:
\begin{equation}\label{chap4: weakened equation}
\nabla \cdot (e^{s}\nabla u) + Ke^{-s}g(u+C) + M = 0
\end{equation}
by the standard Leray-Schauder estimate. For that we define
\begin{align*}
A :  H^1_0(S_R) & \rightarrow H^1_0(S_R)\\
u & \mapsto v
\end{align*}
by 
\begin{align*}
\nabla \cdot (e^{s}\nabla v) + Ke^{-s}g(u+C) + M = 0 \ \ \ \ &\mbox{on } S_R \\
v=0 \ \ \ \ &\mbox{on } \partial S_R
\end{align*}
By the definition of $g(t)$ we have
\begin{align*}
\big( g(u+C) \big) ^2 & \leq C^{2q} + (u+C)^{2q},\\
& \leq \tilde{C}(1+ u^2)
\end{align*}
where $\tilde{C}$ is a constant which will be enlarged appropriately in the following. Therefore $A(u)\in H^2(S_R)$, and 
\begin{equation}
\| A(u) \|_{H^2(S_R)}\leq \tilde{C}( 1+ \| u \|_{H^1_0(S_R)}).
\end{equation}
It follows easily that $A$ is continuous and compact. Furthermore if $u=t A(u)$, for $0\leq t \leq 1$, we have
\begin{equation}
\nabla \cdot (e^s \nabla u) + t (Ke^{-s}g(u+C) +M) = 0
\end{equation}
weakly. Therefore for some $c>0$
\begin{align*}
& c\int_{S_R}| \nabla u |^2\\
\leq & \int_{S_R}e^s |\nabla u |^2\\
= & t \int _{S_R}Ke^{-s}g(u+C)u + Mu.\\
\end{align*}
Notice that $g(u+C)\leq C^q + \tilde{C}(u^q+C^q)$,
\begin{align*}
& c\int_{S_R}| \nabla u |^2\\
\leq & \tilde{C}(1+\int_{S_R}u^{q+1}+\tilde{C}Mu)\\
\leq & \tilde{C}(1+C(\epsilon) + \epsilon \int_{S_R}u^2)\\
\leq & \tilde{C}(C(\epsilon) + \epsilon \|u\|_{H^1_0(S_R)}^2).
\end{align*}
Here the constants $\tilde{C}$ and $C(\epsilon)$ are enlarged appropriately from line to line. Let us now choose $\epsilon$ so small that $\tilde{C}\epsilon < \frac{c}{2}$. It follows that $\big\{ u ~\big|~ u=t A(u), 0\leq t \leq 1  \big\}$ is bounded in $H^1_0(S_R)$. Therefore there exists a $u$ in $H^1_0(S_R)$ solving \eqref{chap4: weakened equation}. By the Sobolev imbedding theorem, $u\in H^2(S_R) \subset W^{1,6}(S_R) \subset C^{0,\frac{1}{2}}(\overline{S_R})$. Since
\begin{align*}
& |g(u(x)+C)-g(u(y)+C)| \\
& \leq |g'(\theta)||u(x)-u(y)|\\
& \leq \max (2C^{q-1},q(C+\|u\|_{C^0(\overline{S_R})})^{q-1})[u]_{0,\frac{1}{2};S_R}|x-y|^{\frac{1}{2}},
\end{align*}
where $\theta$ is between $u(x)+C$ and $u(y)+C$, it follows that $g(u+C)\in C^{0,\frac{1}{2}}(\overline{S_R})$. Elliptic regularity estimates imply $u\in C^{2,\frac{1}{2}}(\overline{S_R})$, and an iteration of the regularity estimates imply that $u$ is smooth. Now by the classical maximum principle, $u\geq 0$ on $S_R$, therefore $u$ solves
\begin{equation}
\nabla \cdot (e^{s}\nabla u) + Ke^{-s}(u+C)^q + M = 0.
\end{equation}
Hence
\begin{equation}
\nabla \cdot (e^{s}\nabla (u+C)) + Ke^{-s}(u+C)^q - f \leq 0.
\end{equation}
Letting $\bar{u}=u+C$ completes the proof.
\end{proof}

\begin{proof}[Proof of theorem \ref{chap4: solution 0<q<1}]
It follows from lemma \ref{chap4: subsolution}, lemma \ref{chap4: supersolution}, and a standard construction (see Smoller \cite{smoller1983shock}) that a solution to \eqref{chap4: eq: main eq 1} exists. The construction also guarantees the resulting solution to be axisymmetric if $\underline{u}$ is.
\end{proof}

\section{Variational Formulation}
\label{chap4: variation}

The main purpose of this section is to show existence of minimizer of the following energy functional:
\begin{equation}\label{chap4: energy functional}
E(w)=\int_{\mathbb{R}^3}\bigg(\frac{e^{ s}}{2}|\nabla w|^2-\frac{K}{q+1}w^{q+1}e^{- s}\bigg) ~d\mathbf{x}
\end{equation}
subject to the constraint:
\begin{equation}\label{chap4: constraint}
N(w)=\int_{\mathbb{R}^3}f w ~d\mathbf{x} =P.
\end{equation}
where $f$ is assumed to be locally bounded, and
\begin{equation}\label{chap4: f bound}
f\geq c >0.
\end{equation}
We take the set $W_P$ of admissible functions to be 
\begin{align}\label{chap4: eq: admissible class}
H^1(\mathbb{R}^3)\cap L^1(\mathbb{R}^3) \cap \big\{w: \mathbb{R}^3\to \mathbb{R}, ~w\geq 0\ \text{a.e.}, w\ \text{is axisymmetric}, N(w)=P\big\}.
\end{align}
In fact, one has
\begin{prop}\label{chap4: thm: existence of minimizer, infinite}
If $1< q<3$, there exists a minimizer in $W_P$ of the energy functional $E$ for $P$ sufficiently large.
\end{prop}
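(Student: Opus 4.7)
The approach is the direct method of the calculus of variations on $W_P$, supplemented by P.-L.~Lions' concentration-compactness principle to handle the loss of compactness on $\mathbb{R}^3$. I would first set up the a priori estimates. The constraint $N(w)=P$ combined with $f\geq c>0$ and $w\geq 0$ immediately gives $\|w\|_{L^1(\mathbb{R}^3)}\leq P/c$, which, together with the Gagliardo--Nirenberg inequality
\[
\|w\|_{L^{q+1}}^{q+1}\leq C\,\|\nabla w\|_{L^2}^{6q/5}\,\|w\|_{L^1}^{(5-q)/5}
\]
and the uniform bounds coming from boundedness of $s$, lets one control the nonlinear term of $E$ by the kinetic term (up to constants depending on $P$). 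Strict negativity of the infimum $I(P):=\inf_{W_P} E$ for $P$ sufficiently large is demonstrated by testing $E$ on a scalar multiple $cw_0$ of a fixed $w_0\in W_{P_0}$ with $c=P/P_0$: the expression $E(cw_0)=c^2 A-c^{q+1} B$ becomes strictly negative for $c$ large, since $q+1>2$.

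From a minimizing sequence $\{w_n\}\subset W_P$ with $E(w_n)\to I(P)$, these a priori estimates deliver boundedness in $H^1(\mathbb{R}^3)\cap L^1(\mathbb{R}^3)$, hence a weakly convergent subsequence $w_n\rightharpoonup w$. To identify the weak limit as a minimizer, I would apply Lions' concentration-compactness lemma to the nonnegative finite measures $\mu_n:=fw_n$, each of total mass $P$. Vanishing is excluded because it would force $\|w_n\|_{L^{q+1}}\to 0$ through the Gagliardo--Nirenberg estimate, whence $\liminf E(w_n)\geq 0$, contradicting $I(P)<0$. Dichotomy is excluded by establishing the strict sub-additivity
\[
I(P)<I(P_1)+I(P_2) \quad \text{for all } P_1,P_2>0 \text{ with } P_1+P_2=P,
\]
via a scaling argument exploiting $q+1>2$. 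The surviving compactness alternative provides tightness of $\mu_n$ and, via local Rellich compactness, strong convergence $w_n\to w$ in $L^{q+1}(\mathbb{R}^3)$ along a further subsequence. Weak lower semicontinuity of the kinetic term combined with strong $L^{q+1}$ convergence of the nonlinear term then yields $E(w)\leq\liminf E(w_n)=I(P)$; the constraint $N(w)=P$ passes to the limit by narrow convergence of $\mu_n$; nonnegativity and axisymmetry of $w$ are inherited trivially; so $w\in W_P$ is a minimizer.

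The step I expect to be the main obstacle is the strict sub-additivity used to rule out dichotomy. Its proof requires a careful scaling argument showing that concentrating the total ``mass'' $P$ into one piece is strictly more favourable than splitting it into two pieces sent to infinite separation, and it is precisely here that the assumption of $P$ being sufficiently large enters the analysis in an essential way. A secondary delicacy is the a priori $H^1$-boundedness of the minimizing sequence for $q$ close to $3$, where the exponent $6q/5$ on the right of the Gagliardo--Nirenberg estimate exceeds $2$ and the naive coercivity bound fails; making this step work requires more refined use of the $L^1$ constraint together with the negativity of $I(P)$, and is another point where the specific hypothesis ``$P$ large'' plays a role.
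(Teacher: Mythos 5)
Your outline reproduces the paper's structure: direct method, Lions' concentration--compactness applied to $\mu_n = fw_n$ (with axisymmetry used to confine the translates to the $\mathbf{e}_3$-axis), the scaling inequality $I_{\theta P}\leq \theta^{q+1}I_P$ to obtain $I_P<0$ for large $P$ and to rule out dichotomy via the elementary inequality $1-\lambda_1^{q+1}-\lambda_2^{q+1}\geq 2\lambda_1\lambda_2$, and Rellich plus tightness to upgrade weak convergence to strong $L^{q+1}$ convergence in the compactness alternative. Structurally you are aligned with the paper's Lemmas 4.1--4.6 and their assembly.

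The coercivity concern you call a ``secondary delicacy'' deserves more than a hedge; it is fatal on part of the stated range. With the $L^1$ endpoint of Gagliardo--Nirenberg one has $a(q+1)=6q/5$, which is $\geq 2$ precisely when $q\geq 5/3$, so the assertion (made in the paper's bounded-from-below lemma, and implicit in your a priori estimates) that $q<3$ gives $a(q+1)<2$ is incorrect on $5/3\leq q<3$. More to the point, this is not a deficiency of the estimate but of the problem: set $w_\lambda(\mathbf{x})=\lambda^3 w_0(\lambda\mathbf{x})$ for a fixed nonnegative axisymmetric $w_0\in C_c^\infty(\mathbb{R}^3)$. Then $\|w_\lambda\|_{L^1}=\|w_0\|_{L^1}$, $\|\nabla w_\lambda\|_{L^2}^2=\lambda^5\|\nabla w_0\|_{L^2}^2$, and $\|w_\lambda\|_{L^{q+1}}^{q+1}=\lambda^{3q}\|w_0\|_{L^{q+1}}^{q+1}$. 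Since $s$ is bounded and $f$ is locally bounded with $f\geq c>0$, $N(w_\lambda)$ stays within fixed positive bounds, so a bounded multiplicative rescaling restores $N=P$; thereafter $E\sim \lambda^5 - \lambda^{3q}\to -\infty$ as $\lambda\to\infty$ whenever $3q>5$. Hence $I_P=-\infty$ for $5/3<q<3$, no minimizer exists, and no ``more refined use of the $L^1$ constraint together with the negativity of $I(P)$'' can rescue the argument. Your outline, like the paper's proof, is sound only on the subrange $1<q<5/3$ (equivalently $8/5<\gamma<2$), not on the full interval $1<q<3$.
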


We will apply this proposition to construct solutions to \eqref{chap4: eq: non-isentropic vector rotating star} when $1<q<3$ and the domain is infinite.

The proof will need a bound of the $L^{q+1}$ norm by the $L^p$ norm and the $L^2$ norm of the derivative. We will only concern ourselves with the case in $\mathbb{R}^3$. This is given by the following inequality (see, for example, \cite{nirenberg2011elliptic}).
\begin{prop}[Gagliaro-Nirenberg inequality]
Let $1\leq p < 6$, $p\leq q+1 \leq 6$. If $w\in L^p(\mathbb{R}^3) \cap H^1(\mathbb{R}^3)$, then $\exists C>0$, such that
\begin{equation*}
\|w\|_{L^{q+1}(\mathbb{R}^3)}\leq C\|\nabla w\|_{L^2(\mathbb{R}^3)}^a \|w\|_{L^p(\mathbb{R}^3)}^{1-a}.
\end{equation*} 
If $w\in L^p(\mathbb{R}^3\setminus S_R) \cap H^1(\mathbb{R}^3\setminus S_R)$, where $S_R$ is the ball centered at the origin with radius $R>R_0>0$, then $\exists C(R_0)>0$, such that
\begin{equation*}
\|w\|_{L^{q+1}(\mathbb{R}^3\setminus S_R)}\leq C\|\nabla w\|_{L^2(\mathbb{R}^3\setminus S_R)}^a \|w\|_{L^p(\mathbb{R}^3\setminus S_R)}^{1-a}.
\end{equation*} 
In both of these inequalities, 
\begin{equation*}
a=\frac{\frac{1}{p}-\frac{1}{q+1}}{\frac{1}{p}-\frac{1}{6}}.
\end{equation*}
\end{prop}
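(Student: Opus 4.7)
My plan is to deduce both inequalities from two ingredients: the critical Sobolev embedding $H^1(\mathbb{R}^3) \hookrightarrow L^6(\mathbb{R}^3)$, and Hölder interpolation between $L^p$ and $L^6$. Since the exponent $a$ in the statement is exactly the one that makes $\frac{1}{q+1}$ the convex combination $(1-a)\cdot\frac{1}{p} + a\cdot\frac{1}{6}$, this is the natural approach.

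First, on the whole space, I would recall the Sobolev inequality in $\mathbb{R}^3$: there exists a constant $C_S>0$ such that $\|u\|_{L^6(\mathbb{R}^3)} \leq C_S \|\nabla u\|_{L^2(\mathbb{R}^3)}$ for all $u \in H^1(\mathbb{R}^3)$. Next, given $1\leq p \leq q+1 \leq 6$, set $\theta = 1-a$ so that
\begin{equation*}
\frac{1}{q+1}=\frac{\theta}{p}+\frac{1-\theta}{6},
\end{equation*}
which inverts to give precisely the formula for $a$ displayed in the proposition, and satisfies $0\leq \theta \leq 1$ under the stated range. Writing $|w|^{q+1}=|w|^{\theta(q+1)}|w|^{(1-\theta)(q+1)}$ and applying Hölder's inequality with conjugate exponents $\frac{p}{\theta(q+1)}$ and $\frac{6}{(1-\theta)(q+1)}$ yields
\begin{equation*}
\|w\|_{L^{q+1}(\mathbb{R}^3)}^{q+1} \leq \|w\|_{L^p(\mathbb{R}^3)}^{\theta(q+1)}\,\|w\|_{L^6(\mathbb{R}^3)}^{(1-\theta)(q+1)}.
\end{equation*}
Taking $(q+1)$-th roots and substituting the Sobolev inequality in the $L^6$ factor gives the desired estimate with $C=C_S^{\,a}$.

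For the exterior-ball version, the interpolation step is unchanged; the only issue is that the Sobolev inequality has to be applied on $\mathbb{R}^3 \setminus S_R$ with a constant depending only on $R_0$, not on $R$ itself. I would obtain this by constructing an extension operator $E: H^1(\mathbb{R}^3\setminus S_R) \to H^1(\mathbb{R}^3)$ whose norm is controlled uniformly for $R>R_0$. A convenient choice is reflection across the sphere $\partial S_R$ (the analogue of the half-space reflection extension), combined with a smooth cutoff supported in a slightly larger ball. After rescaling the radius to $1$, the resulting extension constant is a fixed universal number, and since the Sobolev inequality $\|u\|_{L^6(\mathbb{R}^3)} \leq C_S\|\nabla u\|_{L^2(\mathbb{R}^3)}$ and the $L^p$/$L^{q+1}$ interpolation are all scale-invariant in the sense that both sides transform identically under $w \mapsto w(\lambda \cdot)$, the estimate rescales back to general $R>R_0$ without loss. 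The only place where $R_0$ enters is to keep the rescaled extension operator from degenerating; thus the final constant depends on $R_0$ but not on $R$.

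The main technical point is therefore the uniformity in $R$ of the extension-operator norm, which could also be handled by cruder means (e.g. reflecting through a fixed slightly larger sphere and noting that the intermediate annulus stays bounded in a manner controlled by $R_0$). Once uniformity is established, the interpolation argument of the first part transfers verbatim, and both inequalities follow.
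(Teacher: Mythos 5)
The paper does not prove this proposition; it simply cites \cite{nirenberg2011elliptic}, so there is no in-text argument to compare against. Judged on its own terms, your whole-space argument is correct and is the standard one: write $\tfrac{1}{q+1}=\tfrac{\theta}{p}+\tfrac{1-\theta}{6}$ with $\theta=1-a$, interpolate $\|w\|_{L^{q+1}}\leq\|w\|_{L^p}^{\theta}\|w\|_{L^6}^{1-\theta}$ by H\"older, and then apply the critical Sobolev inequality on the $L^6$ factor. The exponent bookkeeping matches the stated $a$, and the hypotheses $1\leq p<6$, $p\leq q+1\leq 6$ put $a\in[0,1]$.

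The exterior-ball case, however, has a genuine gap as written. You propose a ``reflection across $\partial S_R$ combined with a smooth cutoff'' and then invoke scale invariance. The trouble is that the cutoff term in $\nabla(\eta\tilde w)=\eta\nabla\tilde w+\tilde w\nabla\eta$ produces $\tfrac{1}{R}\|\tilde w\|_{L^2(A)}$ on the transition annulus $A$, and this is a lower-order term in $w$ itself, not $\nabla w$. It cannot be absorbed into $\|\nabla w\|_{L^2}^a\|w\|_{L^p}^{1-a}$ without circularity (for instance $\tfrac{1}{R}\|\tilde w\|_{L^2(A)}\lesssim\|\tilde w\|_{L^6(A)}$, but bounding an $L^6$ norm on the exterior domain is exactly what the Sobolev step is supposed to deliver). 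So the crude ``reflect and cut off'' extension gives an inhomogeneous $H^1$-type bound, which is weaker than what the proposition requires. The clean fix is the one you almost name: use the spherical inversion $\tilde w(x)=w(R^2x/|x|^2)$ for $0<|x|<R$. Since $w\in L^p(\mathbb{R}^3\setminus S_R)$, the extended function tends to $0$ at the origin and no cutoff is needed; moreover a direct change of variables shows $\|\tilde w\|_{L^p(S_R)}\leq\|w\|_{L^p(\mathbb{R}^3\setminus S_R)}$ and $\|\nabla\tilde w\|_{L^2(S_R)}\leq\|\nabla w\|_{L^2(\mathbb{R}^3\setminus S_R)}$, with constants $1$ and hence independent of $R$. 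Applying your whole-space argument to $\tilde w$ then gives the exterior inequality, and because every quantity appearing scales consistently, the resulting constant does not actually depend on $R$ (so, in particular, it is uniform for $R>R_0$ as claimed). With that substitution, your outline closes.
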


Notice when $q\leq 5$, $0<a\leq 1$. This is the useful range of exponents for us. With the Gagliardo-Nirenberg inequality, we can show that $E$ is bounded from below on $W_P$. 

\begin{lemma}\label{chap4: bounded from below lemma}
Suppose $w\in L^1(\mathbb{R}^3) \cap H^1(\mathbb{R}^3)$, $N(w)=P$, and $q<3$, then there exists a constant $C$ depending only on $P$, such that 
\begin{equation*}
E(w)\geq \frac{1}{2}\int_{\mathbb{R}^3}\frac{e^{ s}}{2}|\nabla w|^2~d\mathbf{x} -C.
\end{equation*}
\end{lemma}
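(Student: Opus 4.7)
The strategy is to estimate the negative $w^{q+1}$ contribution in $E(w)$ in terms of $\|\nabla w\|_{L^2}$ and an $L^1$-norm of $w$ controlled by $P$, and then absorb it into a small multiple of the positive kinetic term via Young's inequality.

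First, I would use the constraint $\int f w\, d\mathbf{x} = P$ together with $f \geq c > 0$ and $w \geq 0$ to get the a priori bound $\|w\|_{L^1(\mathbb{R}^3)} \leq P/c$. The standing hypothesis that $s$ is bounded makes $e^{\pm s}$ comparable to positive constants, so up to harmless multiplicative factors it suffices to bound $\int w^{q+1}\, d\mathbf{x}$ in terms of $\|\nabla w\|_{L^2}$ and $P$.

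Second, I would invoke the Gagliardo--Nirenberg inequality recalled just before the lemma with $p=1$, which yields
$$\|w\|_{L^{q+1}(\mathbb{R}^3)}^{q+1} \leq C\,\|\nabla w\|_{L^2(\mathbb{R}^3)}^{6q/5}\,\|w\|_{L^1(\mathbb{R}^3)}^{(5-q)/5} \leq C\,P^{(5-q)/5}\,\|\nabla w\|_{L^2(\mathbb{R}^3)}^{6q/5}.$$
When the gradient exponent $6q/5$ is strictly less than $2$, Young's inequality produces, for any $\epsilon > 0$, $\|\nabla w\|_{L^2}^{6q/5} \leq \epsilon\,\|\nabla w\|_{L^2}^{2} + C(\epsilon)$. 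Choosing $\epsilon$ small enough (depending on $K$, $q$, $\|s\|_\infty$, and $P$) so that the resulting coefficient of $\|\nabla w\|_{L^2}^2$ is dominated by $\tfrac14 e^{-\|s\|_\infty}$, the $w^{q+1}$ contribution is absorbed into a quarter of $\int e^s|\nabla w|^2\, d\mathbf{x}$; together with the remaining half of the kinetic term in $E(w)$ this gives the claimed bound $E(w) \geq \tfrac{1}{2}\int \tfrac{e^s}{2}|\nabla w|^2\,d\mathbf{x} - C(P)$.

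The main obstacle is that the Gagliardo--Nirenberg exponent $6q/5$ is strictly less than $2$ only when $q < 5/3$, whereas the lemma allows the full range $q < 3$. Extending the argument to $5/3 \leq q < 3$ is the delicate step: it seems to require either a refined interpolation that exploits axisymmetry to improve the effective Sobolev exponent, or a more careful balancing of the $L^1$ bound coming from $P$ against the Sobolev control coming from $\int |\nabla w|^2\,d\mathbf{x}$. I would expect this extension to be the technically most subtle part of the proof.
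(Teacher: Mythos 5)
Your strategy is exactly the paper's: bound the $L^1$ norm by $P/c$ via the constraint, interpolate $\|w\|_{L^{q+1}}^{q+1}$ between $\|\nabla w\|_{L^2}$ and $\|w\|_{L^1}$ using Gagliardo--Nirenberg with $p=1$, and absorb the subquadratic power of $\|\nabla w\|_{L^2}$ into the kinetic term by Young's inequality. Your computation of the gradient exponent $a(q+1) = 6q/5$ is correct, and your observation that this is strictly less than $2$ only for $q < 5/3$ is the crux of the matter.

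Here is the important point: the paper's own proof has exactly the gap you flagged. It asserts ``Since $q<3$, an easy calculation shows $a(q+1)<2$,'' but with $p=1$ one has $a = \frac{6q}{5(q+1)}$ so $a(q+1) = \frac{6q}{5}$, which equals $2$ at $q = 5/3$, not $q=3$. You should not expend effort trying to rescue the argument for $5/3 \le q < 3$, because the lemma as stated is in fact \emph{false} in that range. Take $f$ and $s$ constant, and set $w_\lambda(\mathbf{x}) = \lambda^3 w(\lambda\mathbf{x})$ for a fixed nonnegative axisymmetric $w$. Then $\int w_\lambda\,d\mathbf{x} = \int w\,d\mathbf{x}$ (so the constraint $N(w_\lambda) = P$ is preserved), $\int |\nabla w_\lambda|^2\,d\mathbf{x} = \lambda^5 \int |\nabla w|^2\,d\mathbf{x}$, and $\int w_\lambda^{q+1}\,d\mathbf{x} = \lambda^{3q}\int w^{q+1}\,d\mathbf{x}$. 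For $q > 5/3$ one has $3q > 5$, so
\begin{equation*}
E(w_\lambda) - \frac{1}{2}\int \frac{e^s}{2}|\nabla w_\lambda|^2\,d\mathbf{x} = \frac{\lambda^5}{4}\int e^s |\nabla w|^2\,d\mathbf{x} - \frac{K\lambda^{3q}}{q+1}\int w^{q+1}e^{-s}\,d\mathbf{x} \longrightarrow -\infty
\end{equation*}
as $\lambda \to \infty$, contradicting the claimed lower bound. Axisymmetry is preserved under this scaling, so the symmetry constraint in $W_P$ does not help; nor does a smarter choice of interpolation parameter $p$, since any chain of interpolations from $(\|w\|_{L^1}, \|\nabla w\|_{L^2})$ to $\|w\|_{L^{q+1}}$ must by dimensional analysis produce the same exponent $6q/5$. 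The lemma (and hence the variational construction built on it) only goes through for $1 < q < 5/3$, equivalently $\gamma > 8/5$, unless the constraint $N(w) = P$ is strengthened (e.g.\ $f$ growing at infinity would give more than $L^1$ control) or the functional is modified.
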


\begin{proof}
Since $s$ is bounded,
\begin{equation*}
\int w^{q+1}e^{-s} ~d\mathbf{x} \leq C\int w^{q+1} ~d\mathbf{x} = C\|w\|_{q+1}^{q+1}.
\end{equation*}
By the Gagliardo-Nirenberg inequality, we have 
\begin{align*}
& C\|w\|_{q+1}^{q+1}\\
\leq &C\|\nabla w\|_{L^2}^{a(q+1)}\|w\|_{L^1}^{(1-a)(q+1)}\\
\leq &C(P)\|\nabla w \|_{L^2}^{a(q+1)}.
\end{align*}
The last inequality follows from the boundedness of $s$, \eqref{chap4: constraint}, and \eqref{chap4: f bound}.

Since $q<3$, an easy calculation shows $a(q+1)<2$. By an elementary inequality we have
\begin{align*}
&C(P)\|\nabla w \|_{L^2}^{a(q+1)}\\
\leq &\tilde{C}(P,\epsilon)+\epsilon \| \nabla w\|_{L^2}^2\\
\leq &\tilde{C}(P,\epsilon)+\epsilon \int |\nabla w |^2 ~d\mathbf{x}\\
\leq &\tilde{C}(P,\epsilon)+C'\epsilon \int \frac{e^{ s}}{2}|\nabla w|^2 ~d\mathbf{x}.
\end{align*}
Therefore,
\begin{equation*}
E(w)\geq (1-C'\epsilon) \int \frac{e^{ s}}{2}|\nabla w|^2~d\mathbf{x}  -\tilde{C}(P,\epsilon).
\end{equation*}
Choose $\epsilon$ so small that $(1-C'\epsilon)>\frac{1}{2}$, the assertion is established.
\end{proof}

Let us define:
\begin{equation}\label{chap4: infimum}
I_P=\inf_{w\in W_P}\{E(w)| N(w)=P\}.
\end{equation} 
lemma \ref{chap4: bounded from below lemma} shows that $I_P>-\infty$. We can quickly find a few useful scaling inequalities on $I_P$.

\begin{lemma}\label{chap4: scaling inequality}
Suppose $q> 1$. Given $s$ and $f$, $I_P<0$ for $P$ sufficiently large. If $P'>P>0$, then $I_{P'}\leq\bigg(\frac{P'}{P}\bigg)^{q+1}I_P$.
\end{lemma}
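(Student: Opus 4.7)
The proof plan is to exploit the homogeneity of the two terms in $E$ under scalar multiplication: the Dirichlet term scales like $\lambda^2$, the nonlinear potential term like $\lambda^{q+1}$, and since $q+1>2$, the nonlinear term dominates for large $\lambda$. Both assertions follow from this observation, so I would organize the proof as two short calculations.

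For the first claim ($I_P<0$ when $P$ is large), I would fix once and for all an axisymmetric, nonnegative, smooth, compactly supported function $w_0$ with $N(w_0)>0$ (for instance a radial bump in $r^2+z^2$; positivity of $N(w_0)$ is automatic from $f\geq c>0$). For $P>0$, set $\mu=P/N(w_0)$ so that $\mu w_0\in W_P$. Then
\begin{equation*}
E(\mu w_0)=\mu^2\!\int_{\mathbb{R}^3}\frac{e^{s}}{2}|\nabla w_0|^2\,d\mathbf{x}-\mu^{q+1}\!\int_{\mathbb{R}^3}\frac{K}{q+1}w_0^{q+1}e^{-s}\,d\mathbf{x}.
\end{equation*}
Both integrals are finite positive constants depending on $w_0$ and $s$. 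Since $q+1>2$, the right-hand side tends to $-\infty$ as $\mu\to\infty$, hence is negative for all $\mu$ sufficiently large. Therefore $I_P\leq E(\mu w_0)<0$ once $P$ is large enough.

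For the second claim, let $\lambda=P'/P>1$. The map $w\mapsto \lambda w$ sends $W_P$ into $W_{P'}$ (the $L^1$, $H^1$, nonnegativity, and axisymmetry properties are preserved, and $N(\lambda w)=\lambda N(w)=P'$). A direct computation gives
\begin{equation*}
E(\lambda w)-\lambda^{q+1}E(w)=(\lambda^2-\lambda^{q+1})\int_{\mathbb{R}^3}\frac{e^{s}}{2}|\nabla w|^2\,d\mathbf{x}.
\end{equation*}
Since $\lambda>1$ and $q+1>2$, the coefficient $\lambda^2-\lambda^{q+1}$ is nonpositive and the integral is nonnegative, so $E(\lambda w)\leq \lambda^{q+1}E(w)$. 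Taking the infimum over $w\in W_P$ yields $I_{P'}\leq \lambda^{q+1}I_P=(P'/P)^{q+1}I_P$, as claimed.

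There is no serious obstacle here; the only point to watch is that the scaling inequality becomes informative precisely in the regime where $I_P<0$ (so that multiplying by the factor $(P'/P)^{q+1}>1$ decreases the right-hand side), which is why the first part is stated and proved first.
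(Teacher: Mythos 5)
Your proof is correct and takes essentially the same scaling/homogeneity approach as the paper: both exploit that the Dirichlet term is degree $2$ and the potential term degree $q+1$ in $w$, with $q+1>2$. Your phrasing via the pointwise identity $E(\lambda w)-\lambda^{q+1}E(w)=(\lambda^2-\lambda^{q+1})\int\frac{e^s}{2}|\nabla w|^2$ followed by taking infima is a minor, slightly cleaner rearrangement of the paper's direct rewriting of the infimum, not a genuinely different argument.
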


\begin{proof}
Notice in \eqref{chap4: constraint} that $N(w)$ is linear in w. We have for $\theta >1$,
\begin{align*}
I_{\theta P}&=\inf\big\{ E(w) ~\big|~ N(w)=\theta P\big\}\\
&=\inf\big\{ E(\theta w) ~\big|~ N(w)= P\big\}\\
&=\inf\big\{ \int \frac{e^{s}}{2} \theta ^2 |\nabla w|^2-\frac{K}{q+1}\theta ^{q+1}w^{q+1}e^{- s} |N(w)=P  \}.
\end{align*}
Now observe that 
\begin{equation*}
\int w^{q+1} e^{- s} >0
\end{equation*}
and the term with the coefficient $\theta ^{q+1}$ will dominate as $\theta$ increases, we can conclude that $I_{\theta P}<0$ if $\theta$ is sufficiently large.

Following the same line of reasoning,
\begin{align*}
I_{P'}&=\inf\big\{E(w) ~\big|~ N(w)=P'  \big\}\\
&=\inf\bigg\{ E\bigg(\bigg(\frac{P'}{P}\bigg) w\bigg) ~\big|~ N(w)= P\bigg\}\\
&=\inf\bigg\{ \int \frac{e^{s}}{2} \bigg(\frac{P'}{P}\bigg) ^2 |\nabla w|^2-\frac{K}{q+1}\bigg(\frac{P'}{P}\bigg) ^{q+1}w^{q+1}e^{- s} ~\big|~N(w)=P  \bigg\}\\
&=\bigg(\frac{P'}{P}\bigg)^{q+1}\inf\bigg\{ \int \frac{e^{ s}}{2} \bigg(\frac{P'}{P}\bigg) ^{1-q} |\nabla w|^2-\frac{K}{q+1}w^{q+1}e^{- s} ~\big|~N(w)=P  \bigg\}\\
&\leq\bigg(\frac{P'}{P}\bigg)^{q+1}\inf \bigg\{ \int \frac{e^{ s}}{2}|\nabla w |^2  -\frac{K}{q+1}w^{q+1}e^{- s} ~\big|~ N(w)=P\bigg\}\\
&=\bigg(\frac{P'}{P}\bigg) ^{q+1}I_P.
\end{align*}
We get the inequality because $P'>P$ and $q > 1$.
\end{proof}

We are now ready to introduce a concentration compactness principle due to Lions \cite{lions1984concentration}. This is the starting point of the existence argument.

\begin{lemma}\label{chap4: Lions lemma}
Let $\{w_n\}$ be a sequence in $L^1(\mathbb{R}^3)$ such that $w_n\geq 0$ a.e. Suppose $w_n$'s are axisymmetric, and $\int_{\mathbb{R}^3}f w_n ~d\mathbf{x}=P$. Then there exists a subsequence $\{w_{n_k}\}$ such that one of the following is true: 

\begin{enumerate}
\item {
$\exists\ \{a_k\}\in \mathbb{R}$ such that $\forall \epsilon >0, \exists R>0, K_0>0$ such that $\forall k>K_0$
\begin{equation*}
P\geq \int_{a_k \mathbf{e}_3 +S_R}f w_{n_k}~d\mathbf{x}\geq P-\epsilon.
\end{equation*}
}
\item{
$\forall R>0$
\begin{equation*}
\lim_{k\to \infty}\sup_{ \mathbf{y}\in \mathbb{R}^3}\int_{\mathbf{y}+S_R}f w_{n_k}~d\mathbf{x}=0.
\end{equation*}
}
\item{
$\exists \lambda \in(0,P),\forall \epsilon>0, \exists R_0>0, \{a_k\}\in \mathbb{R}, \forall R>R_0, \exists k_0>0, \forall k>k_0$:
\begin{align*}
\int_{a_k \mathbf{e}_3 +S_R}fw_{n_k} ~d\mathbf{x} &>\lambda - \epsilon ,\\
\int_{a_k \mathbf{e}_3 +S_{2R}}f w_{n_k} ~d\mathbf{x} &< \lambda + \epsilon .\\
\end{align*}
}
\end{enumerate}
\end{lemma}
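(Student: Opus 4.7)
The plan is to adapt Lions' concentration-compactness principle to the axisymmetric setting by working with a L\'evy-type concentration function whose centers are restricted to the $z$-axis. Define
$$\tilde{Q}_n(R)=\sup_{a\in\mathbb{R}}\int_{a\mathbf{e}_3+S_R}f w_n\,d\mathbf{x}.$$
Each $\tilde{Q}_n$ is nondecreasing in $R$, bounded above by $P$, and tends to $P$ as $R\to\infty$ (take $a=0$ and apply monotone convergence). By Helly's selection theorem I pass to a subsequence $\{w_{n_k}\}$ along which $\tilde{Q}_{n_k}(R)\to \tilde{Q}(R)$ pointwise at every continuity point of the nondecreasing limit $\tilde{Q}$, and set $\lambda=\lim_{R\to\infty}\tilde{Q}(R)\in[0,P]$. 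The three alternatives of the lemma correspond to $\lambda=P$, $\lambda=0$, and $0<\lambda<P$.

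In the compactness case $\lambda=P$, given $\epsilon>0$ I pick a continuity point $R$ with $\tilde{Q}(R)>P-\epsilon/2$; then for $k$ large $\tilde{Q}_{n_k}(R)>P-\epsilon$, and the definition of the supremum yields $a_k\in\mathbb{R}$ with $\int_{a_k\mathbf{e}_3+S_R}f w_{n_k}\,d\mathbf{x}>P-\epsilon$, which is alternative (1). In the dichotomy case $0<\lambda<P$, I pick a continuity point $R_0$ with $\tilde{Q}(R_0)>\lambda-\epsilon/2$ together with a corresponding $a_k$; the lower bound $\int_{a_k\mathbf{e}_3+S_R}f w_{n_k}>\lambda-\epsilon$ for $R>R_0$ follows by monotonicity of the integral in the ball radius, while the upper bound $\int_{a_k\mathbf{e}_3+S_{2R}}f w_{n_k}\le\tilde{Q}_{n_k}(2R)\to\tilde{Q}(2R)\le\lambda$ gives alternative (3) by choosing $k$ large and $R$ at a continuity point of $\tilde{Q}$. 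The vanishing case $\lambda=0$ produces the axial vanishing $\tilde{Q}_{n_k}(R)\to 0$ for every $R$ immediately; the real work is to upgrade this to vanishing over all $\mathbf{y}\in\mathbb{R}^3$.

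That upgrade is the main obstacle, and is where the axisymmetry of $f w_{n_k}$ is essential. Suppose $\int_{\mathbf{y}+S_R}f w_{n_k}\,d\mathbf{x}\ge M>0$, and let $r_0$ denote the horizontal distance of $\mathbf{y}$ from the $z$-axis. If $r_0\le R$, then $\mathbf{y}+S_R\subset y_3\mathbf{e}_3+S_{2R}$, so $\tilde{Q}_{n_k}(2R)\ge M$ and axial vanishing already forces $M\to 0$. If $r_0>R$, I would rotate the ball $\mathbf{y}+S_R$ around the $z$-axis to form a torus $T\subset y_3\mathbf{e}_3+S_{R+r_0}$; writing $|x-y_3\mathbf{e}_3|^2=(r_0+u)^2+v^2\le(r_0+R)^2$ for $u^2+v^2\le R^2$ confirms the containment. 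A short calculation in cylindrical coordinates shows the angular extent of the ball at each fixed $(r,z)$ is of order $R/r_0$, so by axisymmetry the mass on the torus is at least a constant multiple of $(r_0/R)M$; since this is bounded by $P$, one obtains $r_0\le CPR/M$, and the original ball fits inside $y_3\mathbf{e}_3+S_{R'}$ with $R'$ depending only on $R$, $P$, and $M$. Axial vanishing at radius $R'$ then forces $M\to 0$, giving alternative (2). Aside from this axisymmetric reduction, the argument is just the standard Lions trichotomy applied to $\tilde{Q}_{n_k}$.
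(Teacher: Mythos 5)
Your approach is a legitimate variant of the paper's. The paper takes the L\'evy concentration function over all centers $\mathbf{y}\in\mathbb{R}^3$, $Q_n(t)=\sup_{\mathbf{y}}\int_{\mathbf{y}+S_t}fw_n\,d\mathbf{x}$, and then in cases (1) and (3) uses the torus argument --- rotating a ball about the $\mathbf{e}_3$-axis and exploiting axisymmetry of $fw_n$ --- to show the maximizing centers stay a bounded distance from the axis, so they can be projected onto it by enlarging the radius. You instead restrict the supremum to axis centers from the outset, which makes cases (1) and (3) immediate and relocates the torus argument to case (2), where it upgrades axis-restricted vanishing to vanishing over all $\mathbf{y}$. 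That is a clean reorganization, and the torus estimate you sketch is correct in substance; one small technical point is that the bound $r_0\lesssim PR/M$ via the angular-extent computation requires $r_0$ somewhat larger than $R$ (say $r_0>2R$), so for $R<r_0\le 2R$ you should fall back on the trivial containment $\mathbf{y}+S_R\subset y_3\mathbf{e}_3+S_{3R}$.

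The one genuine gap is in your treatment of case (1). The conclusion of the lemma requires a \emph{single} sequence $\{a_k\}$ that works for \emph{every} $\epsilon>0$, but in your argument $a_k$ is chosen after $\epsilon$, through the $R=R(\epsilon)$ at which $\tilde{Q}(R)>P-\epsilon/2$, so it may drift with $\epsilon$. The missing step is the one the paper carries out explicitly: fix $a_k=a_k(\epsilon_0)$ for a reference value, e.g.\ $\epsilon_0=P/2$ with radius $R_1=R(P/2)$, and show that for any smaller $\epsilon$ the center $a_k(\epsilon)$ must satisfy $|a_k(\epsilon)-a_k|<R_1+R(\epsilon)$, since otherwise the two balls would be disjoint and carry total $fw_{n_k}$-mass exceeding $P$. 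Then $a_k(\epsilon)\mathbf{e}_3+S_{R(\epsilon)}\subset a_k\mathbf{e}_3+S_{R_1+2R(\epsilon)}$, and the fixed sequence $\{a_k\}$ satisfies the conclusion with radius $R_1+2R(\epsilon)$. Case (3) does not need this because there the quantifier order allows $\{a_k\}$ to depend on $\epsilon$.
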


\begin{proof}
Denote $f w_n$ by $\rho_n$. Let $ Q_n(t)=\sup_{\mathbf{y}\in \mathbb{R}^3}\int_{\mathbf{y}+S_t}\rho_n ~d\mathbf{x}$.

$Q_n(t)$ is a sequence of nondecreasing, nonnegative, uniformly bounded functions on $\mathbb{R}^+$, and $\lim_{t\to +\infty}Q_n(t)=P$. By the Helly selection theorem, there exists a subsequence $Q_{n_k}(t)$, and a function $Q(t)$, such that $Q_{n_k}(t)\to Q(t)$ pointwise on $\mathbb{R}^+$. $Q(t)$ is hence non-decreasing and non-negative.

Let $\lambda=\lim_{n\to \infty}Q(t)\in [0,P]$.
\begin{enumerate}
\item{
If $\lambda =P$, then 
$\forall \epsilon>0$, $\exists R(\epsilon)>0$ such that $Q(R)>P-\frac{\epsilon}{2}$.

Since $ \lim_{k\to \infty}Q_{n_k}(R)=Q(R)$,
$\exists K_0(\epsilon)>0$, $\forall k>K_0(\epsilon)$: $Q_{n_k}(R)>P-\frac{\epsilon}{2}$.

Hence, $\exists \mathbf{y}_k(\epsilon)\in \mathbb{R}^3$ such that $\int_{\mathbf{y}_k(\epsilon)+S_R}\rho_{n_k}~d\mathbf{x}>P-\frac{\epsilon}{2}$.
Take $\mathbf{y}_k=\mathbf{y}_k\bigg(\frac{P}{2}\bigg)$. We claim that $ |\mathbf{y}_k(\epsilon)-\mathbf{y}_k|<R\bigg(\frac{P}{2}\bigg)+R(\epsilon)$ for $\epsilon$ small. If not,
\begin{align*}
\int_{\mathbb{R}^3}\rho_{n_k}~d\mathbf{x} &\geq \int_{\mathbf{y}_k+R(\frac{P}{2})}\rho_{n_k}~d\mathbf{x}+\int_{\mathbf{y}_k(\epsilon)+R(\epsilon)}\rho_{n_k}~d\mathbf{x} \\
&>P-\frac{P}{2}+P-\frac{\epsilon}{2} \\
&=\frac{3P}{2}-\frac{\epsilon}{2}>P\quad \text{if } \epsilon \text{ is small.}
\end{align*} 
Take $ R'(\epsilon)=2R(\epsilon)+R\bigg(\frac{P}{2}\bigg)$. By the previous inequality, we have
\begin{equation*}
\mathbf{y}_k+R'(\epsilon)\supset \mathbf{y}_k(\epsilon)+R(\epsilon).
\end{equation*}
Therefore,
\begin{equation*}
\int_{\mathbf{y}_k+B_{R'(\epsilon)}}\rho _{n_k}~d\mathbf{x}>P-\frac{\epsilon}{2}.
\end{equation*}
}
Take $a_k=\mathbf{y}_k\cdot \mathbf{e}_3$, and let $r(\mathbf{y})$ be the distance of $\mathbf{y}$ to the $\mathbf{e}_3$ axis. There must exist an $r_0$ such that $r(\mathbf{y}_k)\leq r_0$. Otherwise the integral of $\rho_{n_k}$ on the torus obtained from revolving $\mathbf{y}_k+S_{R(\frac{P}{2})}$ around the $\mathbf{e}_3$ axis will give
\begin{equation*}
\int_{T_k}\rho_{n_k}~d\mathbf{x} \geq C\bigg(P-\frac{P}{2}\bigg)r(\mathbf{y}_k)
\end{equation*}  
for some constant $C$. The right hand side is bounded because the left hand side is.

Let $R''(\epsilon)=R'(\epsilon)+r_0$, then
\begin{equation*}
\int_{a_k \mathbf{e}_3+B_{R''(\epsilon)}}\rho_{n_k}~d\mathbf{x}> P-\epsilon.
\end{equation*}

\item{
If $\lambda =0$, then $ \lim_{R\to \infty}Q(R)=0$, which implies $Q(R)\equiv 0$. The result follows immediately.
}

\item{
If $\lambda\in (0,P)$, since $ \lim_{t\to \infty}Q(t)=\lambda, \lim_{k\to \infty} Q_{n_k}(t)=Q(t)$, we know:

$\forall \epsilon>0, \exists R(\epsilon)>0, K_0>0, \forall k>K_0, R\geq R(\epsilon)$:
\begin{equation*}
Q_n(R)=\sup_{\mathbf{y}\in \mathbb{R}^3}\int_{\mathbf{y}+S_R}\rho_{n_k}~d\mathbf{x}> \lambda -\epsilon.
\end{equation*}
Let $ f_k(\mathbf{y})=\int_{\mathbf{y}+B_{R(\epsilon)}} \rho_{n_k}~d\mathbf{x}$. It is easy to verify that $f_k(\mathbf{y})$ is a continuous function. Consider the set $ \big\{\mathbf{y}~\big|~f_k(\mathbf{y})\geq \lambda-\epsilon\big\}$. This set is nonempty because $ \sup_{\mathbf{y}\in \mathbb{R}^3}f_k(\mathbf{y})> \lambda-\epsilon$, is closed by the continuity of $f_k$, and is bounded because the contrary will indicate that $ \rho_{n_k}$ has infinite mass. Therefore, there exists $\mathbf{y}_k\in \mathbb{R}^3$ such that 
\begin{equation*}
f_k(\mathbf{y}_k)=\int_{\mathbf{y}_k+S_{R(\epsilon)}} \rho_{n_k}~d\mathbf{x} =\sup_{\mathbf{y}\in \mathbb{R}^3}\int_{\mathbf{y}+B_{R(\epsilon)}}\rho_{n_k}~d\mathbf{x} > \lambda -\epsilon.
\end{equation*}
Also for any $R\geq R(\epsilon)$, we have
\begin{equation*}
\int_{\mathbf{y}_k+S_R} \rho_{n_k}~d\mathbf{x} > \lambda-\epsilon.
\end{equation*}
For the same reason as in case 1, there must be an $r_0=r_0(\epsilon)$ such that $r(\mathbf{y}_k)\leq r_0$. Let $a_k=\mathbf{y}_k\cdot \mathbf{e}_3$, and $R_0=R(\epsilon)+r_0$, $\forall R>R_0, k>K_0$,
\begin{equation*}
\int_{a_k\mathbf{e}_3+S_R}\rho_{n_k}~d\mathbf{x}>\lambda-\epsilon.
\end{equation*}
}
On the other hand, because $ \lim_{k\to \infty}Q_{n_k}(2R)\leq \lambda$, there must be a $k_0>K_0$ such that $\forall k>k_0$:
\begin{equation*}
Q_{n_k}(2R)<\lambda+\epsilon,
\end{equation*}
which implies
\begin{equation*}
\int_{a_k\mathbf{e}_3+S_{2R}}\rho_{n_k}~d\mathbf{x} <\lambda +\epsilon.
\end{equation*}
\end{enumerate}
This concludes the proof of the lemma.
\end{proof}

Intuitively, lemma \ref{chap4: Lions lemma} says that if we have a sequence of densities with fixed total mass, then the densities will either concentrate in a ball of radius $R$, or vanish as $n$ goes to infinity, or split up into at least two parts (with masses roughly $\lambda$ and $M-\lambda$) that escape infinitely far from each other as $n$ goes to infity. Our analysis in the following will show that case 2 and case 3 cannot happen, provided that the scaling inequalities hold. On the other hand, case 1 will force the existence of a minimizer.

\begin{lemma}\label{chap4: vanishing mass lemma}
Let $1< q<3$. If $w_n$ is bounded in $L^1(\mathbb{R}^3)\cap H^1(\mathbb{R}^3)$, $w_n\geq 0$ a.e., and if 
\begin{center}
$\exists R>0$,  $\lim_{n \to \infty} \sup_{\mathbf{y}\in \mathbb{R}^3}\int_{\mathbf{y}+S_R}w_n ~d\mathbf{x}\to 0$,
\end{center}
Then $\int_{\mathbb{R}^3}w_n^{q+1}~d\mathbf{x}\to 0$.
\end{lemma}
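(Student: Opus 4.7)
The plan is to prove this Lions-type vanishing lemma by the standard argument: cover $\mathbb{R}^3$ by balls of fixed radius with uniformly bounded overlap, apply a local Gagliardo--Nirenberg interpolation on each ball, and exploit the $H^1\cap L^1$ bound together with the vanishing hypothesis to sum the resulting local estimates.

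First I would strengthen the hypothesis from $L^1$-mass smallness to $L^2$-mass smallness on balls. Since $\{w_n\}$ is bounded in $H^1(\mathbb{R}^3)$, the local Sobolev embedding gives a uniform bound $\|w_n\|_{L^6(B_R(y))}\le C$, and interpolating
$$\|w_n\|_{L^2(B_R(y))}\le \|w_n\|_{L^1(B_R(y))}^{2/5}\|w_n\|_{L^6(B_R(y))}^{3/5}$$
turns the hypothesis into $\eta_n:=\sup_{y}\|w_n\|_{L^2(B_R(y))}\to 0$.

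Next I would cover $\mathbb{R}^3$ by balls $B_i=B_R(y_i)$ with uniformly bounded overlap $N$, and on each $B_i$ apply the Gagliardo--Nirenberg inequality interpolating $L^{q+1}$ between $L^2$ and $L^6$:
$$\|w\|_{L^{q+1}(B_i)}^{q+1}\le C\,\|w\|_{L^2(B_i)}^{(5-q)/2}\,\|w\|_{L^6(B_i)}^{3(q-1)/2},$$
followed by $\|w\|_{L^6(B_i)}\le C\|w\|_{H^1(B_i)}$. Summing, and using bounded overlap, I obtain
$$\int_{\mathbb{R}^3}w_n^{q+1}\,d\mathbf{x}\le \sum_i\int_{B_i}w_n^{q+1}\,d\mathbf{x}\le C\sum_i \|w_n\|_{L^2(B_i)}^{(5-q)/2}\|w_n\|_{H^1(B_i)}^{3(q-1)/2}.$$
The goal is then to factor out a positive power of $\eta_n$ and bound the remaining discrete sum by the two globally controlled quantities $\sum_i\|w_n\|_{L^2(B_i)}^2\le N\|w_n\|_{L^2}^2$ and $\sum_i\|w_n\|_{H^1(B_i)}^2\le N\|w_n\|_{H^1}^2$. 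Concretely, for $1<q\le 7/3$ one chooses Hölder exponents so that the $L^2(B_i)$ piece sums to $\sum\|w_n\|_{L^2(B_i)}^2$ and the $H^1(B_i)$ piece sums to $\sum\|w_n\|_{H^1(B_i)}^2$, leaving a factor $\eta_n^{q-1}\to 0$. For $7/3<q<3$ the exponent $3(q-1)/2$ already exceeds $2$, so one writes $\|w_n\|_{H^1(B_i)}^{3(q-1)/2}\le C\|w_n\|_{H^1(B_i)}^{2}$ using the global $H^1$ bound, extracts $\eta_n^{(5-q)/2}\to 0$ from the $L^2$ factor, and sums.

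The main technical obstacle is matching the Hölder exponents to the geometric summability on each ball so that only the globally bounded sums $\sum_i\|w_n\|_{L^2(B_i)}^2$ and $\sum_i\|w_n\|_{H^1(B_i)}^2$ appear after factoring $\eta_n$; this is where the restriction $q<3$ (equivalently $q+1<2^*=6$ in $\mathbb{R}^3$) is essential, and the small case distinction around $q=7/3$ reflects whether the $L^6$-exponent in the interpolation is below or above $2$. The lower bound $q>1$ is needed to guarantee that the extracted power of $\eta_n$ is strictly positive.
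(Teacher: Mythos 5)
Your proposal is correct, and it realizes the same broad strategy as the paper (local interpolation on a bounded-overlap cover, then sum), but the local and global bookkeeping steps are genuinely different. The paper's proof fixes $\alpha\in\big(\max\{3/2,\tfrac{2(q+1)}{3}\},q+1\big)$ and $\beta=(q+1)/\alpha\in(1,3/2)$, uses the embedding $W^{1,1}\hookrightarrow L^{\beta}$ on each ball to get
$\int_{\mathbf{y}+S_R}w^{q+1}\le C\big(\int_{\mathbf{y}+S_R}(w^{\alpha}+\alpha w^{\alpha-1}|\nabla w|)\big)^{\beta}$, shows by Cauchy--Schwarz and Gagliardo--Nirenberg (interpolating between $L^1$ and $\nabla\in L^2$) that $\epsilon_n:=\sup_{\mathbf y}\int_{\mathbf y+S_R}(w_n^{\alpha}+\alpha w_n^{\alpha-1}|\nabla w_n|)\to 0$, and then peels off $\epsilon_n^{\beta-1}$ so that what remains is \emph{linear} in $w_n^{\alpha}+\alpha w_n^{\alpha-1}|\nabla w_n|$ and simply sums over the cover via bounded overlap, with a global GN bound on the resulting $\mathbb R^3$-integral. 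Your version skips the $W^{1,1}$ detour: you first upgrade $L^1$-smallness on balls to $L^2$-smallness via $H^1\hookrightarrow L^6$ and $L^1$--$L^6$ interpolation, then interpolate $L^{q+1}$ directly between $L^2$ and $L^6$ on each ball, extract $\eta_n^{q-1}$ (or $\eta_n^{(5-q)/2}$), and control the remaining sum over the cover by a discrete H\"older inequality against $\sum_i\|w_n\|_{L^2(B_i)}^2$ and $\sum_i\|w_n\|_{H^1(B_i)}^2$. The price you pay is the small case split at $q=7/3$ (whether the $L^6$-exponent exceeds $2$), which the paper's linearize-then-sum device avoids; what you gain is a more standard interpolation chain and, arguably, cleaner exponent bookkeeping. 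Both arguments give the conclusion exactly in the range $1<q<3$, which is precisely the range needed for the admissible class here.
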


\begin{proof}
Fix $  \alpha \in \bigg(\max \bigg\{\frac{3}{2},\frac{2(q+1)}{3}\bigg\},q+1\bigg)$, and let $ \beta=\frac{q+1}{\alpha}$. We get $ 1<\beta<\frac{3}{2}$. For any $w\in L^1(\mathbb{R}^3)\cap H^1(\mathbb{R}^3)$, by Sobolev embedding $W^{1,1} \subset L^{\beta}$,
\begin{align}\label{chap4: Sobolev inequality}
&\int_{\mathbf{y}+S_R}w^{q+1}~d\mathbf{x} \notag \\
=&\int_{\mathbf{y}+S_R}w^{\alpha \beta}~d\mathbf{x} \notag \\
\leq & C(R)\bigg( \int_{\mathbf{y}+S_R}(w^{\alpha}+\alpha w^{\alpha-1}|\nabla w|) ~d\mathbf{x} \bigg)^{\beta} \notag \\
\leq & C(R)\bigg( \int_{\mathbf{y}+S_R}w^{\alpha} ~d\mathbf{x} +\alpha \big[\int_{\mathbf{y}+S_R}w^{2(\alpha -1)} ~d\mathbf{x} \big]^{\frac{1}{2}}\big[\int_{\mathbf{y}+S_R}|\nabla w|^2 \big]^{\frac{1}{2}}\bigg)^{\beta} \notag \\
= & C(R)(\|w\|_{L^{\alpha}(\mathbf{y}+S_R)}^{\alpha} +\alpha \|\nabla w\|_{L^2(\mathbf{y}+S_R)}\cdot \| w\|_{L^{2\alpha -2}(\mathbf{y}+S_R)}^{\alpha -1})^{\beta}.
\end{align}
By the Gagliardo-Nirenberg inequality,
\begin{align*}
\| w \|_{L^{\alpha}(\mathbf{y}+S_R)} &\leq C(R)\|\nabla w\|_{L^2(\mathbf{y}+S_R)}^a\| w\|_{L^1(\mathbf{y}+S_R)}^{1-a}\\
\| w \|_{L^{2\alpha -2}(\mathbf{y}+S_R)} &\leq C(R)\|\nabla w\|_{L^2(\mathbf{y}+S_R)}^b \| w\|_{L^1(\mathbf{y}+S_R)}^{1-b},\\
\end{align*}
where
\begin{align*}
a &=\frac{1-\frac{1}{\alpha}}{1-\frac{1}{6}}\\
b &=\frac{1-\frac{1}{2\alpha-2}}{1-\frac{1}{6}}.\\
\end{align*}
One has $a, b\in (0,1)$ if $1<\alpha<6$, $1<2\alpha -2<6$, or $ \frac{3}{2}<\alpha <4$, which is guaranteed by the choice of $\alpha$. Hence given the hypotheses for $w_n$, we can easily get 
\begin{align*}
\| w_n \|_{L^{\alpha}(\mathbf{y}+S_R)} &\to 0,\\
\| w_n \|_{L^{2\alpha -2}(\mathbf{y}+S_R)} &\to 0,
\end{align*}
as $n\to \infty$.
By \eqref{chap4: Sobolev inequality}
\begin{align*}
& \int_{\mathbf{y}+S_R}w_n^{q+1}~d\mathbf{x} \\
\leq & C(R)\bigg( \int_{\mathbf{y}+S_R}(w_n^{\alpha}+\alpha w_n^{\alpha-1}|\nabla w_n|) ~d\mathbf{x} \bigg)^{\beta}  \\
:= & C(R) \epsilon_n ^{\beta} \\
\leq & C(R) \epsilon_n^{\beta-1} \int_{\mathbf{y}+S_R}(w_n^{\alpha}+\alpha w_n^{\alpha-1}|\nabla w_n|) ~d\mathbf{x} ,
\end{align*}
where
\begin{align*}
\epsilon_n &=\int_{\mathbf{y}+S_R}(w_n^{\alpha}+\alpha w_n^{\alpha-1}|\nabla w_n|) ~d\mathbf{x} \\
&\leq \| w_n \|_{L^{\alpha}(\mathbf{y}+S_R)}^{\alpha} +\alpha \|\nabla w_n\|_{L^2(\mathbf{y}+S_R)}\cdot \| w_n\|_{L^{2\alpha -2}(\mathbf{y}+S_R)}^{\alpha -1}\\
& \to 0.
\end{align*}
Cover $\mathbb{R}^3$ with these balls of radius $R$ in such a way that each point in $\mathbb{R}^3$ is contained in an overlap of at most $m$ balls. Then,
\begin{align}\label{bounds on R^3}
\int_{\mathbb{R}^3}w_n^{q+1}~d\mathbf{x} \leq C(R)m\epsilon_n^{\beta -1}\int_{\mathbb{R}^3}(w_n^{\alpha}+\alpha w_n^{\alpha -1}|\nabla w|)~d\mathbf{x}.
\end{align}
Just as in \eqref{chap4: Sobolev inequality}, we have
\begin{align*}
& \int_{\mathbb{R}^3}(w_n^{\alpha}+\alpha w_n^{\alpha -1}|\nabla w|)~d\mathbf{x} \\
\leq & \| w_n\|_{L^{\alpha}(\mathbb{R}^3)}^{\alpha} +\alpha \|\nabla w_n\|_{L^2(\mathbb{R}^3)}\cdot \| w_n\|_{L^{2\alpha -2}(\mathbb{R}^3)}^{\alpha -1}.
\end{align*}
Similarly by the Gagliardo-Nirenberg inequality,
\begin{align*}
\| w_n \|_{L^{\alpha}(\mathbb{R}^3)} &\leq C\|\nabla w_n\|_{L^2(\mathbb{R}^3)}^a\| w_n\|_{L^1(\mathbb{R}^3)}^{1-a}\\
\| w_n \|_{L^{2\alpha -2}(\mathbb{R}^3)} &\leq C\|\nabla w_n\|_{L^2(\mathbb{R}^3)}^b \| w_n\|_{L^1(\mathbb{R}^3)}^{1-b}.\\
\end{align*}
By the boundedness of $w_n$ in $L^1(\mathbb{R}^3)\cap H^1(\mathbb{R}^3)$, we conclude from \eqref{bounds on R^3} that 
\begin{equation*}
\int_{\mathbb{R}^3}w_n^{q+1}~d\mathbf{x} \to 0
\end{equation*}
as $n\to \infty$.
\end{proof}

\begin{corollary}
If $\{w_n\}$ is a minimizing sequence of $E$ in $W_P$, and if $I_P<0$, then case 2 in lemma \ref{chap4: Lions lemma} cannot happen.
\end{corollary}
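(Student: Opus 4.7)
The plan is to contradict case 2 by showing it would force $\liminf E(w_n) \geq 0$, which is incompatible with $E(w_n) \to I_P < 0$. The mechanism is the vanishing mass lemma: uniform vanishing of local $L^1$ mass forces the $L^{q+1}$ norm (and thus the negative part of the energy) to die, leaving only the nonnegative Dirichlet term.

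First I would verify that the minimizing sequence $\{w_n\}$ is bounded in $L^1(\mathbb{R}^3)\cap H^1(\mathbb{R}^3)$ so that lemma \ref{chap4: vanishing mass lemma} applies. The $L^1$ bound is immediate: $w_n \geq 0$, $f \geq c > 0$, and $N(w_n) = P$ give $\|w_n\|_{L^1} \leq P/c$. Since $E(w_n) \to I_P$, the values $E(w_n)$ are bounded above, and lemma \ref{chap4: bounded from below lemma} (applicable because $s$ is bounded and $N(w_n) = P$) then yields a uniform bound on $\int e^s |\nabla w_n|^2 \, d\mathbf{x}$, hence on $\|\nabla w_n\|_{L^2}$. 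Together these give a uniform $H^1$ bound (the $L^2$ bound on $w_n$ itself follows by Gagliardo-Nirenberg interpolation between $L^1$ and $\|\nabla w_n\|_{L^2}$).

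Next I would translate the vanishing hypothesis. Case 2 asserts $\sup_{\mathbf{y}} \int_{\mathbf{y}+S_R} f w_{n_k}\, d\mathbf{x} \to 0$ for every $R > 0$; since $f \geq c > 0$ and $w_{n_k} \geq 0$,
\begin{equation*}
c \sup_{\mathbf{y}} \int_{\mathbf{y}+S_R} w_{n_k}\, d\mathbf{x} \leq \sup_{\mathbf{y}} \int_{\mathbf{y}+S_R} f w_{n_k}\, d\mathbf{x} \to 0,
\end{equation*}
so the hypothesis of lemma \ref{chap4: vanishing mass lemma} is met. Applying that lemma (using $1 < q < 3$) yields $\int_{\mathbb{R}^3} w_{n_k}^{q+1}\, d\mathbf{x} \to 0$, and by boundedness of $s$ also $\int_{\mathbb{R}^3} e^{-s} w_{n_k}^{q+1}\, d\mathbf{x} \to 0$.

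Finally, dropping the (nonnegative) Dirichlet term as a lower bound gives
\begin{equation*}
E(w_{n_k}) \geq - \frac{K}{q+1} \int_{\mathbb{R}^3} e^{-s} w_{n_k}^{q+1}\, d\mathbf{x} \to 0,
\end{equation*}
so $\liminf_{k\to\infty} E(w_{n_k}) \geq 0$, contradicting $E(w_{n_k}) \to I_P < 0$. The main (minor) subtlety to watch is ensuring $\{w_n\}$ satisfies the hypotheses of lemma \ref{chap4: vanishing mass lemma}, in particular the $L^1 \cap H^1$ bound coming from combining the energy bound with the mass constraint; once that is in place the rest of the argument is just stringing the two preceding lemmas together.
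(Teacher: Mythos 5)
Your proof is correct and follows essentially the same route as the paper: translate the vanishing of $\int_{\mathbf{y}+S_R} f w_{n_k}$ into vanishing of $\int_{\mathbf{y}+S_R} w_{n_k}$ via $f\geq c>0$, invoke Lemma \ref{chap4: vanishing mass lemma} to kill $\int e^{-s}w_{n_k}^{q+1}$, and conclude $I_P\geq 0$, a contradiction. The one thing you add is an explicit verification of the $L^1\cap H^1$ boundedness hypothesis for Lemma \ref{chap4: vanishing mass lemma}, which the paper leaves implicit; that is careful bookkeeping rather than a different argument.
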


\begin{proof}
If case 2 in lemma \ref{chap4: Lions lemma} happens, there will be a subsequence $\{w_{n_k}\}$ such that $\forall R>0$,
\begin{equation*}
\lim_{k\to \infty}\sup_{ \mathbf{y}\in \mathbb{R}^3}\int_{\mathbf{y}+S_R}f w_{n_k} ~d\mathbf{x}=0
\end{equation*}
Since $f\geq c >0$, this implies
\begin{equation*}
\lim_{k\to \infty}\sup_{ \mathbf{y}\in \mathbb{R}^3}\int_{\mathbf{y}+S_R}w_{n_k} ~d\mathbf{x}=0
\end{equation*}
By lemma \ref{chap4: vanishing mass lemma} and the boundedness of $s$ this implies $$\displaystyle \lim_{k\to \infty} \int_{\mathbb{R}^3}\frac{K}{q+1}e^{- s}w_{n_k}^{q+1} ~d\mathbf{x} =0,$$ which then implies $I_P\geq 0$. 
\end{proof}

For the purpose of eliminating the possibility of case 3 in lemma \ref{chap4: Lions lemma}, we need an elementary inequality.

\begin{lemma}\label{chap4: elementary inequality}
If $0\leq \lambda_1, \lambda_2 \leq 1$, $\lambda_1+\lambda_2=1$, $q>1$, then
\begin{equation*}
1-\lambda_1^{q+1}-\lambda_2^{q+1}\geq 2\lambda_1 \lambda_2.
\end{equation*}
\end{lemma}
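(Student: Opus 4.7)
The plan is to reduce the two-variable inequality to a one-variable statement by using the constraint $\lambda_2 = 1 - \lambda_1$, then rewrite $1 - 2\lambda_1\lambda_2$ in a form that makes a termwise comparison possible.

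Specifically, I would first rearrange the claim to
\begin{equation*}
1 - 2\lambda_1 \lambda_2 \geq \lambda_1^{q+1} + \lambda_2^{q+1}.
\end{equation*}
Next, I would observe the algebraic identity
\begin{equation*}
1 - 2\lambda_1 \lambda_2 = (\lambda_1 + \lambda_2)^2 - 2\lambda_1\lambda_2 = \lambda_1^2 + \lambda_2^2,
\end{equation*}
which uses only $\lambda_1 + \lambda_2 = 1$. So it suffices to prove
\begin{equation*}
\lambda_1^2 + \lambda_2^2 \geq \lambda_1^{q+1} + \lambda_2^{q+1}.
\end{equation*}

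Finally, since $\lambda_1, \lambda_2 \in [0,1]$ and the exponent satisfies $q + 1 > 2$, raising a number in $[0,1]$ to a larger power makes it smaller; hence $\lambda_i^{q+1} \leq \lambda_i^2$ for $i=1,2$. Summing these two termwise inequalities yields the desired bound. There is no real obstacle here — the only thing to double-check is that the two hypotheses are used in exactly the right places: the constraint $\lambda_1+\lambda_2 = 1$ is what converts $1-2\lambda_1\lambda_2$ into a sum of squares, while the hypothesis $q > 1$ is what makes $q+1 > 2$ and so forces the monotonicity comparison on $[0,1]$. Equality holds precisely when one of the $\lambda_i$ is zero, which is consistent with both sides vanishing at the endpoints.
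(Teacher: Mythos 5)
Your proof is correct and follows essentially the same route as the paper's: both rest on the termwise bound $\lambda_i^{q+1}\leq \lambda_i^2$ (valid since $\lambda_i\in[0,1]$ and $q+1>2$) combined with the identity $\lambda_1^2+\lambda_2^2=(\lambda_1+\lambda_2)^2-2\lambda_1\lambda_2$. The only difference is presentational — you move $2\lambda_1\lambda_2$ to the left first, while the paper chains the inequalities directly.
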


\begin{proof}
Since $q>1$, $q+1>2$. Hence
\begin{align*}
1-\lambda_1^{q+1}-\lambda_2^{q+1} &\geq 1-\lambda_1^2-\lambda_2^2\\
& =(\lambda_1+\lambda_2)^2-\lambda_1^2-\lambda_2^2\\
&= 2\lambda_1 \lambda_2.
\end{align*}
\end{proof}

Now we are ready to eliminate case 3 in lemma \ref{chap4: Lions lemma}.
\begin{lemma}\label{chap4: splitting mass lemma}
Let $\{w_n\}$ be a minimizing sequence of $E$ in $W_P$. Suppose $I_P<0$, and $\forall P_2>P_1>0$, $I_{P_2}\leq\bigg(\frac{P_2}{P_1}\bigg)^{q+1}I_{P_1}$. Then case 3 in lemma \ref{chap4: Lions lemma} cannot happen.
\end{lemma}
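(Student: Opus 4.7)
The plan is to run the standard concentration-compactness dichotomy argument: cut the minimizing sequence $\{w_{n_k}\}$ into two well-separated pieces corresponding to the two chunks of mass $\lambda$ and $P-\lambda$ supplied by case 3, show that the energy nearly splits as $E(w_k^{(1)})+E(w_k^{(2)})\leq E(w_{n_k})+o(1)$, and then combine this splitting with the hypothesized scaling inequality and the elementary inequality of Lemma \ref{chap4: elementary inequality} to produce a strict lower bound on $E$ that is incompatible with $I_P<0$.

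I would pick a smooth radial cutoff $\chi:\mathbb{R}^3\to[0,1]$ with $\chi\equiv 1$ on $S_R$, $\chi\equiv 0$ off $S_{2R}$, and $|\nabla\chi|\leq C/R$, and define the shifted version $\chi^k(\mathbf{x})=\chi(\mathbf{x}-a_k\mathbf{e}_3)$; this is axisymmetric because $a_k\mathbf{e}_3$ lies on the symmetry axis. Setting $w_k^{(1)}=\chi^k w_{n_k}$ and $w_k^{(2)}=(1-\chi^k)w_{n_k}$ gives two nonnegative axisymmetric functions in $H^1\cap L^1$, so $w_k^{(i)}\in W_{P_i^k}$ with $P_i^k:=N(w_k^{(i)})$ and $P_1^k+P_2^k=P$. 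Case 3 together with $f\geq c>0$ forces $|P_1^k-\lambda|\leq\epsilon$, and hence, once $\epsilon$ is small, both $P_i^k$ are bounded away from $0$ and from $P$ uniformly in $k$.

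The core step is the energy-splitting estimate. A direct expansion gives
\begin{equation*}
|\nabla w_k^{(1)}|^2+|\nabla w_k^{(2)}|^2=\bigl((\chi^k)^2+(1-\chi^k)^2\bigr)|\nabla w_{n_k}|^2+2(2\chi^k-1)w_{n_k}\nabla w_{n_k}\cdot\nabla\chi^k+2w_{n_k}^2|\nabla\chi^k|^2.
\end{equation*}
Since $(\chi^k)^2+(1-\chi^k)^2\leq 1$, the excess of the gradient part of $E(w_k^{(1)})+E(w_k^{(2)})$ over that of $E(w_{n_k})$ is supported on the annulus $A_k:=a_k\mathbf{e}_3+(S_{2R}\setminus S_R)$ and, using $|\nabla\chi^k|\leq C/R$ together with the uniform $H^1\cap L^1$ bound on $w_{n_k}$, is $O(1/R)$. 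For the nonlinear part, the identity $(w_k^{(1)})^{q+1}+(w_k^{(2)})^{q+1}=\bigl((\chi^k)^{q+1}+(1-\chi^k)^{q+1}\bigr)w_{n_k}^{q+1}\leq w_{n_k}^{q+1}$ (with equality off $A_k$) reduces the discrepancy to a multiple of $\int_{A_k}w_{n_k}^{q+1}\,d\mathbf{x}$. Since case 3 together with $f\geq c$ gives $\|w_{n_k}\|_{L^1(A_k)}\leq 2\epsilon/c$, while $\|w_{n_k}\|_{L^6}$ stays bounded by Sobolev, interpolation bounds $\int_{A_k}w_{n_k}^{q+1}$ by $C\epsilon^\theta$ for some $\theta>0$ (here $q<3<5$ is what keeps the interpolation exponent positive). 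Combining, $E(w_k^{(1)})+E(w_k^{(2)})\leq E(w_{n_k})+\eta$ with $\eta\to 0$ as $k\to\infty$, $R\to\infty$, $\epsilon\to 0$.

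The endgame then uses the scaling. Since $E(w_k^{(i)})\geq I_{P_i^k}$ and the hypothesis $I_{P_2}\leq (P_2/P_1)^{q+1}I_{P_1}$ applied with $P_2=P$, $P_1=P_i^k$ yields $I_{P_i^k}\geq (P_i^k/P)^{q+1}I_P$ (invoking $I_P<0$), summing gives $E(w_k^{(1)})+E(w_k^{(2)})\geq\alpha_k I_P$ with $\alpha_k=(P_1^k/P)^{q+1}+(P_2^k/P)^{q+1}$. Lemma \ref{chap4: elementary inequality} applied with $\lambda_i=P_i^k/P$ gives $1-\alpha_k\geq 2P_1^k P_2^k/P^2\geq c_0>0$, with $c_0$ depending only on $\lambda$ and $P$. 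Because $I_P<0$, this rearranges to $\alpha_k I_P\geq I_P+c_0|I_P|$, hence $I_P+c_0|I_P|\leq E(w_{n_k})+\eta$; sending $k\to\infty$, $R\to\infty$, $\epsilon\to 0$ produces $c_0|I_P|\leq 0$, contradicting $I_P<0$. The genuinely delicate step is the $L^{q+1}$ annulus estimate: the gradient cross-terms dissolve routinely because $|\nabla\chi^k|=O(1/R)$, but driving the nonlinear annulus integral to $0$ requires both the $L^1$ smallness supplied by case 3 and the subcritical exponent $q<3$.
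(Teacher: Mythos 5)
Your proof is correct and follows essentially the same route as the paper: the same cutoff decomposition centered at $a_k\mathbf{e}_3$, the same scaling and elementary inequalities, and the same $O(1/R)$ gradient cross-term and $L^{q+1}$-annulus estimates (your $L^1$--$L^6$ interpolation plus Sobolev yields exactly the same exponents as the paper's Gagliardo--Nirenberg step). The only cosmetic difference is the final contradiction: the paper deduces that $P_{k,2}$ must be small, contradicting $\lambda<P$, whereas you deduce that $E(w_{n_k})$ is bounded below by $I_P+c_0|I_P|$, contradicting $E(w_{n_k})\to I_P$; these are equivalent readings of the same estimate.
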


\begin{proof}
Assume the contrary. Then there exists a subsequence $\{ w_{n_k}\}$ such that $\exists \lambda \in(0,P),\forall \epsilon>0, \exists R_0>0, a_k\in \mathbb{R}, \forall R>R_0, \exists k_0>0, \forall k>k_0$:
\begin{align}\label{chap4: mass bound on the ring}
\int_{a_k {\mathbf{e}_3} +S_R}f w_{n_k} ~d\mathbf{x} &>\lambda-\epsilon, \notag \\
\int_{a_k {\mathbf{e}_3} +S_{2R}}f w_{n_k} ~d\mathbf{x} &<\lambda+\epsilon.
\end{align}
Let $\varphi : \mathbb{R}^+ \to [0,1]$ be a smooth cut off function, such that
\begin{align*}
\varphi(t)&=1 \quad \text{when}\ \ |t|\leq 1,\\
\varphi(t)&=0 \quad \text{when}\ \ |t|\geq 2,\\
|\nabla \varphi (t)| &\leq 2 \quad \text{for all}\ \ t .
\end{align*}
Let us now define
\begin{align*}
\varphi_{k,1}(\mathbf{x}) &=\varphi\bigg(\frac{|\mathbf{x}-a_k{\mathbf{e}_3}|}{R}\bigg), \\
\varphi_{k,2}(\mathbf{x}) &=1-\varphi_{k,1}(\mathbf{x}),\\
w_{k,1}(\mathbf{x}) &=\varphi_{k,1}(\mathbf{x})w_{n_k}(\mathbf{x}),\\
w_{k,2}(\mathbf{x}) &=\varphi_{k,2}(\mathbf{x})w_{n_k}(\mathbf{x}),\\
P_{k,1}&=\int_{\mathbb{R}^3}f w_{k,1} ~d\mathbf{x},\\
P_{k,2}&=\int_{\mathbb{R}^3}f w_{k,2} ~d\mathbf{x}.\\
\end{align*}
Obviously $ w_{k,1}\in W_{P_{k,1}}, w_{k,2}\in W_{P_{k,2}}, |\nabla \varphi_{k,1}|\leq \frac{2}{R}, |\nabla \varphi_{k,2}|\leq \frac{2}{R}$, also $P= P_{k,1}+P_{k,2}$. We now estimate
\begin{align*}
E(w_{n_k}) &=\int_{\mathbb{R}^3}\bigg( \frac{e^{ s}}{2}|\nabla w_{n_k} |^2 -\frac{K}{q+1}w_{n_k}^{q+1}e^{- s} \bigg) ~d\mathbf{x}\\
 &= \int_{\mathbb{R}^3}\bigg( \frac{e^{ s}}{2}|\nabla w_{k,1} +\nabla w_{k,2}|^2 -\frac{K}{q+1}(w_{n_k}^{q+1}-w_{k,1}^{q+1}-w_{k,2}^{q+1})e^{- s}\\
 &\quad \quad -\frac{K}{q+1}(w_{k,1}^{q+1}+w_{k,2}^{q+1})e^{- s} \bigg) ~d\mathbf{x}\\
 &=\int_{\mathbb{R}^3}\bigg( \frac{e^{ s}}{2}|\nabla w_{k,1}|^2 -\frac{K}{q+1}w_{k,1}^{q+1}e^{- s} \bigg)~d\mathbf{x}\\
 &\quad +\int_{\mathbb{R}^3}\bigg( \frac{e^{ s}}{2}|\nabla w_{k,2}|^2 -\frac{K}{q+1}w_{k,2}^{q+1}e^{- s} \bigg)~d\mathbf{x}\\
 &\quad +\int_{\mathbb{R}^3}\bigg (e^{ s}\nabla w_{k,1}\cdot \nabla w_{k,2}-\frac{K}{q+1}w_{n_k}^{q+1}(1-\varphi_{k,1}^{q+1}-\varphi_{k,2}^{q+1})e^{- s}\bigg) ~d\mathbf{x}\\
 &\geq I_{P_{k,1}}+I_{P_{k,2}}+\int_{\mathbb{R}^3}\bigg (e^{ s}\nabla w_{k,1}\cdot \nabla w_{k,2}-\frac{K}{q+1}w_{n_k}^{q+1}(1-\varphi_{k,1}^{q+1}-\varphi_{k,2}^{q+1})e^{- s}\bigg) ~d\mathbf{x}\\
 &\geq \bigg[ \bigg( \frac{P_{k,1}}{P}\bigg)^{q+1}+\bigg( \frac{P_{k,2}}{P}\bigg)^{q+1}\bigg]I_P \\
 &\quad +\int_{\mathbb{R}^3}\bigg (e^{ s}\nabla w_{k,1}\cdot \nabla w_{k,2}-\frac{K}{q+1}w_{n_k}^{q+1}(1-\varphi_{k,1}^{q+1}-\varphi_{k,2}^{q+1})e^{- s}\bigg) ~d\mathbf{x}.
\end{align*}
The last inequality follows from the hypothesis in the lemma. If we denote
\begin{equation*}
Re=\int_{\mathbb{R}^3}\bigg (e^{ s}\nabla w_{k,1}\cdot \nabla w_{k,2}-\frac{K}{q+1}w_{n_k}^{q+1}(1-\varphi_{k,1}^{q+1}-\varphi_{k,2}^{q+1})e^{- s}\bigg) ~d\mathbf{x},
\end{equation*}
then the above estimate gives us
\begin{align*}
I_P-E(w_{n_k}) &\leq \bigg[ 1- \bigg( \frac{P_{k,1}}{P}\bigg)^{q+1}-\bigg( \frac{P_{k,2}}{P}\bigg)^{q+1}\bigg]I_P-Re.
\end{align*}
Since $P=P_{k,1}+P_{k,2}$, and $I_P<0$, by lemma \ref{chap4: elementary inequality}, we get
\begin{align*}
I_P-E(w_{n_k}) &\leq \bigg[ 1- \bigg( \frac{P_{k,1}}{P}\bigg)^{q+1}-\bigg( \frac{P_{k,2}}{P}\bigg)^{q+1}\bigg]I_P-Re\\
&\leq 2\frac{P_{k,1}P_{k,2}}{P^2}I_P-Re,\\
\end{align*}
or
\begin{equation}\label{chap4: bounds on P_k,2}
-\frac{2}{P^2}I_P P_{k,1} P_{k,2} \leq E(w_{n_k})-I_P-Re.
\end{equation}
Let us now estimate $Re$:
\begin{align*}
-Re &=-\int_{\mathbb{R}^3}\bigg (e^{ s}\nabla w_{k,1}\cdot \nabla w_{k,2}-\frac{K}{q+1}w_{n_k}^{q+1}(1-\varphi_{k,1}^{q+1}-\varphi_{k,2}^{q+1})e^{- s}\bigg) ~d\mathbf{x}.
\end{align*}
By the definition of $\varphi_{k,1}$ and $\varphi_{k,2}$, we know  $1-\varphi_{k,1}^{q+1}-\varphi_{k,2}^{q+1} \in [0,1]$, and is nonzero only when $R\leq |\mathbf{x}-a_k{\mathbf{e}_3}| \leq 2R$. Therefore
\begin{align*}
-Re &\leq -\int_{\mathbb{R}^3}e^{ s}\nabla w_{k,1}\cdot \nabla w_{k,2} ~d\mathbf{x} +C(q,K,s)\int_{R\leq |\mathbf{x}-a_k{\mathbf{e}_3}| \leq 2R}w_{n_k}^{q+1}~d\mathbf{x}\\
&= L_1 + L_2.
\end{align*}
We estimate $L_1$ and $L_2$ separately.
\begin{align*}
L_1 &= -\int_{\mathbb{R}^3}e^{ s}\nabla w_{k,1}\cdot \nabla w_{k,2} ~d\mathbf{x}\\
&= -\int_{\mathbb{R}^3}e^{ s}\nabla (w_{n_k}\varphi_{k,1})\cdot \nabla (w_{n_k}\varphi_{k,2}) ~d\mathbf{x}\\
&= -\int_{\mathbb{R}^3}e^{ s}\nabla \varphi_{k,1}\cdot \nabla \varphi_{k,2}|w_{n_k}|^2~d\mathbf{x} -\int_{\mathbb{R}^3}e^{ s}w_{n_k}\varphi_{k,2}\nabla \varphi_{k,1}\cdot \nabla w_{n_k} ~d\mathbf{x}\\
&\quad -\int_{\mathbb{R}^3}e^{ s}w_{n_k}\varphi_{k,1}\nabla \varphi_{k,2}\cdot \nabla w_{n_k} ~d\mathbf{x} -\int_{\mathbb{R}^3}e^{ s}\varphi_{k,1} \varphi_{k,2}|\nabla w_{n_k}|^2 ~d\mathbf{x}\\
&\leq -\int_{\mathbb{R}^3}e^{ s}\nabla \varphi_{k,1}\cdot \nabla \varphi_{k,2}|w_{n_k}|^2~d\mathbf{x} -\int_{\mathbb{R}^3}e^{ s}w_{n_k}\varphi_{k,2}\nabla \varphi_{k,1}\cdot \nabla w_{n_k} ~d\mathbf{x}\\
&\quad -\int_{\mathbb{R}^3}e^{ s}w_{n_k}\varphi_{k,1}\nabla \varphi_{k,2}\cdot \nabla w_{n_k} ~d\mathbf{x} \\
&\leq \frac{C(s)}{R}.
\end{align*}
The last inequality follows from $ |\nabla \varphi_{k,1}|\leq \frac{2}{R}, |\nabla \varphi_{k,2}|\leq \frac{2}{R}$ and that $\{w_{n_k}\}$ is bounded in $H^1(\mathbb{R}^3)$. On the other hand, by the Gagliardo-Nirenberg inequality,
\begin{align*}
L_2 &\leq C(q,K,s)\| w_{n_k}\| _{L^{q+1}(R\leq |\mathbf{x}-a_k{\mathbf{e}_3}|\leq 2R)}^{q+1}\\
&\leq C(q,K ,s)\| \nabla w_{n_k}\| _{L^2(R\leq |\mathbf{x}-a_k{\mathbf{e}_3}|\leq 2R)}^{a(q+1)} \| w_{n_k}\| _{L^1(R\leq |\mathbf{x}-a_k{\mathbf{e}_3}|\leq 2R)}^{(1-a)(q+1)}\\
&\leq C(q,K ,s)[(\lambda +\epsilon)-(\lambda -\epsilon)]^{(1-a)(q+1)}
\end{align*}
The constant $C(q,K ,s)$ is enlarged in different lines. The last inequality above follows from \eqref{chap4: mass bound on the ring}, and the fact that $\{w_{n_k}\}$ is bounded in $H^1(\mathbb{R}^3)$.\\
In summary, we have 
\begin{equation*}
-Re\leq \frac{C(s)}{R}+ C(q,K ,s)(2\epsilon)^{(1-a)(q+1)}.
\end{equation*}
From the range of $q$, we deduce that $a\in (0,1)$. Choose $R>R_0$ so big that 
\begin{equation*}
-Re\leq C(q,K ,s)\epsilon^{(1-a)(q+1)}.
\end{equation*}
By the definition of $w_{k,1}$, we have
\begin{align*}
P_{k,1}\geq \int_{|\mathbf{x}-a_k{\mathbf{e}_3}|\leq R}f w_{n_k}~d\mathbf{x}>\lambda -\epsilon.
\end{align*}
By \eqref{chap4: bounds on P_k,2}, and the estimates on $Re$, we have
\begin{equation*}
P_{k,2}\leq C(P,I_P,\lambda, q,K ,s)(\epsilon+\epsilon^{(1-a)(q+1)}).
\end{equation*}
However,
\begin{align*}
P_{k,2} &=\int_{\mathbb{R}^3}f w_{k,2} ~d\mathbf{x}\\
&\geq \int_{\mathbb{R}^3\setminus a_k{\mathbf{e}_3}+S_{2R}}f w_{n_k} ~d\mathbf{x}.
\end{align*}
Hence,
\begin{equation*}
\int_{\mathbb{R}^3\setminus a_k{\mathbf{e}_3}+S_{2R}}f w_{n_k} ~d\mathbf{x} \leq C(P,I_P,\lambda, q,K ,s)(\epsilon+\epsilon^{(1-a)(q+1)}).
\end{equation*}
On the other hand,
\begin{equation*}
\int_{a_k{\mathbf{e}_3}+S_{2R}}f w_{n_k} ~d\mathbf{x}<\lambda +\epsilon.
\end{equation*}
This implies
\begin{align*}
P &=\int_{\mathbb{R}^3}f w_{n_k} ~d\mathbf{x}\\
&<\lambda +\epsilon +C(P,I_P,\lambda, q,K ,s)(\epsilon+\epsilon^{(1-a)(q+1)}).
\end{align*}
If we have initially chosen $\epsilon$ so small that 
\begin{equation*}
\lambda +\epsilon +C(P,I_P,\lambda, q,K ,s)(\epsilon+\epsilon^{(1-a)(q+1)})<P.
\end{equation*}
a contradiction will be obtained.
\end{proof}

With the preparation above, we are ready to prove the existence of a minimizer.

\begin{proof}[Proof of proposition \ref{chap4: thm: existence of minimizer, infinite}]
By lemma \ref{chap4: scaling inequality}, the scaling inequalities are true in this $q$ range, therefore lemma \ref{chap4: Lions lemma}, lemma \ref{chap4: vanishing mass lemma} and lemma \ref{chap4: splitting mass lemma} apply. For any minimizing sequence $\{w_n\}$, there exists a subsequence $\{w_{n_k}\}$ such that case 1 in lemma \ref{chap4: Lions lemma} is true. Without loss of generality, we assume $\{w_n\}$ is already shifted, and satisfies:  $\forall \epsilon >0$, $\exists R>0, n_0>0$, $\forall n>n_0$:
\begin{equation*}
P\geq \int_{S_R}f w_n ~d\mathbf{x}\geq P-\epsilon.
\end{equation*}
By lemma \ref{chap4: bounded from below lemma}, $\{w_n\}$ is bounded in $H^1(\mathbb{R}^3)$. The Banach-Alaoglu theorem implies that there exists a subsequence of $\{w_n\}$ which converges weakly in $H^1(\mathbb{R}^3)$ to $\tilde{w}$. Without loss of generality, we call this subsequence $\{w_n\}$ again. We claim that $\tilde{w}$ is a minimizer of $E(w)$ in $W_M$.

Let us first show $\tilde{w}\in W_M$. Obviously $\tilde{w}\in H^1(\mathbb{R}^3)$. Notice for any $R>0$, we have $w_n\rightharpoonup w$ weakly in $H^1(S_R)$. By the Rellich-Kondrachov theorem, $H^1(S_R)$ is compactly embedded in $L^p(S_R)$ for $1\leq p<6$. This implies $\forall R>0$, $w_n\to \tilde{w}$ in $L^q(S_R)$ for $1\leq q<6$. The conditions $w\geq 0$ a.e. and $w$ axisymmetric are now easily established if we integrate the $w_n$'s against positive smooth test functions with compact supports and take the limit. Let us now show $N(\tilde{w})=P$. For that we observe $\forall \epsilon>0$, $\exists R>0, n_0>0$, $\forall n>n_0$:
\begin{equation*}
\int_{S_R}f w_n ~d\mathbf{x}\geq P-\epsilon.
\end{equation*}
Since $w_n\to \tilde{w}$ in $L^q(S_R)$ for all $R>0$, and $f$ is locally bounded, we have 
\begin{equation*}
\int_{S_R}f \tilde{w} ~d\mathbf{x}\geq P-\epsilon.
\end{equation*}
Therefore for any $\epsilon>0$
\begin{equation}\label{chap4: lower mass bound}
\int_{\mathbb{R}^3}f \tilde{w} ~d\mathbf{x} \geq P-\epsilon.
\end{equation}
On the other hand, for any $R>0$, 
\begin{equation*}
P\geq \int_{S_R}f w_n ~d\mathbf{x},
\end{equation*}
which implies
\begin{equation*}
P\geq \int_{S_R}f\tilde{w} ~d\mathbf{x},
\end{equation*}
which implies
\begin{equation}\label{chap4: upper mass bound}
P\geq \int_{\mathbb{R}^3}f \tilde{w} ~d\mathbf{x}.
\end{equation}
Combine \eqref{chap4: lower mass bound} and \eqref{chap4: upper mass bound}, we get
\begin{equation*}
\int_{\mathbb{R}^3}f \tilde{w} ~d\mathbf{x} = P.
\end{equation*}
This also shows $\tilde{w}\in L^1(\mathbb{R}^3)$. We have shown $\tilde{w}\in W_M$, it remains to establish the weak lower-semicontinuity of $E$. The first term in $E$ can be treated by the standard method. Let us observe that  
\begin{equation*}
F_c= \bigg\{w ~\bigg|~ \int_{\mathbb{R}^3}\frac{e^{ s}}{2}|\nabla w|^2 ~d\mathbf{x}\leq c \bigg\}
\end{equation*}
is a convex norm closed set in $H^1(\mathbb{R}^3)$, therefore is weakly closed. 

For the second term
\begin{equation*}
-\int_{\mathbb{R}^3}\frac{K}{q+1}w^{q+1}e^{- s}~d\mathbf{x}
\end{equation*}
we recall, $\forall \epsilon>0$, $\exists R>0, n_0>0$, $\forall n, n'>n_0$:
\begin{align*}
\int_{\mathbb{R}^3\setminus S_R}f w_n ~d\mathbf{x} &\leq \epsilon\\
\int_{\mathbb{R}^3\setminus S_R}f w_{n'} ~d\mathbf{x} &\leq \epsilon\\
\|w_n-w_{n'}\|_{L^{q+1}(S_R)}&< \epsilon.
\end{align*}
Therefore,
\begin{align*}
\|w_n-w_{n'}\|_{L^{q+1}(\mathbb{R}^3)} &\leq \|w_n-w_{n'}\|_{L^{q+1}(S_R)}+\|w_n-w_{n'}\|_{L^{q+1}(\mathbb{R}^3\setminus S_R)}\\
&< \epsilon +C\big(\sup_n\|w_n\|_{H^1(\mathbb{R}^3)}\big)^a\big(\|w_n\|_{L^1(\mathbb{R}^3\setminus S_R)}+\|w_{n'}\|_{L^1(\mathbb{R}^3\setminus S_R)}\big)^{1-a}\\
&\leq \epsilon +C'\epsilon .
\end{align*}
The second inequality above follows from the Gagliardo-Nirenberg inequality. Hence $\{w_n\}$ converges in $L^{q+1}(\mathbb{R}^3)$. But $w_n\to \tilde{w}$ in $L^{q+1}(S_R)$ for any $R>0$. This implies $w_n\to \tilde{w}$ in $L^{q+1}(\mathbb{R}^3)$. Therefore,
\begin{equation*}
\lim_{n\to \infty}\int_{\mathbb{R}^3}w_n^{q+1}e^{- s}~d\mathbf{x}=\int_{\mathbb{R}^3}\tilde{w}^{q+1}e^{- s}~d\mathbf{x}.
\end{equation*}
Combine the two terms in $E$. We have
\begin{equation*}
\liminf_{n\to \infty}E(w_n)\geq E(\tilde{w}).
\end{equation*}
This shows that $\tilde{w}$ is a minimizer.
\end{proof}

One can establish a similar proposition for functions restricted to a finite ball. Since one has compact Sobolev embedding theorems on bounded balls, the corresponding proof will be a lot easier. In particular, if we let
\begin{equation}\label{chap4: energy functional, finite}
E(w)=\int_{S_R}\bigg(\frac{e^{s}}{2}|\nabla w|^2-\frac{K}{q+1}w^{q+1}e^{- s}\bigg) ~d\mathbf{x},
\end{equation}
\begin{equation}\label{chap4: constraint, finite}
N(w)=\int_{S_R}f w ~d\mathbf{x},
\end{equation}
and let $W_P$ be
\begin{align*}
H_0^1(S_R)\cap L^1(S_R) \cap \big\{w: S_R\to \mathbb{R}, ~w\geq 0\ \text{a.e.}, w\ \text{is axisymmetric}, N(w)=P\big\},
\end{align*}
then proposition \ref{chap4: thm: existence of minimizer, infinite} with these newly defined $E$ and $W_P$ remains true. The proof for that is standard.

\section{Existence of Solution for Low Adiabatic Index}
\label{chap4: low index existence}

In this section, we give proofs to theorems \ref{chap4: solution 1<q<3 bounded} and \ref{chap4: solution 1<q<3 unbounded}. The argument is based on proposition \ref{chap4: thm: existence of minimizer, infinite}. We will only lay out the demonstration for the compactly supported case, i.e. theorem \ref{chap4: solution 1<q<3 bounded}. The whole space case is virtually identical.

We first study the Euler-Lagrange equation. Let $W$ be
\begin{align*}
H_0^1(S_R)\cap L^1(S_R) \cap \big\{w: S_R\to \mathbb{R}, ~w\geq 0\ \text{a.e.}, w\ \text{is axisymmetric}, N(w)<\infty \big\},
\end{align*}
one has

\begin{lemma}\label{chap4: lem: variational inequality}
$\exists \lambda\in \mathbb{R}, \forall u\in W$:
\begin{equation}\label{chap4: variational inequality}
\int_{S_R}\bigg(e^{ s}\nabla \tilde{w} \cdot \nabla(u-\tilde{w}) -Ke^{- s}\tilde{w}^q (u-\tilde{w})\bigg)~d\mathbf{x}\geq - \lambda \int_{S_R} f(u-\tilde{w})~d\mathbf{x}.
\end{equation}
\end{lemma}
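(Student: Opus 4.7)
The plan is to apply the standard Lagrange multiplier argument for a constrained minimum, taking care with the one-sided nature of the constraint $w \geq 0$. Given $u \in W$, I will construct an admissible variation of $\tilde{w}$ inside $W_P$ by forming a convex combination of $\tilde{w}$ with $u$ and then renormalizing to preserve the constraint $N = P$. Specifically, set
$$
\tilde{w}_t = \sigma(t)\bigl[(1-t)\tilde{w} + t u\bigr], \qquad \sigma(t) = \frac{P}{P + t\,N(u - \tilde{w})}, \qquad t \in [0, t_0],
$$
for $t_0$ sufficiently small. For $t \in [0,1]$ the convex combination $(1-t)\tilde{w} + tu$ is non-negative; since $f \geq c > 0$ and $N(\tilde{w}) = P$, the denominator of $\sigma(t)$ is positive for $t$ close to $0$, so $\sigma(t) > 0$ is well-defined. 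Then $\tilde{w}_t$ is non-negative, axisymmetric, lies in $H^1_0(S_R) \cap L^1(S_R)$, and satisfies $N(\tilde{w}_t) = P$ by construction, hence $\tilde{w}_t \in W_P$. The asymmetry is crucial: only $t \geq 0$ is admissible, since for $t < 0$ the function $(1-t)\tilde{w} + tu$ may fail to be non-negative where $\tilde{w}$ vanishes but $u > 0$. This is what forces a one-sided stationarity condition and therefore a variational inequality rather than an equation.

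Since $\tilde{w}$ minimizes $E$ on $W_P$, the map $t \mapsto E(\tilde{w}_t)$ has a right-sided minimum at $t = 0$, so $\frac{d}{dt}E(\tilde{w}_t)\big|_{t=0^+} \geq 0$. Using $\sigma(0) = 1$ and $\sigma'(0) = -N(u-\tilde{w})/P$, the chain rule gives
$$
\frac{d\tilde{w}_t}{dt}\Big|_{t=0} = (u-\tilde{w}) - \frac{N(u-\tilde{w})}{P}\,\tilde{w},
$$
and expanding the derivative of $E$ yields
\begin{align*}
\int_{S_R}\!\bigl(e^{s}\nabla \tilde{w}\cdot\nabla(u-\tilde{w}) - Ke^{-s}\tilde{w}^q(u-\tilde{w})\bigr)\,d\mathbf{x}
\;\geq\; \frac{N(u-\tilde{w})}{P}\int_{S_R}\!\bigl(e^{s}|\nabla\tilde{w}|^2 - Ke^{-s}\tilde{w}^{q+1}\bigr)\,d\mathbf{x}.
\end{align*}
Setting
$$
\lambda = -\frac{1}{P}\int_{S_R}\bigl(e^{s}|\nabla\tilde{w}|^2 - Ke^{-s}\tilde{w}^{q+1}\bigr)\,d\mathbf{x}
$$
and using $N(u-\tilde{w}) = \int f(u - \tilde{w})\,d\mathbf{x}$ produces the desired inequality.

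The one step that needs technical care is justifying differentiability of $t \mapsto E(\tilde{w}_t)$ at $t=0$ and interchanging differentiation with integration. The quadratic gradient term is routine because $\tilde{w}_t$ is affine in $t$ with $H^1$ coefficients. The nonlinear term $\int \tilde{w}_t^{q+1}e^{-s}$ is handled via the Sobolev embedding $H^1_0(S_R) \hookrightarrow L^{q+1}(S_R)$, which is valid for $1 < q < 3$ (indeed up to $q \leq 5$), together with the bound $|a^{q+1} - b^{q+1}| \leq C(q)(a^q + b^q)|a-b|$ and dominated convergence; integrability of $\tilde{w}^q(u-\tilde{w})$ follows from H\"older since $\tilde{w}^q \in L^{(q+1)/q}$ and $u-\tilde{w} \in L^{q+1}$. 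No trouble comes from the weights since $s$ is bounded and $f$ is locally bounded on $S_R$. The main conceptual point, rather than a technical obstacle, is simply recognizing that the positivity constraint makes the variation one-sided and therefore delivers a variational inequality rather than the usual Euler--Lagrange equation.
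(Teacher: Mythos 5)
Your proof is correct and follows essentially the same strategy as the paper: take a one-sided admissible variation in $W_P$, differentiate $E$ at $t=0^+$, and identify $\lambda$ from the derivative of $E$ in the direction $\tilde w$ divided by $P$. The only cosmetic difference is that you renormalize the convex combination multiplicatively via $\sigma(t)$, whereas the paper uses the affine perturbation $\tilde w + t\bigl[(u-\tilde w) - \tfrac{N(u-\tilde w)}{N(\tilde w)}\tilde w\bigr]$; both have the same tangent at $t=0$ and yield the same $\lambda$.
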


\begin{proof}
Given $u\in W$, when $t>0$ is small enough,
\begin{equation*}
\tilde{w}+t\bigg[(u-\tilde{w})-\frac{N(u-\tilde{w})}{N(\tilde{w})}\tilde{w}\bigg] \in W_P,
\end{equation*}
therefore,
\begin{equation*}
\frac{d}{dt}E\bigg(\tilde{w}+t\bigg[(u-\tilde{w})-\frac{N(u-\tilde{w})}{N(\tilde{w})}\tilde{w}\bigg]\bigg)\bigg|_{t=0+}\geq 0.
\end{equation*}
Denote $ (u-\tilde{w})-\frac{N(u-\tilde{w})}{N(\tilde{w})}\tilde{w}$ by $\sigma$, we have
\begin{align*}
& \frac{E(\tilde{w}+t \sigma)-E(\tilde{w})}{t}\\
=& \int_{S_R}\bigg( e^{ s}\nabla \tilde{w} \cdot \nabla \sigma - \frac{K}{q+1}e^{- s}(q+1)(\tilde{w}+\theta \sigma)^q\sigma \bigg) ~d\mathbf{x} + O(t),
\end{align*}
where $\theta$ is between $0$ and $t$, and depends on $\mathbf{x}$. Take the limit as $t \to 0+$. By the dominated convergence theorem, we get
\begin{equation*}
\lim_{t \to 0+}\frac{E(\tilde{w}+t \sigma)-E(\tilde{w})}{t} = \int_{S_R}\bigg( e^{ s}\nabla \tilde{w} \cdot \nabla \sigma - Ke^{- s}\tilde{w}^q\sigma \bigg) ~d\mathbf{x}.
\end{equation*}
Denote this by $ E'_{\tilde{w}}(\sigma)$, we have
\begin{align*}
0 &\leq E'_{\tilde{w}}(\sigma)\\
& = E'_{\tilde{w}}(u-\tilde{w})-\frac{E'_{\tilde{w}}(\tilde{w})}{N(\tilde{w})}N(u-\tilde{w}).
\end{align*}
Letting $ - \lambda=\frac{E'_{\tilde{w}}(\tilde{w})}{N(\tilde{w})}$ completes the proof.
\end{proof}

\begin{lemma}\label{chap4: positivity of lambda}
If $I_P<0, q>1$, then $\lambda>0$.
\end{lemma}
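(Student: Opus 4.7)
The plan is to read off the value of $\lambda$ from the explicit formula $-\lambda = E'_{\tilde{w}}(\tilde{w})/N(\tilde{w})$ established in the previous lemma, and then show that the numerator is negative by combining the sign of $I_P$ with the exponent constraint $q>1$.

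First, I would compute $E'_{\tilde{w}}(\tilde{w})$ by substituting $\sigma = \tilde{w}$ into the Gateaux-derivative formula derived in the proof of Lemma \ref{chap4: lem: variational inequality}, obtaining
\begin{equation*}
E'_{\tilde{w}}(\tilde{w}) = \int_{S_R}\bigl(e^{s}|\nabla \tilde{w}|^2 - Ke^{-s}\tilde{w}^{q+1}\bigr)\,d\mathbf{x} =: A - B,
\end{equation*}
while $\tilde{w}$ being a minimizer yields
\begin{equation*}
E(\tilde{w}) = \tfrac{1}{2}A - \tfrac{1}{q+1}B = I_P < 0.
\end{equation*}
The second identity rearranges to $A < \tfrac{2}{q+1}B$.

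Second, I would observe that since $N(\tilde{w}) = P > 0$ and $f$ is locally bounded, $\tilde{w}$ cannot vanish almost everywhere, so $B > 0$. For $q > 1$ we have $\tfrac{2}{q+1} < 1$, and hence
\begin{equation*}
A < \tfrac{2}{q+1}B < B,
\end{equation*}
that is, $E'_{\tilde{w}}(\tilde{w}) = A - B < 0$. Dividing by $N(\tilde{w}) = P > 0$ gives $-\lambda < 0$, i.e., $\lambda > 0$, as desired.

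There is no real obstacle here: the proof reduces to a one-line inequality between $A$ and $B$ once the Lagrange-multiplier expression is unpacked, and both the hypotheses $I_P<0$ and $q>1$ enter in a completely elementary way. The only minor point to verify carefully is that $B>0$, which follows immediately from $N(\tilde{w})=P>0$.
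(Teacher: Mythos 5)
Your proof is correct and uses essentially the same idea as the paper: both boil down to showing $E'_{\tilde{w}}(\tilde{w}) = A - B < 0$ by combining the minimality identity $\tfrac{1}{2}A - \tfrac{1}{q+1}B = I_P < 0$ with the fact that $q>1$ makes the $\tilde{w}^{q+1}$ term dominate. The only superficial difference is that you read $-\lambda P = E'_{\tilde{w}}(\tilde{w})$ directly off the Lagrange-multiplier definition, whereas the paper obtains the one-sided bound $-\lambda P \le A - B$ by substituting $u = 2\tilde{w}$ into the variational inequality and then splits $A-B$ into two pieces each bounded by $I_P$; the two routes are algebraically equivalent and your observation that $B>0$ (which also follows at once from $A\ge 0$ and $I_P<0$) closes the argument cleanly.
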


\begin{proof}
Observe that $2\tilde{w} \in W$, therefore we may plug in $u=2\tilde{w}$ to find
\begin{align*}
-\lambda P & = - \lambda \int_{S_R} f(2\tilde{w}-\tilde{w})~d\mathbf{x}\\
& \leq \int_{S_R}(e^{ s} | \nabla  \tilde{w} | ^2 -Ke^{- s}\tilde{w}^{q+1})~d\mathbf{x}\\
& = \int_{S_R}(\frac{e^{ s}}{2}| \nabla \tilde{w} |^2 -\frac{K}{q+1}e^{- s}\tilde{w}^{q+1})~d\mathbf{x} + \int_{S_R}(\frac{e^{ s}}{2} |\nabla \tilde{w} |^2 -\frac{qK}{q+1}e^{- s}\tilde{w}^{q+1})~d\mathbf{x}\\
& \leq 2I_P\\
& < 0.
\end{align*}
\end{proof}

For any $\varphi \in C^{\infty}_0(S_R)$, $\varphi \geq 0$, let $S(\varphi)$ be
\begin{equation}
S(\varphi)(r,\theta,z)=\frac{1}{2\pi}\int_0^{2\pi} \varphi(r, \theta ,z)d\theta.
\end{equation}
Then $S(\varphi)$ is axisymmetric, $\tilde{w}+S (\varphi) \in W$, and 
\begin{align}\label{chap4: S calculation}
& \int_{S_R}\bigg(e^{ s}\nabla \tilde w\cdot \nabla \varphi -Ke^{- s}\tilde{w}^q\varphi + \lambda f\varphi \bigg)~d\mathbf{x} \notag \\
=  & \int_{S_R}\bigg(S(e^{ s})\nabla S(\tilde w)\cdot \nabla \varphi -KS(e^{- s}\tilde{w}^q)\varphi + \lambda S(f)\varphi \bigg) ~d\mathbf{x}\notag \\
= & \int_{S_R}\bigg(e^{ s}\nabla \tilde{w} \cdot S(\varphi) - Ke^{- s}\tilde{w}^q S(\varphi) +\lambda f S(\varphi) \bigg)~d\mathbf{x}\notag \\
\geq & 0.
\end{align}
One can pass from the second line to the third line by Fubini's theorem. The last line follows from \eqref{chap4: variational inequality}. Therefore,
\begin{equation}
-\nabla \cdot (e^{ s}\nabla \tilde{w})-Ke^{- s}\tilde{w}^q +\lambda f 
\end{equation}
is a positive distribution on $S_R$. By a theorem of Schwartz (see Schwartz \cite{schwartz1959theorie}), it must be a positive Borel measure:
\begin{equation}
-\nabla \cdot (e^{ s}\nabla \tilde{w})-Ke^{- s}\tilde{w}^q +\lambda f = d\mu.
\end{equation}
Let us write $\tilde{w}=w_1+w_2$, where $w_1 \in H^1_0(S_R)$ weakly solves
\begin{equation}
-\nabla \cdot (e^{ s} \nabla w_1)-Ke^{- s}\tilde{w}^q +\lambda f =0,
\end{equation}
and $w_2 \in H^1_0(S_R)$ weakly solves
\begin{equation}
-\nabla \cdot (e^{ s} \nabla w_2) = d\mu.
\end{equation}
From the range of $q$ and the fact that $\tilde{w} \in H^1_0(S_R)\subset L^6(S_R) $, we have $\tilde{w}^q \in L^2(S_R)$. By standard elliptic regularity theory, $w_1$ is continuous. We next show that $w_2$ is lower semicontinuous, following Lewy and Stampacchia \cite{lewy1969regularity}.

\begin{lemma}\label{chap4: Green representation of w_2}
Let $\tilde{S}$ be any ball contained in $S_R$. Let $G(\mathbf{x},\mathbf{y})$ be the Dirichlet Green's function of $\tilde{S}$ with respect to the operator $-\nabla \cdot (e^{ s}\nabla)$, i.e.
\begin{align*}
-\nabla_\mathbf{x}(e^{ s}\nabla_\mathbf{x} G(\mathbf{x},\mathbf{y})) & = \delta _\mathbf{y} \  \ \  \  \text{on } \tilde{S}\\
G(\mathbf{x},\mathbf{y}) & = 0 \      \  \       \ \text{on } \partial \tilde{S}
\end{align*}
then 
\begin{equation}\label{chap4: Green rep of w_2}
w_2(\mathbf{x})= \int_{\tilde{S}}G(\mathbf{x},\mathbf{y})d\mu(\mathbf{y})- \int_{\partial \tilde{S}}e^{ s(\mathbf{y})}\frac{\partial G(\mathbf{x},\mathbf{y})}{\partial n(\mathbf{y})}w_2(\mathbf{y})d\sigma(\mathbf{y})
\end{equation}
in $\tilde{S}$, where $\sigma$ is the standard surface measure on $\partial \tilde{S}$.
\end{lemma}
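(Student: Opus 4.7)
The plan is to prove \eqref{chap4: Green rep of w_2} by showing that both sides are weak $H^1(\tilde{S})$ solutions of the same Dirichlet problem for $L := -\nabla \cdot (e^{s}\nabla \cdot)$ on $\tilde{S}$, namely
\[
L w = d\mu \quad \text{in } \tilde{S}, \qquad w\big|_{\partial \tilde{S}} = w_2\big|_{\partial \tilde{S}},
\]
and then invoking uniqueness. A preliminary observation is that $\mu$ is simultaneously a finite positive Borel measure on $\tilde{S}$ (test the distributional identity $Lw_2 = d\mu$ on $S_R$ against a cutoff $\chi \in C_c^\infty(S_R)$ with $\chi \equiv 1$ on $\tilde{S}$, giving $\mu(\tilde{S}) \leq \int e^{s}\nabla w_2 \cdot \nabla \chi\, d\mathbf{x}$) and an element of $H^{-1}(\tilde{S})$ (the weak form $\int e^{s}\nabla w_2 \cdot \nabla \phi\, d\mathbf{x} = \int \phi\, d\mu$, a priori valid for $\phi \in C_c^\infty(\tilde{S})$, extends by continuity to $\phi \in H^1_0(\tilde{S})$ because the left-hand side is continuous there). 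Since $e^{s}$ is smooth and positive, $G$ enjoys the standard Green's function bound $G(\mathbf{x},\mathbf{y}) \leq C|\mathbf{x}-\mathbf{y}|^{-1}$ on the ball $\tilde{S}$, and $G(\mathbf{x},\cdot)$ is smooth away from $\mathbf{x}$ and up to $\partial \tilde{S}$. Hence both integrals on the right-hand side of \eqref{chap4: Green rep of w_2} converge absolutely for each $\mathbf{x} \in \tilde{S}$.

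Split the candidate representation as $u = u_1 + u_2$ where $u_1(\mathbf{x}) = \int_{\tilde{S}} G(\mathbf{x},\mathbf{y})\, d\mu(\mathbf{y})$ and $u_2(\mathbf{x}) = -\int_{\partial \tilde{S}} e^{s(\mathbf{y})}\partial_{n(\mathbf{y})} G(\mathbf{x},\mathbf{y}) w_2(\mathbf{y})\, d\sigma(\mathbf{y})$. By Lax--Milgram there is a unique $\tilde{u} \in H^1_0(\tilde{S})$ with $L\tilde{u} = \mu|_{\tilde{S}}$; I claim $u_1 = \tilde{u}$, so that $u_1 \in H^1_0(\tilde{S})$ and $Lu_1 = d\mu$ weakly on $\tilde{S}$. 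To see this, pick $\psi \in C_c^\infty(\tilde{S})$ and let $\phi \in H^1_0(\tilde{S})$ solve $L\phi = \psi$; by elliptic regularity $\phi \in C^\infty(\overline{\tilde{S}})$, and by symmetry $\phi(\mathbf{y}) = \int G(\mathbf{x},\mathbf{y})\psi(\mathbf{x})\, d\mathbf{x}$. Then
\[
\int_{\tilde{S}} \tilde{u}\, \psi\, d\mathbf{x} = \int_{\tilde{S}} e^{s}\nabla \tilde{u} \cdot \nabla \phi\, d\mathbf{x} = \int_{\tilde{S}} \phi\, d\mu = \int_{\tilde{S}} \psi(\mathbf{x}) u_1(\mathbf{x})\, d\mathbf{x}
\]
after Fubini, so $\tilde{u} = u_1$ almost everywhere. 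Next, $u_2$ is the classical Poisson (double-layer) representation for $L$ on the smooth domain $\tilde{S}$, which is $L$-harmonic in $\tilde{S}$ and whose $H^{1/2}$ trace on $\partial \tilde{S}$ equals $w_2\big|_{\partial \tilde{S}}$. Consequently $u = u_1 + u_2 \in H^1(\tilde{S})$ solves $Lu = d\mu$ weakly in $\tilde{S}$ with $u\big|_{\partial \tilde{S}} = w_2\big|_{\partial \tilde{S}}$. Since $w_2|_{\tilde{S}}$ satisfies the same Dirichlet problem, the difference $v = w_2 - u \in H^1_0(\tilde{S})$ satisfies $Lv = 0$; testing against $v$ itself gives $\int_{\tilde{S}} e^{s}|\nabla v|^2\, d\mathbf{x} = 0$, so $w_2 = u$ in $\tilde{S}$.

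The main obstacle is the Fubini step identifying $u_1$ with the Lax--Milgram solution $\tilde{u}$, which requires pairing the Radon measure $\mu$ against the smooth bounded function $\phi$. This pairing is unambiguous because $\phi \in C(\overline{\tilde{S}})$ with $\phi = 0$ on $\partial \tilde{S}$, and the double integral of $G(\mathbf{x},\mathbf{y})\psi(\mathbf{x})$ against $d\mathbf{x} \otimes d\mu$ is absolutely convergent by the pointwise bound on $G$ and the compact support of $\psi$. A secondary technicality is the boundary-trace identification for the Poisson representation $u_2$ when $w_2\big|_{\partial \tilde{S}} \in H^{1/2}(\partial \tilde{S})$ is only $H^{1/2}$; this is handled by approximating $w_2\big|_{\partial \tilde{S}}$ by smooth boundary data, where the identification is classical, and passing to the limit using continuity of the Poisson operator from $H^{1/2}(\partial \tilde{S})$ to $H^1(\tilde{S})$. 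With these points in place, the uniqueness argument closes the proof.
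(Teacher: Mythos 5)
Your proof is correct, but it is organized as a \emph{verification}: you build the right-hand side of \eqref{chap4: Green rep of w_2}, split it into the Green potential $u_1$ of $\mu$ and the Poisson (double-layer) extension $u_2$ of the trace $w_2|_{\partial\tilde{S}}$, show $u_1\in H^1_0(\tilde S)$ with $Lu_1=\mu$ via Lax--Milgram, show $u_2$ is $L$-harmonic with the right trace, and conclude $w_2=u_1+u_2$ by uniqueness of the Dirichlet problem. The paper instead runs a direct \emph{derivation}: for a smooth compactly supported test function $\varphi\geq 0$ on a smaller ball $S\subset\tilde S$, it solves $Lu=\varphi$ on $\tilde S$ with zero boundary data, computes $\int w_2\varphi=\int u\,d\mu-\int_{\partial\tilde S}e^{s}w_2\,\partial_n u\,d\sigma$ by Green's identity, substitutes $u(\mathbf{x})=\int_S G(\mathbf{x},\mathbf{y})\varphi(\mathbf{y})\,d\mathbf{y}$, and applies Fubini. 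Both routes rest on the same technical point you flag --- extending the weak identity $Lw_2=d\mu$ from $C^\infty_c$ test functions to $H^1_0\cap C(\overline{\tilde S})$ functions vanishing on $\partial\tilde S$ --- but the paper's single integration-by-parts identity produces the boundary term automatically and never needs to invoke the mapping properties of the Poisson operator or the Lax--Milgram existence theory. Your decomposition is more modular and makes the role of each piece transparent, at the price of requiring the (nontrivial for variable coefficients, though true) claim that the Poisson kernel $-e^{s(\mathbf{y})}\partial_{n(\mathbf{y})}G(\mathbf{x},\mathbf{y})$ reproduces $L$-harmonic functions and is continuous from $H^{1/2}(\partial\tilde S)$ to $H^1(\tilde S)$; in the paper's version that fact is never used because the boundary term simply appears as $\partial_n u$ for the explicitly constructed solution $u$.
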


\begin{proof}
Pick any ball $S$ contained in $\tilde{S}$. $\forall \varphi \in C^{\infty}_0(S)$, $\varphi \geq 0$, $\exists u$ solving
\begin{align*}
-\nabla \cdot (e^{ s}\nabla u) &= \varphi \  \ \  \  \text{on } \tilde{S}\\
u &= 0 \  \ \  \  \text{on } \partial \tilde{S}
\end{align*}
It follows from the Green's theorem that
\begin{equation}
u(\mathbf{x})=\int_S G(\mathbf{x},\mathbf{y})\varphi (\mathbf{y}) ~d\mathbf{y}.
\end{equation}
Now
\begin{align*}
& \int_S w_2(\mathbf{x}) \varphi (\mathbf{x}) ~d\mathbf{x}\\
= & \int_{\tilde{S}}w_2(\mathbf{x}) \varphi (\mathbf{x}) ~d\mathbf{x}\\
= & -\int_{\tilde{S}}w_2 \nabla \cdot (e^{ s}\nabla u)~d\mathbf{x}\\
= & \int_{\tilde{S}}e^{ s}\nabla w_2\cdot \nabla u ~d\mathbf{x} - \int_{\partial \tilde{S}}e^{ s}w_2\frac{\partial u}{\partial n}d\sigma\\
= & \int_{\tilde{S}} u d\mu - \int_{\partial \tilde{S}}e^{ s}w_2 \frac{\partial u}{\partial n} d\sigma\\
= & \int_{\tilde{S}} \bigg( \int_S G(\mathbf{x},\mathbf{y}) \varphi(\mathbf{y}) ~d\mathbf{y} \bigg) d\mu(\mathbf{x}) - \int_{\partial \tilde{S}} e^{ s(\mathbf{x})}w_2(\mathbf{x})\bigg( \int_S\frac{\partial G(\mathbf{x},\mathbf{y})}{\partial n(\mathbf{x})}\varphi(\mathbf{y}) ~d\mathbf{y} \bigg) d\sigma(\mathbf{x})\\
=& \int_S \bigg( \int_{\tilde{S}}G(\mathbf{x},\mathbf{y})d\mu(\mathbf{x}) \bigg)\varphi(\mathbf{y})~d\mathbf{y} - \int_S \bigg(\int_{\tilde{S}}e^{ s(\mathbf{x})}w_2(\mathbf{x})\frac{\partial G(\mathbf{x},\mathbf{y})}{\partial n(\mathbf{x})} d\sigma(\mathbf{x})\bigg) \varphi(\mathbf{y}) ~d\mathbf{y}.
\end{align*}
The last equality follows from Fubini's theorem and the fact that $G(\mathbf{x},\mathbf{y})>0$ when $\mathbf{x} \neq \mathbf{y}$.
\end{proof}

\begin{proof}[Proof of theorem \ref{chap4: solution 1<q<3 bounded}]
Without loss of generality, we can assume $\alpha =1$ in \eqref{chap4: eq: rescaled eq}. When $\mathbf{x}$ is in a compact subset of $\tilde{S}$, and $\mathbf{y}$ on $\partial \tilde{S}$, $ \frac{\partial G(\mathbf{x},\mathbf{y})}{\partial n(\mathbf{y})}$ is a smooth function in $\mathbf{x}$ and $\mathbf{y}$. Hence first term in \eqref{chap4: Green rep of w_2} is continuous in $\mathbf{x}$. Also notice that $G(\mathbf{x},\mathbf{y})$ is a pointwise limit of 
\begin{equation*}
G_a(\mathbf{x},\mathbf{y})=
\begin{cases}
G(\mathbf{x},\mathbf{y}) &\mbox{if } G(\mathbf{x},\mathbf{y})\leq a\\
a &\mbox{if } G(\mathbf{x},\mathbf{y})>a
\end{cases}
\end{equation*}
and that 
\begin{equation}\label{chap4: int of G_a}
\int_{\tilde{S}}G_a(\mathbf{x},\mathbf{y}) d\mu(\mathbf{y})
\end{equation}
is continuous in $\mathbf{x}$ on $\tilde{S}$. By the monotone convergence theorem, 
\begin{equation*}
\int_{\tilde{S}}G(\mathbf{x},\mathbf{y}) d\mu(\mathbf{y})
\end{equation*}
is an increasing pointwise limit of \eqref{chap4: int of G_a}, and hence is lower semicontinuous. We can now conclude that $w_2$, and $\tilde{w}$ also, are lower semicontinuous. This implies that the set $U_+ = \big\{ \mathbf{x}\in S_R ~\big|~ \tilde{w}(\mathbf{x})>0 \big\}$ is open. If $\varphi \in C^{\infty}_0(U_+)$, then $\tilde{w} + t S(\varphi) \in W$ for $|t|$ sufficiently small. A similar calculation as \eqref{chap4: S calculation} will show that
\begin{equation}
\int_{S_R}\bigg(e^{ s}\nabla \tilde w\cdot \nabla \varphi -Ke^{- s}\tilde{w}^q\varphi + \lambda f\varphi\bigg) ~d\mathbf{x} = 0.
\end{equation}
In other words, $\tilde{w}$ solves
\begin{equation}
\nabla \cdot (e^{ s} \nabla w) + Ke^{- s}w^q - \lambda f = 0
\end{equation}
weakly on $U_+$. Regularity of the solution follows from standard elliptic regularity.
\end{proof}

\section{Existence of Solution for Given Gas Density}
\label{chap4: given gas configuration}

We prove theorems \ref{chap4: thm: solve velocity field: smooth} and \ref{chap4: thm: solve velocity field holder rho} in this section. To avail ourselves in establishing regularity at $r=0$, let us prove the following lemma.

\begin{lemma}\label{chap4: regularity at zero}
Let $f : (-\epsilon, \epsilon) \times (-\epsilon, \epsilon) \to \mathbb{R}$ be such that $f(-r,z) = -f(r,z) $, and assume that $f \in C^k$, $k\geq 1$, then the function
\begin{equation}
g(r,z)=
\begin{cases}
\frac{f(r,z)}{r}   \quad & r\neq 0\\
f_r(0,z) \quad & r=0
\end{cases}
\end{equation}
is in $C^{k-1}$.
\end{lemma}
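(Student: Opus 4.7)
The plan is to express $g$ as a parameter integral that is manifestly defined and smooth across $r=0$, so the claim becomes a routine application of differentiation under the integral sign.

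First I would use the oddness $f(-r,z) = -f(r,z)$ to conclude $f(0,z) = 0$ for all $z$. Then by the fundamental theorem of calculus applied along the segment from $(0,z)$ to $(r,z)$,
\begin{equation*}
f(r,z) = f(r,z) - f(0,z) = \int_0^1 \frac{d}{dt}\big[f(tr,z)\big]\,dt = r\int_0^1 f_r(tr,z)\,dt.
\end{equation*}
Hence for every $(r,z)$ with $r\neq 0$,
\begin{equation*}
g(r,z) = \frac{f(r,z)}{r} = \int_0^1 f_r(tr,z)\,dt,
\end{equation*}
and at $r=0$ the right-hand side evaluates to $\int_0^1 f_r(0,z)\,dt = f_r(0,z) = g(0,z)$. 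Thus the single formula
\begin{equation*}
g(r,z) = \int_0^1 f_r(tr,z)\,dt
\end{equation*}
holds on the full square $(-\epsilon,\epsilon)\times(-\epsilon,\epsilon)$, eliminating the piecewise definition.

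The second step is a direct regularity argument. Since $f \in C^k$, we have $f_r \in C^{k-1}$ on $(-\epsilon,\epsilon)\times(-\epsilon,\epsilon)$. The integrand $(r,z,t) \mapsto f_r(tr,z)$ is then $C^{k-1}$ jointly in all three variables on a compact range of $t \in [0,1]$. Standard theorems on differentiation of integrals depending on a parameter let us commute $\partial_r$ and $\partial_z$ through the integral up to order $k-1$, with each derivative bounded uniformly by a continuous function of $(r,z)$. Therefore $g \in C^{k-1}$ on $(-\epsilon,\epsilon)\times(-\epsilon,\epsilon)$, which is exactly the assertion.

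There is essentially no obstacle here: the only subtlety worth flagging is that the oddness assumption is used solely to ensure $f(0,z)=0$, which is what allows the integral representation to agree with $f(r,z)/r$ rather than with $(f(r,z)-f(0,z))/r$. Once that is noted, the proof reduces to the parameter-integral lemma.
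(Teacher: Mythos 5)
Your proof is correct and follows essentially the same route as the paper: both rewrite $g(r,z)$ as the parameter integral $\int_0^1 f_r(tr,z)\,dt$, which is valid across $r=0$ once $f(0,z)=0$ is noted, and then invoke differentiation under the integral sign to conclude $g\in C^{k-1}$. No gaps.
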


\begin{proof}
Obviously $f(0,z)=0$, hence for $r\neq 0$,
\begin{align}
& g(r,z) \notag \\
= & \frac{1}{r}(f(r,z)-f(0,z)) \notag \\
= & \frac{1}{r}\int_0^r f_s(s,z) ds \notag \\
= & \frac{1}{r}\int_0^1 f_s(rs,z) r ds \notag \\
= & \int_0^1 f_s(rs,z) ds
\end{align}
Apparently the same equation is true for $r=0$, and the assertion is clear from this formula.
\end{proof}

\begin{proof}[Proof of theorem \ref{chap4: thm: solve velocity field: smooth}]
Let us write \eqref{chap4: eq: non-isentropic vector rotating star} in cylindrical coordinates:
\begin{equation}\label{chap4: eq: Euler-Poisson eq component}
\begin{cases}
p_r  = \rho(B\rho)_r+\rho r \Omega^2 \\
p_z =\rho(B\rho)_z \\ 
\end{cases}
\end{equation}
From the definition of $B\rho$, we get
\begin{equation}
(B\rho)_z(\mathbf{x})=\int_D \frac{\rho_z(\mathbf{y})}{|\mathbf{x}-\mathbf{y}|}~d\mathbf{y}.
\end{equation}
Therefore $B\rho_z(r,-z)=-B\rho_z(r,z)$ and $B\rho_z(r,z)>0$ when $z<0$, by hypothesis 4 and the symmetry of $\rho$ and $D$. Let
\begin{equation}\label{chap4: sol: pressure}
p(r,z)=\int_{-\psi(r)}^z \rho(r,\xi)B\rho_{\xi}(r,\xi)d\xi.
\end{equation}
From now on we allow $r$ to take negative values by evenly extending all the relevant functions across $r=0$. It is easily seen that $p>0$ in $D$, $p=0$ on $\partial D$ and that $p$ satisfies the second equation in \eqref{chap4: eq: Euler-Poisson eq component}. Since $\rho\in C^{k}(\bar{D})$ and $\partial D$ is smooth, we have $B\rho\in C^{k+1}(\bar{D})$. It is not difficult to see that $p\in C^k(\bar{D})$. Differentiate \eqref{chap4: sol: pressure} under the integral sign, we get
\begin{equation}
p_r(r,z)=\int_{-\psi(r)}^z \big( \rho_r(r,\xi)B\rho_{\xi}(r,\xi)+\rho(r,\xi) B\rho_{r \xi}(r,\xi)\big)~d\xi.
\end{equation}
By the first equation in \eqref{chap4: eq: Euler-Poisson eq component}, when $r>0$, $\Omega^2$ has to have the form:
\begin{align}\label{chap4: sol: ang vel}
\Omega^2 &= \frac{1}{r\rho} (p_r - \rho B\rho_r) \notag \\
 & = \frac{1}{r\rho} \bigg(\int_{-\psi(r)}^z \big( \rho_r B\rho_{\xi}+\rho B\rho_{r \xi}\big)~d\xi - \rho B\rho_r \bigg)\notag \\
 &= \frac{1}{r\rho} \bigg(\int_{-\psi(r)}^z \big( \rho_r B\rho_{\xi} + \rho B\rho_{r \xi}\big)~d\xi - \int_{-\psi(r)}^z \big(\rho B\rho _r\big)_{\xi} ~d\xi \bigg) \notag \\ 
 &= \frac{1}{r\rho} \int_{-\psi(r)}^z \big( \rho_r B\rho_{\xi} - \rho_{\xi} B\rho_r \big)~d\xi.
\end{align}
\eqref{chap4: sol: ang vel} is non-negative on $D$ by hypothesis 3. Define $\Omega^2$ by \eqref{chap4: sol: ang vel}, when $r>0$, and if $D$ contains points at $r=0$, by
\begin{equation}
\Omega^2(0,z)= \frac{1}{\rho}(p_r - \rho B\rho_r)_r(0,z)
\end{equation} 
Notice $p_r - \rho (B\rho)_r$ is odd in $r$. By lemma \ref{chap4: regularity at zero}, $\rho \Omega^2 \in C^{k-2}(D)$, hence so is $\Omega^2$. Such $p$ and $\Omega^2$ obviously satisfy \eqref{chap4: eq: Euler-Poisson eq component}. It remains to show that $\Omega^2$ extends to a continuous function on $\bar{D}$. Let us consider the following three cases: 
\begin{enumerate}
\item{
Let $r_0$ be a nonzero radius such that $(r_0,-\psi(r_0))\in \partial D$ and $\psi(r_0)>0$. 
\begin{align}
& \lim_{(r,z) \to (r_0,-\psi(r_0))} \Omega^2(r,z) \notag \\
= & \lim_{(r,z) \to (r_0,-\psi(r_0))} \frac{1}{r\rho} \int_{-\psi(r)}^z \big( \rho_r B\rho_{\xi} - \rho_{\xi} B\rho_r \big)~d\xi \notag \\
= & \lim_{(r,z) \to (r_0,-\psi(r_0))} \frac{1}{r \rho_z} \big( \rho_r B\rho_z  - \rho_z B\rho_r \big) \notag \\
= & \frac{1}{r_0 \rho_z(r_0,-\psi(r_0))} \big( \rho_r B\rho_z - \rho_z B\rho_r \big)(r_0,-\psi(r_0)).
\end{align}
Here we have used the differential mean value theorem and hypothesis 4.
}
\item{
If $\partial D$ contains points at $r=0$, since $\partial D$ is smooth and symmetric about $z=0$ and $r=0 $, we have $\psi(0)>0$, $\psi'(0)=0$. Hence
\begin{align}
& \lim_{\substack{(r,z) \to (0,-\psi(0)) \\ r\neq 0}} \Omega^2(r,z) \notag \\
= & \lim_{\substack{(r,z) \to (0,-\psi(0)) \\ r\neq 0}} \frac{1}{r\rho} \int_{-\psi(r)}^z \big( \rho_r B\rho_{\xi} - \rho_{\xi} B\rho_r \big)~d\xi \notag \\
= & \lim_{\substack{(r,z) \to (0,-\psi(0)) \\ r\neq 0}} \frac{1}{r\rho_z}  \big( \rho_r B\rho_z - \rho_z B\rho_r \big) \notag \\
= & \frac{1}{\rho_z(0,-\psi(0))} \big( \rho_r B\rho_z- \rho_z B\rho_r \big)_r(0,-\psi(0)),
\end{align}
and
\begin{align}
& \lim_{z \to -\psi(0)} \Omega^2(0,z) \notag \\
= & \lim_{z \to -\psi(0)} \frac{1}{\rho}(p_r - \rho B\rho_r)_r(0,z) \notag \\
= & \lim_{z \to -\psi(0)} \frac{1}{\rho(0,z)}  \bigg( \int_{-\psi(0)}^z \big( \rho_rB\rho_{\xi} - \rho_{\xi} B\rho_r \big)_r d\xi \notag \\
& \qquad \qquad \qquad \quad  + \psi'(0) \big( \rho_rB\rho_z - \rho_z B\rho_r \big)(0,-\psi(0))  \bigg)\notag \\
= & \frac{1}{\rho_z(0,-\psi(0))} \big( \rho_r B\rho_z - \rho_z B\rho_r \big)_r(0,-\psi(0)).
\end{align}
}
\item{
Let $r_0$ be such that $\psi(r_0)=0$. When $(r,z)$ gets close to $(r_0,0)$ and $z\leq 0$, we observe by the differential mean value theorem that
\begin{align}
& \Omega^2(r,z) \notag \\
= &\frac{1}{r \rho} \int_{-\psi(r)}^z \big( \rho_r B\rho_{\xi} - \rho_{\xi} B\rho_r \big)~ d\xi \notag \\
= &\frac{1}{r \rho_z} \big( \rho_r B\rho_z - \rho_z B\rho_r \big)(r, z') \notag \\
= &\frac{1}{r\rho_{zz}} \big( \rho_r B\rho_z - \rho_z B\rho_r \big)_z(r,z''),
\end{align}
where $z'$ is between $-\psi(r)$ and $z$, and $z''$ is between $z'$ and $0$. Therefore
\begin{align}
\lim_{(r,z) \to (r_0,0)}\Omega^2(r,z) = \frac{1}{r\rho_{zz}} \big( \rho_r B\rho_z - \rho_z B\rho_r \big)_z(r_0,0).
\end{align}
We use hypothesis (a) to conclude that the limit is finite.
}
\end{enumerate}
\end{proof}

\begin{remark}
It is possible to establish higher regularity for $\Omega^2$ at the first two types of boundary points. However, at the third type of boundary points, $\psi'(r_0)=\infty$, in order to get higher regularity, we need conditions on how fast $\psi'(r)$ grows at around $r_0$, which we do not employ ourselves doing here.
\end{remark}

With relaxed regularity conditions at the boundary, the same computation works if further growth conditions are imposed on the derivatives of $\rho$ when close to the boundary. Let us now give

\begin{proof}[Proof of theorem \ref{chap4: thm: solve velocity field holder rho}]
As before, we define 
\begin{equation}
p(r,z)=\int_{-\psi(r)}^z \rho(r,\xi)  B\rho_{\xi}(r,\xi)~d\xi.
\end{equation}
It follows from hypothesis 1, 2, 4 that $p>0$ in $D$, $p=0$ on $\partial D$ and that $p$ satisfies the second equation in \eqref{chap4: eq: Euler-Poisson eq component}. Since $\rho\in C^{\beta}(\bar{D})$, we have $ B\rho \in C^{2,\beta}(\bar{D})$, hence $p \in C^0(\bar{D})$. Now let us calculate the $r$ partial derivative of $p$. In the following, let $ b=\max_{r\leq s \leq r+h}\big( -\psi(s) \big)$.
\begin{align}
& \frac{1}{h}\bigg( \int_{-\psi(r+h)}^z \rho  B\rho_{\xi}(r+h, \xi) ~d\xi -\int_{-\psi(r)}^ z \rho B\rho_\xi (r,\xi) ~d\xi \bigg)\notag \\
= & \frac{1}{h}\bigg( \int_{-\psi(r+h)}^b \rho B\rho_{\xi}~d\xi - \int _{-\psi(r)}^b \rho B\rho_{\xi}~d\xi + \int_b^ z \big( \rho B\rho_{\xi}(r+h, \xi)- \rho  B\rho_\xi (r,\xi) \big)~d\xi \bigg).
\end{align}
It is easily seen that the first two terms converge to $0$ as $h$ goes to $0$. Let us focus on the last term:
\begin{align}\label{chap4: cal r partial of p}
& \frac{1}{h} \int _b^z \big( \rho B\rho_{\xi}(r+h, \xi)- \rho  B\rho_\xi (r,\xi) \big)~d\xi \notag \\
= & \int _b^z \big( \rho B\rho_{\xi}  \big)_r (r',\xi) ~d\xi,
\end{align}
where $r'$ is between $r$ and $r+h$. We will use the dominated convergence theorem to compute the limit of \eqref{chap4: cal r partial of p}. For that purpose we need an estimate on $\chi_{(b,z)}\big( \rho B\rho_{\xi}  \big)_r (r',\xi)$. For the moment let us assume $z<0$. By hypothesis 5, there is a $C>0$ such that $|\rho_r|\leq C|\rho_{\xi}|$, $|\rho_{rr}|\leq C|\rho_{\xi}|$, $|\rho_{r\xi}|\leq C |\rho_{\xi}|$ for $\xi< z$. Also $\rho_{\xi}>0$. Therefore
\begin{align}\label{chap4: estimates of integrand for r partial of p}
& |\big( \rho B\rho_{\xi}  \big)_r (r',\xi)| \notag \\
\leq & C_1|\rho_r(r',\xi)|+C_2|\rho (r',\xi)| \notag \\
\leq & C_1 \bigg| \rho_r(r,\xi)+ \int _r^{r'}\rho_{ss}(s,\xi)ds \bigg| + C_2 \bigg|\rho(r,\xi)+\int _r^{r'}\rho_s(s,\xi)ds \bigg| \notag \\
\leq & (C_1+C_2)\bigg(|\rho_r(r,\xi)|+|\rho(r,\xi)|+C\int _{r-h_0}^{r+h_0}\rho_{\xi}(s,\xi)ds \bigg) \notag \\
\leq & \tilde{C}\bigg( \rho_{\xi}(r,\xi) + \rho(r,\xi) + \int _{r-h_0}^{r+h_0}\rho_{\xi}(s,\xi)ds \bigg),
\end{align}
for some fixed $h_0>h$. In the integral term, if $(s,\xi)$ lies outside $D$, then extend $\rho_{\xi}$ to be $0$. The fact that the integral of \eqref{chap4: estimates of integrand for r partial of p} is finite is manifested by the following:
\begin{equation}
\int _{-\psi(r)}^z \rho_{\xi}(r,\xi) ~d\xi = \rho(r,z)
\end{equation}
\begin{equation}
\int _{-\psi(r)}^z  \rho(r,\xi) ~d\xi < \infty
\end{equation}
\begin{align}
& \int _{-\psi(r)}^z \int _{r-h_0}^{r+h_0}\rho_{\xi}(s,\xi)~ds d\xi \notag \\
\leq & \int _{r-h_0}^{r+h_0} \int _{-\psi(s)}^z \rho_{\xi}(s,\xi) ~d\xi ds \notag \\
\leq & \int _{r-h_0}^{r+h_0} \rho(s,z)ds \notag \\
< & \infty.
\end{align}
Therefore, by the dominated convergence theorem,
\begin{equation}
p_r(r,z)=\int_{-\psi(r)}^z \big( \rho B\rho_\xi \big)_r ~d\xi.
\end{equation}
Now if $z \geq 0$, the integral in \eqref{chap4: cal r partial of p} can be broken into two pieces: one from $b$ to $z'$ and the other from $z'$ to $z$, for some $z'<0$. Notice that the second piece lies completely inside $D$, where $\rho$ is $C^2$, so the limit is the same as before. We have proved $p \in C^1(D)$. Now define $\Omega^2$ by
\begin{equation}\label{chap4: omega square}
 \frac{1}{r\rho} \int_{-\psi(r)}^z \big( \rho_r B\rho_{\xi} - \rho_{\xi} B\rho_r \big)~d\xi.
\end{equation}
when $r>0$, and if $D$ contains points at $r=0$, by
\begin{equation}
\Omega^2(0,z)= \frac{1}{\rho}\int_{-\psi(0)}^z \big( \rho_r B\rho_{\xi} - \rho_{\xi} B\rho_r \big)_r ~d\xi
\end{equation} 
The convergence of these integrals are guaranteed by hypothesis 5. It is easy to verify that such $p$ and $\Omega^2$ satisfy \eqref{chap4: eq: Euler-Poisson eq component}. Let us now show that $\Omega^2$ is continuous on $D$. Since $\partial D$ is smooth and convex at $(0,-\psi(0))$, $ -\psi(r)=\max_{0\leq s \leq r}\big( -\psi(s) \big)$ for $r$ small enough. Therefore,
\begin{align}
& \frac{1}{r}\int_{-\psi(r)}^z \big( \rho_r B\rho_{\xi} - \rho_{\xi} B\rho_r \big)~d\xi \notag \\
= & \int_{-\psi(r)}^z \big( \rho_r B\rho_{\xi} - \rho_{\xi} B\rho_r \big)_r(r',\xi)~d\xi,
\end{align}
where $r'$ is between $0$ and $r$. As before we assume $z<0$ and estimate the integrand,
\begin{align}\label{chap4: estimates of omega as r goes to zero}
& |\big( \rho_r B\rho_{\xi} - \rho_{\xi} B\rho_r \big)_r(r',\xi)| \notag \\
\leq & C_1(|\rho_{rr}(r',\xi)|+|\rho_{r\xi}(r',\xi)|+|\rho_r(r',\xi)|+|\rho_{\xi}(r',\xi)|) \notag \\
\leq & \tilde{C}\rho_{\xi}(r',\xi) \notag \\
\leq & \tilde{C}\bigg( \rho_{\xi}(0,\xi) + \int_0^{r'}\rho_{s\xi}(s,\xi) ds \bigg)\notag \\
\leq & \tilde{C}\bigg( \rho_{\xi}(0,\xi) + \int_0^{r_0}\rho_{\xi}(s,\xi) ds \bigg), 
\end{align}
where $r_0>r$ is small and fixed. As before, \eqref{chap4: estimates of omega as r goes to zero} has a finite $\xi$ integral. By the dominated convergence theorem, 
\begin{equation}\label{chap4: cont at r=0}
\lim_{\substack{(r,z)\rightarrow (0,z_0)\\r\neq 0, z_0<0}}\Omega^2(r,z)=\Omega^2(0,z).
\end{equation}
Again by splitting the integral \eqref{chap4: omega square} into boundary and interior parts, \eqref{chap4: cont at r=0} continues to be true when $z_0\geq 0$, and $\Omega^2(0,z)$ is evidently continuous in $z$. This establishes the continuity of $\Omega^2$ at $D\cap \{r=0\}$. The continuity of $\Omega^2$ away from the $z$ axis is obvious. It remains to show that $\Omega^2\in L^{\infty}(D)$. Let us consider the following three cases: 
\begin{enumerate}
\item{
Let $r_0$ be a nonzero radius such that $(r_0,-\psi(r_0))\in \partial D$ and $\psi(r_0)>0$. When $ z<-\frac{1}{2}\psi(r_0)$,
\begin{align}
& \Omega^2(r_0,z) \notag \\
= & \frac{1}{r_0\rho} \int_{-\psi(r_0)}^z \big( \rho_r B\rho_{\xi} - \rho_{\xi} B\rho_r \big)~d\xi \notag \\
= & \frac{1}{r_0 \rho_z} \big( \rho_r B\rho_z - \rho_z B\rho_r \big)(r_0,z') \notag \\
\leq & \frac{1}{r_0}(C| B\rho_z|+| B\rho_r|)\notag \\
\leq & \frac{\tilde{C}}{r_0},
\end{align}
where $C$ is given by hypothesis 5.
}
\item{
If $\partial D$ contains points at $r=0$, as $(r,z)$ gets close to $(0,-\psi(0))$, $r\neq 0$,
\begin{align}
& \Omega^2(r,z) \notag \\
= & \frac{1}{r\rho} \int_{-\psi(r)}^z \big( \rho_r B\rho_{xi} - \rho_{\xi} B\rho_r \big)~d\xi \notag \\
= & \frac{1}{r\rho_z}  \big( \rho_r B\rho_z - \rho_z B\rho_r \big)(r,z') \notag \\
= & \frac{\big( \rho_r B\rho_z - \rho_z B\rho_r \big)_r}{\rho_z + r \rho_{rz}}(r',z') \notag \\
\leq & \frac{\tilde{C}}{1+r' C}
\end{align}
where $z'$ is between $-\psi(r)$ and $z$, and $r'$ is between $0$ and $r$. In this process we have used the mean value theorem several times, the justification being that the convexity of $\partial D$ at $(0,-\psi(0))$ guarantees that all the relevant segments lie inside $D$. On the other hand if $r=0$,
\begin{align}
& \Omega^2(0,z) \notag \\
= & \frac{1}{\rho}\int_{-\psi(0)}^z \big( \rho_r B\rho_{\xi} - \rho_{\xi} B\rho_r \big)_r ~d\xi \notag \\
= & \frac{1}{\rho_z} \big( \rho_r B\rho_z - \rho_z B\rho_r \big)_r(0,z')\notag \\
\leq & \tilde{C}.
\end{align}
}
\item{
Let $r_0$ be such that $\psi(r_0)=0$. When $(r,z)$ gets close to $(r_0,0)$ and $z\leq 0$, 
\begin{align}
& \Omega^2(r,z) \notag \\
= &\frac{1}{r \rho} \int_{-\psi(r)}^z \big( \rho_r  B\rho_{\xi} - \rho_{\xi}  B\rho_r \big) ~d\xi \notag \\
= &\frac{1}{r \rho_z} \big( \rho_r  B\rho_z - \rho_z  B\rho_r \big)(r, z') \label{chap4: omega close to boundary on z=0 1} \\ 
= &\big( \frac{\rho_r  B\rho_z}{r \rho_z}  - \frac{ B\rho_r}{r} \big)(r, z') \label{chap4: omega close to boundary on z=0 2} \\ 
= &\frac{1}{r\rho_{zz}} \big( \rho_r  B\rho_z\big)_z(r,z'')- \frac{ B\rho_r}{r} (r, z')\label{chap4: omega close to boundary on z=0 3},
\end{align}
where $z'$ is between $-\psi(r)$ and $z$, and $z''$ is between $z'$ and $0$. If hypothesis (a) is satisfied, by \eqref{chap4: omega close to boundary on z=0 3}, $\Omega^2(r,z)$ is bounded. If hypothesis (a') is satisfied, by \eqref{chap4: omega close to boundary on z=0 1} and the fact that $\rho_z>0$, $ B\rho_z>0$ when $z<0$, 
\begin{align}
0\leq \Omega^2(r,z)\leq -\frac{1}{r} B\rho_r(r,z').
\end{align}
Therefore $\Omega^2(r,z)$ is bounded. If hypothesis (a'') is satisfied, by \eqref{chap4: omega close to boundary on z=0 2} and the fact that $| B\rho_z|<C|z|$, we again get the boundedness of $\Omega^2(r,z)$.
}
\end{enumerate}
\end{proof}

\section*{Acknowledgment}
This work is included in the Ph.D. dissertation of the author under the direction of Professor Joel Smoller at the University of Michigan.

\bibliographystyle{acm} 
\bibliography{biblio}   

\begin{thebibliography}{10}

\bibitem{auchmuty1971variational}
{\sc Auchmuty, J., and Beals, R.}
\newblock Variational solutions of some nonlinear free boundary problems.
\newblock {\em Archive for Rational Mechanics and Analysis 43}, 4 (1971),
  255--271.

\bibitem{caffarelli1980shape}
{\sc Caffarelli, L.~A., and Friedman, A.}
\newblock The shape of axisymmetric rotating fluid.
\newblock {\em Journal of Functional Analysis 35}, 1 (1980), 109--142.

\bibitem{chandrasekhar1939introduction}
{\sc Chandrasekhar, S.}
\newblock {\em An introduction to the study of stellar structure}.
\newblock University of Chicago Press, Chicago, 1939.

\bibitem{courant1976supersonic}
{\sc Courant, R., and Friedrichs, K.~O.}
\newblock {\em Supersonic flow and shock waves}, vol.~21.
\newblock Springer, 1976.

\bibitem{gilbarg2001elliptic}
{\sc Gilbarg, D., and Trudinger, N.~S.}
\newblock {\em Elliptic partial differential equations of second order},
  vol.~224.
\newblock Springer, 2001.

\bibitem{lewy1969regularity}
{\sc Lewy, H., and Stampacchia, G.}
\newblock On the regularity of the solution of a variational inequality.
\newblock {\em Communications on Pure and Applied Mathematics 22}, 2 (1969),
  153--188.

\bibitem{lions1984concentration}
{\sc Lions, P.-L.}
\newblock The concentration-compactness principle in the calculus of
  variations. the locally compact case, part 1.
\newblock In {\em Annales de l'institut Henri Poincar{\'e} (C) Analyse non
  lin{\'e}aire\/} (1984), vol.~1, Gauthier-Villars, pp.~109--145.

\bibitem{luo2004rotating}
{\sc Luo, T., and Smoller, J.}
\newblock Rotating fluids with self-gravitation in bounded domains.
\newblock {\em Archive for rational mechanics and analysis 173}, 3 (2004),
  345--377.

\bibitem{nirenberg2011elliptic}
{\sc Nirenberg, L.}
\newblock {\em On elliptic partial differential equations}.
\newblock Springer, 2011.

\bibitem{schwartz1959theorie}
{\sc Schwartz, L., and de~math{\'e}matique (Strasbourg), I.}
\newblock {\em Th{\'e}orie des distributions}, vol.~2.
\newblock Hermann Paris, 1959.

\bibitem{smoller1983shock}
{\sc Smoller, J.}
\newblock {\em Shock waves and reaction-diffusion equations}, vol.~258.
\newblock Springer-Verlag, 1983.

\bibitem{smoller1989existence}
{\sc Smoller, J., and Wasserman, A.}
\newblock Existence of positive solutions for semilinear elliptic equations in
  general domains.
\newblock In {\em Analysis and Continuum Mechanics}. Springer, 1989,
  pp.~499--519.

\end{thebibliography}

\end{document}